\documentclass[a4paper,11pt]{article}
\usepackage{fullpage}
\usepackage{amsthm, amsfonts,mathrsfs,amssymb,amsmath,clrscode,multirow,graphicx,amsmath,epic,eepic}
\usepackage{graphicx}

\setlength\paperheight     {297.0mm}
\setlength\paperwidth    {210.0mm}
\setlength\textwidth      {160.0mm}
\setlength\textheight     {241.0mm}
\setlength\oddsidemargin  { 25.0mm}
\setlength\evensidemargin {\oddsidemargin}
\setlength\topmargin      { 28.0mm} 
\setlength\headheight     {  0.0mm}
\setlength\headsep        {  0.0mm}
\setlength\topskip       {  0.0mm}  
\setlength\footskip       {  15.0mm}

\addtolength\oddsidemargin{   -1in}
\addtolength\topmargin    {   -1in}

\newcommand{\braket}[1]{\langle #1 \rangle}

\newcommand{\Int}{\mathbb Z}
\newcommand{\degen}{\unlhd}
\newcommand{\Nat}{\mathbb N}

\newcommand{\Real}{\mathbb R}

\newcommand{\ceil}[1]{\left\lceil #1 \right\rceil}
\newcommand{\mybar}[1]{\lambda}
\newcommand{\floor}[1]{\left\lfloor #1 \right\rfloor}
\newcommand{\sfloor}[1]{\lfloor #1 \rfloor}

\newcommand{\Cc}{\mathscr{C}}

\newcommand{\poly}{\mathrm{poly}}

\newcommand{\var}{\mathrm{var}}
\newcommand{\supp}{\mathrm{supp}}
\newcommand{\suppc}{\mathrm{supp_c}}

\newcommand{\E}{\mathbb{E}}
\newcommand{\Nn}{\mathcal{N}}

\newcommand{\field}{\mathbb{F}}

\newtheorem{theorem}{Theorem}[section]

\newtheorem{proposition}{Proposition}[section]
\newtheorem{definition}{Definition}[section]

\newtheorem{fact}{Fact}
\newtheorem{lemma}{Lemma}[section]
\newenvironment{proof-sketch}{\trivlist\item[]\emph{Brief proof sketch}:}%
{\unskip\nobreak\hskip 1em plus 1fil\nobreak$\Box$
\parfillskip=0pt%
\endtrivlist}
\begin{document}
\author{%
Fran{\c c}ois Le Gall \\ 
     Department of Computer Science\\
     Graduate School of Information Science and Technology\\
    The University of Tokyo\\
    \texttt{legall@is.s.u-tokyo.ac.jp} }
\title{Faster Algorithms for Rectangular Matrix Multiplication}
\date{}
\maketitle
\begin{abstract} 
Let $\alpha$ be the maximal value such that the product of an $n\times n^\alpha$ matrix by an $n^\alpha\times n$ matrix can be computed with $n^{2+o(1)}$ arithmetic operations. In this paper we show that $\alpha>0.30298$, which improves the previous record $\alpha>0.29462$ by Coppersmith (Journal of Complexity, 1997). More generally, we construct a new algorithm for multiplying an $n\times n^k$ matrix by an $n^k\times n$ matrix, for any value $k\neq 1$. The complexity of this algorithm is better than all known algorithms for rectangular matrix multiplication. In the case of square matrix multiplication (i.e., for $k=1$), we recover exactly the complexity of the algorithm 
by Coppersmith and Winograd (Journal of Symbolic Computation, 1990).

These new upper bounds can be used to improve the time complexity of several known algorithms that rely on rectangular matrix multiplication. For example, we directly obtain a $O(n^{2.5302})$-time algorithm for the all-pairs shortest paths problem over directed graphs with small integer weights, improving over the $O(n^{2.575})$-time algorithm by Zwick (JACM 2002), and also improve the time complexity of sparse square matrix multiplication.
\end{abstract}
\section{Introduction}

\paragraph{Background.}
Matrix multiplication is one of the most fundamental problems in computer science and mathematics. 
Besides the fact that several computational problems in linear algebra can be reduced to the computation of the product 
of two matrices, the complexity of  matrix multiplication also arises as a bottleneck in a multitude of 
other computational tasks (e.g., graph algorithms). 
The standard method for multiplying two $n\times n$ matrices uses $O(n^3)$ arithmetic operations.
Strassen showed in 1969 that this trivial algorithm is not optimal, and gave a
algorithm that uses only $O(n^{2.808})$ arithmetic operations. 
This has been the beginning  of a long story of improvements that lead to the upper bound
$O(n^{2.376})$ by Coppersmith and Winograd \cite{Coppersmith+90}, which has been further
improved to $O(n^{2.3727})$
very recently by Vassilevska Williams~\cite{WilliamsSTOC12}. 
Note that all the above complexities refer to the number of 
arithmetic operations involved, but naturally the same upper bounds hold for the time complexity
as well when each arithmetic operation can be done in negligible time (e.g., in $\poly(\log n)$ time).

Finding the optimal value of the exponent of square matrix multiplication 
is naturally
one of the most important open problems in algebraic complexity.
It is widely believed that 
the product of two $n\times n$ matrices can be computed with $O(n^{2+\epsilon})$ arithmetic 
operations
for any constant $\epsilon>0$. Several conjectures, including conjectures about combinatorial
structures~\cite{Coppersmith+90} and about group theory \cite{Cohn+FOCS03,Cohn+FOCS05}, 
would, if true, lead to this result 
(see also~\cite{Alon+CCC12} for recent work on these conjectures).
Another way to interpret this open problem is by considering the multiplication of an 
$n\times m$ matrix by an $m\times n$ matrix. Suppose that the matrices are defined 
over a field. 
For any $k>0$, define the exponent of such a rectangular matrix multiplication as follows:
$$
\omega(1,1,k)=\inf\{\tau\in \Real\:|\: C(n,n,\sfloor{n^k})=O(n^\tau)\},
$$
where $C(n,n,\sfloor{n^k})$ denotes the minimum number of arithmetic operations needed to multiply an 
$n\times \sfloor{n^k}$ matrix by an $\sfloor{n^k}\times n$ matrix.
Note that, while the value $\omega(1,1,k)$ may depend on the field under consideration, 
it is known that it can depend only on the characteristic of the field \cite{Schonhage81}.
Define $\omega=\omega(1,1,1)$ and 
$\alpha=\sup\{k\:|\:\omega(1,1,k)=2\}$.
The value~$\omega$ represents the exponent of square matrix multiplication, and 
the value $\alpha$ essentially represents the largest value such that 
the product of an $n\times n^\alpha$ matrix by an $n^\alpha\times n$ matrix 
can be computed with $O(n^{2+\epsilon})$ arithmetic operations for any constant~$\epsilon$. Since 
$\omega=2$ if and only if $\alpha=1$, one possible 
strategy towards showing that $\omega=2$ is to give lower bounds on $\alpha$. 
Coppersmith~\cite{CoppersmithSICOMP82} showed in 1982 that $\alpha>0.172$. 
Then, based on the techniques developed in \cite{Coppersmith+90},
Coppersmith~\cite{Coppersmith97} improved this lower bound to $\alpha>0.29462$.
This is the best lower bound on $\alpha$ known so far.

Excepting Coppersmith's works on the value $\alpha$, 
there have been relatively few algorithms that focused 
specifically on rectangular matrix multiplication.
Since it is well known (see, e.g, \cite{Lotti+83}) that multiplying an $n\times n$ matrix by an $n\times m$ matrix, 
or an $m\times n$ matrix by an $n\times n$ matrix, can be done with the same 
number of arithmetic operations as multiplying an $n\times m$ matrix by an $m\times n$ matrix, the value
$\omega(1,1,k)$ represents the exponent of all these three types of rectangular matrix multiplications.
Note that, by decomposing the product into smaller
matrix products, it is easy to obtain (see, e.g, \cite{Lotti+83}) the following upper bound:
\begin{equation}\label{eq_Huang}
\omega(1,1,k)=\left\{
\begin{array}{ll}
2 &\textrm{ if }0\le k\le \alpha\\
2+(\omega-2)\frac{k-\alpha}{1-\alpha} &\textrm{ if }\alpha\le k\le 1.
\end{array}
\right.
\end{equation}
Lotti and Romani \cite{Lotti+83} obtained nontrivial upper bounds on $\omega(1,1,k)$
based on the seminal result by Coppersmith \cite{CoppersmithSICOMP82} and on early works on 
square matrix multiplication. 
Huang and Pan \cite{Huang+98} showed how to apply ideas 
from \cite{Coppersmith+90} to the rectangular setting and obtained the upper 
bound $\omega(1,1,2)<3.333954$, but
this approach did not lead to any upper bound better than (\ref{eq_Huang}) for $k \le 1$.
Ke, Zeng, Han and Pan~\cite{Ke+08} further improved Huang and Pan's result to 
$\omega(1,1,2)<3.2699$, by using again the approach from \cite{Coppersmith+90},
and also reported the upper bounds $\omega(1,1,0.8)<2.2356$ and $\omega(1,1,0.5356)<2.0712$, which are
better than those obtained by (\ref{eq_Huang}).
Their approach, nevertheless, did not give any improvement for the value of $\alpha$.

Besides the fact that a better understanding of $\omega(1,1,k)$ gives insights 
into the nature of  matrix multiplication and ultimately may help showing that $\omega=2$,
fast algorithms for multiplying 
an $n\times n^k$ matrix by an $n^k\times n$ with $k \neq 1$ have also a multitude of applications. 
Typical examples not directly related to linear algebra
include the construction of fast algorithms for the all-pairs shortest paths problem \cite{Alon+ESA07, Roditty+11,YusterSODA09,ZwickSTOC99,ZwickJACM02},
the dynamic computation 
of the transitive closure \cite{Demetrescu+FOCS00,Sankowski+10},
finding ancestors \cite{Czumaj+TCS07}, detecting directed cycles \cite{Yuster+SODA04}, or
computing the diameter of a graph~\cite{Yuster10}.
Rectangular matrix multiplication has also been used in computational complexity \cite{Patrascu+SODA10,WilliamsCCC11},
and to speed-up 
sparse square matrix multiplication \cite{Amossen+09,Kaplan+06,Yuster+05} or
tasks in computational geometry~\cite{Kaplan+SODA07,Kaplan+06}.
Obtaining new upper bounds on $\omega(1,1,k)$ would thus reduce the asymptotic 
time complexity of algorithms in a wide range of areas.
We nevertheless stress that such improvements are only of theoretical interest, 
since the huge constants involved in the complexity of fast matrix multiplication 
usually make these algorithms impractical.

\paragraph{Short description of the approach by Coppersmith and Winograd.}
The results \cite{Coppersmith97,Huang+98,Ke+08,Stothers10,WilliamsSTOC12} 
mentioned above are all obtained by extending the approach by Coppersmith and Winograd \cite{Coppersmith+90}.
This approach is an illustration of a
general methodology initiated in the 1970's based on the theory of bilinear and trilinear forms, through which most of the improvements 
for matrix multiplication have been obtained. 
Informally, the idea is to start with a \emph{basic construction} (some small trilinear form), and then
exploit general properties of matrix multiplication (in particular Sch\"onhage's asymptotic sum inequality~\cite{Schonhage81})
to derive an upper bound on the exponent~$\omega$ from this construction. 
The main contributions of~\cite{Coppersmith+90} consist of two parts: the discovery of new
basic constructions and the introduction of strong techniques to analyze them.  
In their paper, Coppersmith and Winograd actually present three algorithms, based on three different basic constructions.
The first basic construction (Section 6 in \cite{Coppersmith+90}) is the simplest of the three and leads to the upper bound $\omega<2.40364$.
The second basic construction (Section~7 in \cite{Coppersmith+90}), that we will refer in this paper
as $F_q$ (here $q\in \Nat$ is a parameter), leads to the upper bound $\omega<2.38719$. The third basic construction (Section 8 in \cite{Coppersmith+90}) 
is $F_q\otimes F_q$, the tensor product of two instances of $F_q$, and leads to the improved upper bound $\omega<2.375477$.

In view of the last result, it was natural to ask if taking larger tensor powers of $F_q$ as the basic construction 
leads to better bounds on $\omega$. 
The case $r=3$ was explicitly mentioned as an open problem in~\cite{Coppersmith+90} but 
did not seem to lead to any improvement.
Stothers \cite{Stothers10} and Vassilevska Williams~\cite{WilliamsSTOC12}
succeeded in analyzing the fourth tensor product $F_q^{\otimes 4}$ and obtained
a better upper bound on $\omega$, the first improvement in more that twenty years.
Vassilevska Williams further presented a general framework that enables a systematic analysis for higher tensor products of the
basis construction, and used this framework to show that $\omega<2.3727$, for the basic construction $F^{\otimes 8}$, the best upper bound obtained so far.

The algorithms for rectangular matrix multiplication \cite{Coppersmith97,Huang+98,Ke+08} already mentioned use a similar approach. 
Huang and Pan~\cite{Huang+98} obtained their improvement on $\omega(1,1,2)$ by taking the easiest of the three construction in 
\cite{Coppersmith+90} and carefully modifying the analysis to evaluate the complexity of rectangular matrix multiplication. 
Ke, Zeng, Han and Pan \cite{Ke+08} obtained their improvements similarly, but by using the second basic construction from 
\cite{Coppersmith+90} (the construction $F_q$) instead, which lead to better upper bounds. 
These approaches, 
while very natural, do not provide any nontrivial lower bounds on $\alpha$:
the upper bounds on $\omega(1,1,k)$ obtained are strictly larger than 2 even for small values of~$k$. 
In order to obtain the lower bound $\alpha>0.29462$, Coppersmith \cite{Coppersmith97} relied on
a more complex approach: the basic construction considered is still $F_q$, but several instances 
for distinct values of $q$ are combined together in a subtle way in order to keep the complexity of the resulting
algorithm small enough (i.e., not larger than $n^{2+o(1)}$). 

\paragraph{Statement of our results and discussion.}
In this paper we construct new algorithms for rectangular matrix multiplication,
by taking the tensor power $F_q\otimes F_q$ as basic construction
and analyzing this construction in the framework of rectangular matrix multiplication. 
We use these ideas to prove that $\omega(1,1,k)=2$ for any $k\le 0.3029805$, as 
stated in the following theorem.
\begin{theorem}\label{th_alpha}
For any value $k<0.3029805... $, 
the product of an $n\times n^k$ matrix by an $n^k\times n$ matrix
can be computed with $O(n^{2+\epsilon})$ arithmetic operations for any constant $\epsilon>0$.
\end{theorem}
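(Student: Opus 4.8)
The plan is to follow the Coppersmith--Winograd methodology, taking the tensor power $F_q\otimes F_q$ as the basic construction but carrying out the analysis in the rectangular regime, and then to combine several instances for distinct values of $q$ in the subtle manner pioneered by Coppersmith in \cite{Coppersmith97}. First I would recall the trilinear form $F_q$ and its tensor square $F_q^{\otimes 2}$, together with the Coppersmith--Winograd machinery: the partition of the index variables of a high tensor power $(F_q^{\otimes 2})^{\otimes N}$ into blocks, the assignment of block-index triples, and the laser-method argument that isolates, with high probability, a large number of independent matrix-multiplication-like subtensors after zeroing out variables. In the rectangular setting one does not aim for the three dimensions of the resulting matrix product to be equal; instead one keeps track of a triple of exponents and extracts a product of an $N'\times M'$ matrix by an $M'\times N'$ matrix whose dimensions are governed by the multinomial coefficients counting how the block partition is distributed. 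Via Sch\"onhage's asymptotic sum inequality \cite{Schonhage81}, this yields an inequality relating the value $k$, the parameter $q$, and the ``exponent budget'' one is willing to spend, of the shape: there is a $\tau=\tau(q,k)$ such that $\omega(1,1,k)\le 2$ holds provided a certain entropy-type optimization over the block-distribution variables has optimum at most the logarithm of the relevant quantity.

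The key new step — and this is what lets one reach $2$ rather than merely beating \eqref{eq_Huang} — is that instead of using a single value of $q$, one concatenates the constructions for $q=q_1,q_2,\dots,q_t$: one forms a tensor product of one instance of $F_{q_i}^{\otimes 2}$ for each $i$, chooses the block partition within each factor adaptively, and balances the parameters so that the dimensions multiply out to give an $n\times n^k$ by $n^k\times n$ product while the total number of arithmetic operations stays $n^{2+o(1)}$. Concretely, I would set up the global optimization: variables are the fractions of index positions allotted to each block-type within each $F_{q_i}$ factor, subject to consistency (marginal) constraints; the objective is to certify that the ``third-dimension'' exponent can be pushed down to match the first two up to $o(1)$; and one then exhibits an explicit feasible point (for a concrete finite list of $q_i$'s) at which the first-dimension-to-third-dimension ratio equals $k$ for every $k<0.3029805\ldots$. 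The numerical value $0.3029805\ldots$ is the supremum of the $k$ for which this system is feasible; I would pin it down by a Lagrange-multiplier/convexity analysis showing the optimum is attained at a unique interior point and then reporting the solution of the resulting transcendental equations.

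The steps in order: (1) state $F_q^{\otimes 2}$ and fix notation for its partition into block-index triples; (2) take the $N$-th tensor power, describe the random zeroing procedure and prove (following \cite{Coppersmith+90}) that it leaves $\approx$ (appropriate multinomial) many independent nonzero ``small'' matrix products, each itself a product whose exponents are controlled; (3) apply the asymptotic sum inequality to convert ``many independent small products'' into a bound on $\omega(1,1,k)$, obtaining the single-$q$ inequality; (4) tensor several $F_{q_i}^{\otimes 2}$ together, set up the combined optimization with its marginal constraints, and reduce checking $\omega(1,1,k)=2$ to feasibility of a convex program; (5) solve the program — analytically via KKT conditions, numerically for the threshold — and verify that feasibility holds for all $k<0.3029805\ldots$; (6) conclude that $\omega(1,1,k)=2$, i.e. an $O(n^{2+\epsilon})$ algorithm exists for every $\epsilon>0$. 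The main obstacle is step (4)--(5): keeping the complexity of the \emph{combined} construction at $n^{2+o(1)}$ requires the balancing across the different $q_i$ to be exactly right, so the crux is formulating the right optimization problem (with the correct constraints linking the block-distributions across factors) and proving its optimum behaves as claimed; the purely probabilistic part (step 2) and the invocation of Sch\"onhage's inequality (step 3) are by now standard, and the entropy/multinomial estimates in step (2) are routine Stirling-type calculations that I would not grind through here.
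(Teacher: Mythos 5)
Your sketch identifies the right starting tensor ($F_q\otimes F_q$, high tensor powers, zeroing via Salem--Spencer-type hashing, Sch\"onhage's inequality), but the mechanism you credit for actually reaching exponent $2$ is not the one that works here, and the step where the real difficulty lies is skipped. You propose to reach $\omega(1,1,k)=2$ by combining several instances $F_{q_1},\dots,F_{q_t}$ ``in the manner of Coppersmith (1997).'' The paper does the opposite: it uses a \emph{single} value of $q$ (in the end $q=5$) and a single optimization problem (Theorem~\ref{maintheorem}); the whole point of the construction is that the multi-$q$ combination trick is \emph{not} needed once the analysis is made asymmetric. The asymmetry enters through nine independent parameters $a_{ijk}$ (only the symmetry $a_{ijk}=a_{ikj}$ is kept), the entropy condition $A_0^{A_0}\cdots A_4^{A_4}\ge B_0^{B_0}\cdots B_4^{B_4}$, and the extra nonlinear constraint $a_{013}a_{202}a_{112}=a_{103}a_{022}a_{211}$, and the threshold $0.3029805\ldots$ is the value of $\log R/\log Q$ at an explicit feasible point with $q=5$, $b=1$, $\tilde b=q^2/(q^2+2)$. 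Your plan gives no reason why a multi-$q$ concatenation would reproduce this number, and it attributes the gain to the wrong source.

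More importantly, you never say how to handle the terms $T_{112}$, $T_{121}$, $T_{211}$ of $F_q\otimes F_q$, which are \emph{not} matrix products; your step (2) simply asserts that the laser method isolates ``matrix-multiplication-like subtensors.'' In the square case these terms are handled through the ``value,'' but that notion is intrinsically symmetric and gives nothing for $k\neq 1$ --- this is precisely the obstruction that previous rectangular works avoided by using the simpler base $F_q$. The paper's key contribution is to analyze the product $T_{112}^{\otimes a_{112}N}\otimes T_{121}^{\otimes a_{112}N}\otimes T_{211}^{\otimes a_{211}N}$ \emph{globally}: each factor is converted, via the 1-clique/graph argument (Theorem~\ref{th_2}), into many $\Cc$-tensors whose components are rectangular products, with $T_{112},T_{121}$ processed with one parameter $b$ and $T_{211}$ with a different parameter $\tilde b$, and the combined support, isomorphic to $\suppc(\braket{H_{112},H_{112},H_{211}})$, is then diagonalized by combinatorial degeneration (Propositions~\ref{prop_deg} and~\ref{prop_Strassen}). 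Without this step (or a substitute for it) your construction cannot extract any rectangular matrix products from the $112$-type blocks, so the optimization you describe in steps (4)--(5) cannot even be written down, let alone certified to be feasible for all $k<0.3029805\ldots$. This is a genuine gap, not a detail to be filled in by routine Stirling estimates.
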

Theorem \ref{th_alpha} shows that $\alpha>0.30298$, which improves the previous record
$\alpha>0.29462$ by Coppersmith. 
More generally, in the present work we present an algorithm for multiplying 
an $n\times n^k$ matrix by an $n^k\times n$ matrix, for any value $k$. 
We show that the complexity of this algorithm can be expressed as a (nonlinear) 
optimization problem, and use this formulation to derive upper bounds on $\omega(1,1,k)$.
Table \ref{table_results} shows the bounds we obtain
for several values of~$k$.
The bounds obtained for $0\le k\le 1$ are represented in Figure \ref{fig_results} as well. 
\begin{table}[ht]
\begin{minipage}[b]{0.33\linewidth}\centering
\begin{tabular}{ |c | c |}
  \hline
  \multirow{2}{*}{$k$} & upper bound \\
  &on $\omega(1,1,k)$\\
  \hline                      
0.30298 &  2  \\
0.31 &   2.000063 \\
0.32 &   2.000371 \\
0.33 &   2.000939 \\
0.34 &   2.001771 \\
0.35 &   2.002870 \\
0.40 &   2.012175 \\
0.45 &   2.027102 \\
0.50 &   2.046681 \\
0.5302&2.060396\\
0.55 &   2.070063 \\
\hline  
\end{tabular}
\end{minipage}
\begin{minipage}[b]{0.33\linewidth}
\centering
\begin{tabular}{ |c | c |}
  \hline
  \multirow{2}{*}{$k$} & upper bound \\
  &on $\omega(1,1,k)$\\
  \hline
  0.60 &   2.096571 \\
  0.65 &   2.125676 \\
  0.70 &  2.156959 \\
  0.75 &  2.190087 \\
0.80 & 2.224790 \\
  0.85 &  2.260830 \\
  0.90 &  2.298048 \\
0.95 &  2.336306 \\
1.00 &  2.375477 \\
1.10 &   2.456151 \\
1.20 &   2.539392 \\
  \hline  
\end{tabular}
\end{minipage}
\begin{minipage}[b]{0.33\linewidth}
\centering
\begin{tabular}{ |c | c | }
  \hline
  \multirow{2}{*}{$k$} & upper bound \\
  &on $\omega(1,1,k)$\\
  \hline    
1.30 &   2.624703 \\
1.40 &   2.711707 \\
1.50 &   2.800116 \\
1.75 &   3.025906 \\
2.00 &   3.256689 \\
2.25 &   3.490957 \\
2.50  & 3.727808 \\
3.00  & 4.207372 \\
3.50  & 4.693151  \\
4.00  &  5.180715 \\
5.00  & 6.166736 \\
  \hline  
\end{tabular}
\end{minipage}
\caption{Our upper bounds on the exponent of the multiplication of an 
$n\times n^k$ matrix by an $n^k\times n$ matrix.
\label{table_results}}
\end{table}

\begin{figure}[h!]
\begin{center}
\includegraphics[scale=0.8]{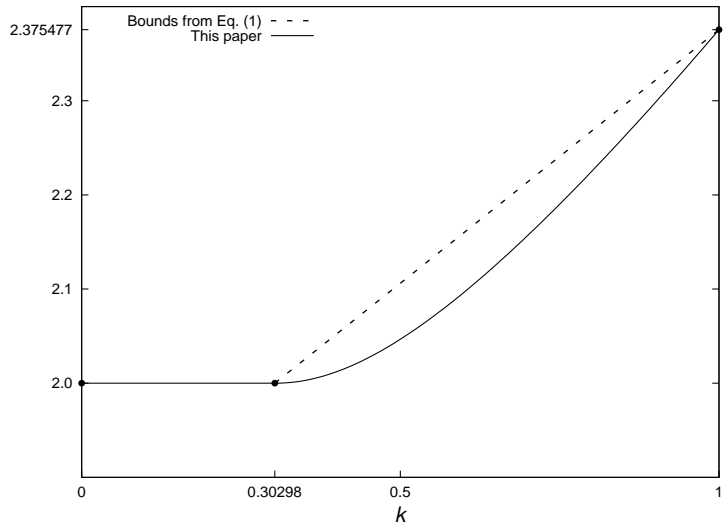}
\caption{
Our upper bounds (in plain line) on $\omega(1,1,k)$, for $0\le k\le 1$.
The dashed line represents the upper bounds on $\omega(1,1,k)$ obtained by using 
Equation (\ref{eq_Huang}) with the 
values $\alpha>0.30298$ and $\omega<2.375477$.
\label{fig_results}}
\end{center}
\end{figure}

The results of this paper can be seen as a generalization of Coppersmith-Winograd's approach to 
the rectangular setting. 
In the case of square matrix multiplication (i.e., for $k=1$), we recover naturally
the same upper bound $\omega(1,1,1)<2.375477$ as the one obtained 
in \cite{Coppersmith+90}. 
Let us mention that we can, in a rather straightforward way, 
combine our results with the upper bound $\omega<2.3727$ by 
Vassilevska Williams~\cite{WilliamsSTOC12} to obtain slightly 
improved bounds for $k\approx 1$. The idea is, very similarly to how Equation~(\ref{eq_Huang}) was obtained,
to exploit the convexity of the function $\omega(1,1,k)$.
Concretely, for any fixed value $0\le k_0<1$, the inequality
$$\omega(1,1,k)\le \omega(1,1,k_0)+(\omega-\omega(1,1,k_0))\frac{k-k_0}{1-k_0}$$
holds for any $k$ such that $k_0\le k\le 1$. This enables us to combine an upper bound on $\omega(1,1,k_0)$,
for instance one of the values in Table \ref{table_results}, with the improved upper bound $\omega<2.3727$ by
Vassilevska Williams.
Since the improvement is small and concerns only the case $k\approx 1$, 
we will not discuss it further.

For $k>0.29462$ and $k\neq 1$, the complexity of our algorithms is better than all known algorithms for rectangular matrix multiplication, including the algorithms \cite{Huang+98,Ke+08} mentioned above. 
Moreover, for $0.30298<k<1$, our new bounds are significantly better than 
what can be obtained solely from the bound $\alpha>0.30298$ and $\omega<2.375477$ through 
Equation~(\ref{eq_Huang}), as illustrated in Figure \ref{fig_results}. This suggests that non-negligible
improvements can be obtained for all applications of rectangular matrix multiplications that rely on this simple linear interpolation --- we will elaborate on this subject in Subsection \ref{sub_appl}.

Let us compare more precisely our results with those reported in \cite{Ke+08}. 
For $k=2$, we obtain $\omega(1,1,2)<3.256689$ while
Ke et al.~\cite{Ke+08} obtained $\omega(1,1,2)<3.2699$ by using the basic construction~$F_q$.
Our improvements are of the same order for the other two values ($k=0.8$ and $k=0.5356$) 
analyzed in~\cite{Ke+08}, as can be seen from Table \ref{table_results}.  
Note that the order of magnitude of the improvements here is similar to what was obtained in \cite{Coppersmith+90}
by changing
the basic construction from $F_q$ to $F_q\otimes F_q$ for square matrix multiplication, 
which led to a improvement
from $\omega<2.38719$ to $\omega<2.375477$.

A noteworthy point is that our algorithm directly leads to improved lower bounds on $\alpha$ while,
as already mentioned, to obtain a nontrivial lower bound on $\alpha$
using the basic construction~$F_q$ (as done in~\cite{Coppersmith97})
a specific methodology was needed. 
Our approach can then be considered as
a general framework to study rectangular matrix multiplication, which
leads to a unique optimization problem that gives upper bounds on $\omega(1,1,k)$ 
for any value of $k$.

\subsection{Applications}\label{sub_appl}
As mentioned in the beginning of the introduction, improvements on the time
complexity of rectangular matrix multiplication give faster algorithms for a 
multitude of computational problems. In this subsection we describe 
quantitatively the improvements that our new upper bounds imply for some of 
these problems: sparse square matrix multiplication, the all-pairs shortest paths 
problem and computing dynamically the transitive closure of a graph. 
%
\paragraph{Sparse square matrix multiplication.}
Yuster and Zwick \cite{Yuster+05} have shown how fast algorithms for 
rectangular matrix multiplication can be used to construct fast algorithms for
computing the product of two sparse square matrices (this result has been 
generalized to the product of sparse rectangular matrices 
in \cite{Kaplan+06}, and the case where the 
output matrix is also sparse has been studied in \cite{Amossen+09}). 
More precisely, let 
$M$ and $M'$ be two $n\times n$ matrices such that 
each matrix has at most $m$ non-zero entries, where $0\le m\le n^2$. 
Yuster and Zwick \cite{Yuster+05} showed that the product 
of $M$ and $M'$ can be computed in time
$$
O\left(\min(nm,n^{\omega(1,1,\lambda_m)+o(1)},n^{\omega+o(1)})\right),
$$
where $\lambda_m$ is the solution of the
equation $\lambda_m+\omega(1,1,\lambda_m)=2\log_n (m)$. 
Using the upper bounds on $\omega(1,1,k)$ of Equation 
(\ref{eq_Huang}) with the values $\alpha<0.294$ and $\omega<2.376$, 
this gives the complexity depicted in Figure~\ref{fig_sparse}. 

These upper bounds can be of course directly improved by using the new
upper bound on $\omega$ by Vassilevska Williams \cite{WilliamsSTOC12}
and the new lower bound on $\alpha$ given in the present work, 
but the improvement is small.
A more significant improvement can be obtained by using directly 
the upper bounds on $\omega(1,1,k)$ presented in Figure~\ref{fig_results}, which
gives the new upper bounds on the complexity of
sparse matrix multiplication depicted in Figure \ref{fig_sparse}.
For example, for $m=n^{4/3}$, we obtain complexity $O(n^{2.087})$,
which is better than the original upper bound $O(n^{2.1293...})$
obtained from Equation 
(\ref{eq_Huang}) with $\alpha>0.294$ and $\omega<2.376$.
Note that replacing $\omega<2.376$ with the the best known bound $\omega<2.3727$
only decreases the latter bound to $O(n^{2.1287...})$. Thus, even if the algorithms
presented in the present paper do not give any improvement on $\omega$ (i.e., for the 
product of dense square matrices), we do obtain
improvements for computing the product of two sparse square matrices.

\begin{figure}
\begin{center}
 \includegraphics[scale=0.8]{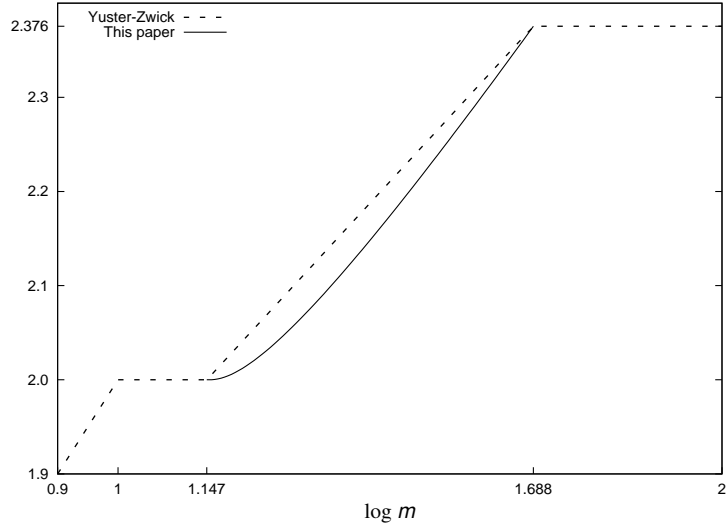}
\caption{
Upper bounds on the exponent for the multiplication two $n\times n$ matrices
with at most $m$ non-zero entries. The horizontal axis represents $\log_n(m)$. 
The dashed line represents the results by Yuster and Zwick \cite{Yuster+05} and shows
that the term $n^{\omega(1,1,\lambda_m)}$ dominates the complexity 
when $1\le \log_n(m)\le (1+\omega)/2$.
The plain line represents the improvements we obtain.
\label{fig_sparse}}
\end{center}
\end{figure}

\paragraph{Graph algorithms.}
Zwick \cite{ZwickJACM02} has shown how to use rectangular matrix multiplication to compute,
with high probability, 
the all-pairs shortest paths in weighted direct graphs where the weights are bounded integers.
The time complexity obtained is $O(n^{2+\mu+\epsilon})$, for any constant $\epsilon>0$,
where $\mu$ is the solution of the equation $\omega(1,1,\mu)=1+2\mu$. 
Using the upper bounds on $\omega(1,1,k)$ of Equation 
(\ref{eq_Huang}) with $\alpha>0.294$ and $\omega<2.376$, 
this gives $\mu<0.575$ and thus complexity $O(n^{2.575+\epsilon})$.
This reduction to rectangular matrix multiplication
is the asymptotically fastest known approach for weighted directed graphs with small integer weights.

Our results (see Table \ref{table_results}) show that $\omega(1,1,0.5302)<2.0604$, which gives the upper bound
$\mu<0.5302$. We thus obtain the following result.
\begin{theorem}
There exists a randomized algorithm that computes the shortest paths between all pairs 
of vertices in a weighted directed graph with bounded integer weights in time $O(n^{2.5302})$, 
where $n$ is the number of vertices in the graph.
\end{theorem}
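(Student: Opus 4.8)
The plan is to combine Zwick's reduction from all-pairs shortest paths to rectangular matrix multiplication with the new upper bound $\omega(1,1,0.5302)<2.0604$ read off from Table~\ref{table_results}. Recall from the discussion above that Zwick~\cite{ZwickJACM02} gives a randomized algorithm that, with high probability, computes all-pairs shortest paths in a weighted directed graph with bounded integer weights in time $O(n^{2+\mu+\epsilon})$ for every constant $\epsilon>0$, where $\mu$ is the solution of $\omega(1,1,\mu)=1+2\mu$. It therefore suffices to show that $\mu<0.5302$ and then to choose $\epsilon$ appropriately small.

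The first step is to locate $\mu$. Set $f(k)=\omega(1,1,k)-(1+2k)$. Since the product of an $n\times 1$ matrix by a $1\times n$ matrix is an $n\times n$ matrix and can be computed with $O(n^2)$ operations, we have $\omega(1,1,0)=2$ and hence $f(0)=1>0$. On the other hand, the entry of Table~\ref{table_results} for $k=0.5302$ gives $\omega(1,1,0.5302)<2.0604=1+2\cdot 0.5302$, so $f(0.5302)<0$. The function $k\mapsto\omega(1,1,k)$ is convex (this is the standard fact already used to derive Equation~(\ref{eq_Huang})), hence continuous, so $f$ is continuous on the interval $[0,0.5302]$ and changes sign there; consequently the solution $\mu$ of $\omega(1,1,\mu)=1+2\mu$ satisfies $0<\mu<0.5302$.

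The last step is a routine adjustment of the $\epsilon$ term. Since $2+\mu<2.5302$, we may fix a constant $\epsilon$ with $0<\epsilon<0.5302-\mu$, so that $2+\mu+\epsilon<2.5302$. Running Zwick's algorithm with this value of $\epsilon$ then yields a randomized algorithm with time complexity $O(n^{2+\mu+\epsilon})=O(n^{2.5302})$, succeeding with high probability, which is exactly the claimed statement.

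I do not anticipate a real obstacle: the theorem is a direct corollary of Zwick's reduction once the new bound on $\omega(1,1,k)$ is available. The only point requiring a word of justification is that the value $\mu$ defined by $\omega(1,1,\mu)=1+2\mu$ is bounded above by $0.5302$, which is handled by the sign-change argument above using the convexity (hence continuity) of $\omega(1,1,\cdot)$; the remaining manipulations of the $\epsilon$ and $o(1)$ terms are bookkeeping.
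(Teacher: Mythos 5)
Your proposal is correct and follows essentially the same route as the paper: plug the new bound $\omega(1,1,0.5302)<2.060396<1+2\cdot 0.5302$ into Zwick's reduction to conclude $\mu<0.5302$ and hence an $O(n^{2.5302})$-time randomized APSP algorithm. The only addition is your explicit continuity/sign-change justification that $\mu<0.5302$, which the paper leaves implicit.
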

Note that, even if $\omega=2$, the complexity of Zwick's algorithm is $O(n^{2.5+\epsilon})$. 
In this perspective, our improvements on the complexity of rectangular matrix multiplication offer a 
non-negligible speed-up for the all-pairs shortest paths problem in this setting.

The same approach can be used to improve several other existing graph algorithms.
Let us describe another example: algorithms for computing 
dynamically the transitive closure of a graph.
Demetrescu and Italiano \cite{Demetrescu+FOCS00} presented a randomized
algorithm for the dynamic transitive closure of directly acyclic graph with $n$ vertices 
that answers queries in $O(n^{\mu})$ time, and performs updates in $O(n^{1+\mu +\epsilon})$
time, for any $\epsilon>0$. Here $\mu$ is again the solution of the equation $\omega(1,1,\mu)=1+2\mu$. 
This was the first algorithm for this problem with subquadratic time 
complexity. This result have been generalized later to general graphs, with the same bounds, by 
Sankowski \cite{SankowskiFOCS04}. Our new upper bounds thus show the existence of an algorithm 
for the dynamic transitive closure
that answers queries in $O(n^{0.5302})$ time and performs updates in $O(n^{1.5302})$ time.

\subsection{Overview of our techniques and organization of the paper}\label{sub_overview}
Before presenting an overview of the techniques used in this paper,
we will give
an informal description of algebraic complexity theory 
(the contents of which
will be superseded by the formal presentation of these notions in Section~\ref{sec_prelim}).
In this paper we will use, for any positive integer $n$, the notation $[n]$ to represent the set $\{1,\ldots,n\}$.

\paragraph{Trilinear forms and bilinear algorithms.}
The matrix multiplication of an $m\times n$ matrix by an $n\times p$ matrix can be represented by the following
trilinear form, denoted as $\braket{m,n,p}$:
\begin{equation*}\label{eq_mm}
\braket{m,n,p}=\sum_{r=1}^m\sum_{s=1}^n\sum_{t=1}^p x_{rs}y_{st}z_{rt},
\end{equation*}
where $x_{rs}$, $y_{st}$ and $z_{rt}$ are formal variables.
This form can be interpreted as follows: the $(r,t)$-th entry of the product of an $m\times n$ matrix $M$ by an $n\times p$ 
matrix $M'$ can be obtained by setting $x_{ij}=M_{ij}$ for all $(i,j)\in[m]\times[n]$ and $y_{ij}=M'_{ij}$ for all $(i,j)\in[n]\times[p]$, 
setting $z_{rt}=1$ and setting all the other $z$-variables to zero.
One can then think of the $z$-variables as formal variables used to record the entries of the matrix product.

More generally, a trilinear form $t$ is represented as 
$$
t=\sum_{i\in A}\sum_{j\in B}\sum_{k\in C} t_{ijk}x_{i}y_{j}z_{k}.
$$
where $A,B$ and $C$ are three sets, $x_{i}$, $y_{j}$ and $z_{k}$ are formal variables and
the $t_{ijk}$'s are coefficients in a field $\field$. Note that the set of indexes for the trilinear 
form $\braket{m,n,p}$ are $A=[m]\times [n]$,  $B=[n]\times [p]$ and $C=[m]\times [p]$.

An exact (bilinear) algorithm computing $t$ corresponds to an equality of the form 
$$
t=
\sum_{\ell=1}^r \left(\sum_{i\in A}\alpha_{\ell i}x_i\right)\left(\sum_{j\in B}\beta_{\ell j}y_j\right)\left(\sum_{k\in C}\gamma_{\ell k}z_k\right)
$$
with coefficients $\alpha_{\ell i},\beta_{\ell j},\gamma_{\ell k}$ in $\field$.
The minimum number $r$ such that such a decomposition exists is called the rank of the trilinear form $t$, and denoted $R(t)$. 
The rank of a trilinear form is an upper bound on the complexity of a (bilinear) algorithm 
computing the form: 
it precisely expresses the number of multiplications needed for the computation, and it can be 
shown that the number of additions or scalar multiplications affect the cost only in a negligible way.
For any $k>0$, the quantity $\omega(1,1,k)$ can then be equivalently defined as follows:
$$
\omega(1,1,k)=\inf\{\tau\in \Real\:|\: R(\braket{n,n,\sfloor{n^k}})=O(n^\tau)\}.
$$

Approximate bilinear algorithms have been introduced to take advantage of the fact that the complexity of trilinear forms 
(and especially of matrix multiplication) may be reduced significantly by
allowing small errors in the computation.
Let $\lambda$ be an indeterminate over $\field$, and let $\field[\lambda]$ denote the 
set of all polynomials over $\field$ in $\lambda$. 
Let $s$ be any nonnegative integer.
A $\lambda$-approximate algorithm for $t$ is an equality of the 
form 
$$
\lambda^{s}t+\lambda^{s+1} \left(\sum_{i\in A}\sum_{j\in B}\sum_{k\in C} d_{ijk}x_iy_jz_k\right)=
\sum_{\ell=1}^r \left(\sum_{i\in A}\alpha_{\ell i}x_i\right)\left(\sum_{j\in B}\beta_{\ell j}y_j\right)\left(\sum_{k\in C}\gamma_{\ell k}z_k\right)
$$
for coefficients $\alpha_{\ell i},\beta_{\ell j},\gamma_{\ell k},d_{ijk}$ in $\field[\lambda]$ and some nonnegative integer $s$. 
Informally, this means that the form $t$ can be computed by determining the coefficient of $\lambda^s$ in the right hand side.
The minimum number $r$ such that such a decomposition exists is called the border rank of $t$ and 
denoted $\underline{R}(t)$. 
It is known that the border rank is an upper bound on the complexity of an algorithm that approximates the trilinear form,
and that any such approximation algorithm can be converted into an exact algorithm with essentially the same complexity.


A sum $\sum_{i}t_i$ of trilinear forms is a direct sum if the $t_i$'s do not
share variables. Informally, Sch\"onhage's asymptotic sum inequality \cite{Schonhage81} 
for rectangular matrix multiplication
states that, if the form $t$ can be converted (in the $\lambda$-approximation sense) 
into a direct sum of $c$ trilinear forms, 
each form being isomorphic
to $\braket{m,m,m^k}$, then 
$$
c\cdot m^{\omega(1,1,k)}\le \underline{R}(t).
$$
This suggests that good bounds on $\omega(1,1,k)$ can be obtained if 
the form $t$ can be used to derive many independent (i.e., not sharing any variables) 
matrix multiplications. This approach has been applied to derive almost all new bounds 
on matrix multiplication since its discovery in 1981 by Sch\"onhage.

\paragraph{Overview of our techniques.}
Our algorithm uses, as its basic construction, the trilinear form $F_q\otimes F_q$ from 
\cite{Coppersmith+90}, which can be written as a sum of fifteen terms $T_{ijk}$, for all fifteen 
nonnegative integers $i,j,k$ such that $i+j+k=4$:
$$
F_q\otimes F_q=\sum_{\begin{subarray}{c}0\le i,j,k\le 4\\ i+j+k=4\end{subarray}}T_{ijk}.
$$
If the sum were direct, then, from Sch\"onhage's asymptotic sum inequality and since an upper bound on 
the rank of $F_q\otimes F_q$ is easy to obtain,
this would reduce the problem to the analysis of each part $T_{ijk}$. 
This is unfortunately not 
the case: the $T_{ijk}$'s share variables. To solve this problem, the basic 
construction is manipulated in order to obtain a direct sum, similarly to 
\cite{Coppersmith+90}.
The first step is to take the $N$-th 
tensor product of the basis construction, where $N$ is a large integer. This gives:
$$
(F_q\otimes F_q)^{\otimes N}=\sum_{IJK} T_{IJK}
$$
where the sum is over all triples of sequences $IJK$ with $I,J,K\in \{0,1,2,3,4\}^N$ such that $I_\ell+J_\ell+K_\ell=4$ for all $\ell\in\{1,\ldots,N\}$. Here each form $T_{IJK}$ is the 
tensor product of $N$ forms $T_{ijk}$.
The sum is nevertheless not yet direct. The key idea of the next step is to zero variables in order
to remove some forms and obtain a sum where any two non-zero forms $T_{IJK}$ and $T_{I'J'K'}$
are such that $I\neq I'$, $J\neq J'$ and $K\neq K'$, which will imply that the sum is direct. 
Moreover, in order to be able to apply Sch\"onhage's asymptotic sum inequality, we would like
all the remaining $T_{IJK}$ to be isomorphic to the same rectangular matrix product
(i.e., there should exist values $m$ and $k$ such that each non-zero form~$T_{IJK}$ is 
isomorphic to the matrix multiplication $\braket{m,m,m^k}$).
From here our approach differs from 
Coppersmith and Winograd \cite{Coppersmith+90}, since our concern is upper bounds
for rectangular matrix multiplication.


We will take fifteen integers $a_{ijk}$ 
and show how the form $(F_q\otimes F_q)^{\otimes N}$
can be transformed, by zeroing variables, into a \emph{direct sum} of a large number
of forms $T_{IJK}$ in which each 
$T_{IJK}$
is such that
\begin{equation}\label{eq_T}
T_{IJK}\cong
\bigotimes_{\begin{subarray}{c}0\le i,j,k\le 4\\ i+j+k=4\end{subarray}}T_{ijk}^{\otimes a_{ijk}N}.
\end{equation}
The main difference here is that, in \cite{Coppersmith+90},
the symmetry of square multiplication implied that the $a_{ijk}$'s could be taken invariant 
under permutation of indices, with means that only four parameters ($a_{004}$, $a_{013}$, $a_{022}$ and 
$a_{112}$) needed to be considered. 
In our case, we still impose the condition $a_{ijk}=a_{ikj}$, but not more.
This reduces the number of parameters to nine:
$a_{004}$, $a_{400}$, $a_{013}$, $a_{103}$, $a_{301}$, $a_{022}$, $a_{202}$, 
$a_{112}$ and $a_{211}$.
Many nontrivial technical problems arise from this larger number of parameters. 
In particular, the equations that occur during the analysis do not have 
a unique solution and an optimization step is necessary. This is similar to the difficulties
that appeared in the analysis of the basis construction $F_q^{\otimes 4}$ (for square matrix
multiplication) done in \cite{Stothers10,WilliamsSTOC12}.
We will show that this further optimization step essentially 
imposes the additional (nonlinear) constraint 
$a_{013}a_{202}a_{112}=a_{103}a_{022}a_{211}$. 

Showing that $(F_q\otimes F_q)^{\otimes N}$ can be transformed into a direct sum of many isomorphic forms,
as claimed, was also used in previous works based on the approach by Coppersmith and Winograd.
Our setting is nevertheless more general than in \cite{Coppersmith+90}, for the following two reasons.
First, our approach is asymmetric since our parameters $a_{ijk}$ are not invariant under permutations 
of the indices. Second, the technical problems due to the presence of many parameters require more 
precise arguments. The former complication was addressed implicitly in~\cite{Coppersmith97}, and explicitly in
\cite{Huang+98,Ke+08}. The latter complication was addressed in \cite{Stothers10,WilliamsSTOC12}.
Here we need to deal with these two complications simultaneously, which involves a careful analysis. Instead
of approaching this task directly in the language of trilinear forms, we give a graph-theoretic 
interpretation of it, which will make the exposition more intuitive, and also simplify the analysis. 
Each form $T_{IJK}$ will correspond to one vertex in a graph, and an edge in the graph will represent the 
fact that two forms share one index.
We will interpret the task of converting the sum $(F_q\otimes F_q)^{\otimes N}$ 
into a direct sum (i.e., a sum where the non-zero forms $T_{IJK}$ do not share any index)
as the task of converting the graph,  by using only simple graph operations, into an edgeless subgraph. 
We will present algorithms solving this latter graph-theoretic task and show that a large edgeless 
subgraph can be obtained, which means that many forms $T_{IJK}$ not sharing any index can
be constructed from $(F_q\otimes F_q)^{\otimes N}$.

Now that we have a direct sum of many forms, each form being isomorphic to (\ref{eq_T}),
the only thing to do before applying Sch\"onhage's asymptotic sum inequality 
is to show that the form (\ref{eq_T})
is isomorphic to a direct sum of  matrix products $\braket{m,m,m^k}$.
Some of the $T_{ijk}$'s (more precisely, all the $T_{ijk}$'s except $T_{112}$, $T_{121}$ and $T_{211}$)
can be analyzed in a straightforward way, 
since they correspond to matrix products, as originally observed in \cite{Coppersmith+90}.
The forms $T_{112}$, $T_{121}$ and $T_{211}$ are delicate to analyze since they do not
correspond to matrix products. In \cite{Coppersmith+90} (see also \cite{Stothers10,WilliamsSTOC12}),
they were analyzed individually through the concept of ``value'', 
a quantity that evaluates the number of square 
matrix products (and their size) that can be created from the form under consideration. 
This is nevertheless 
useless for estimating $\omega(1,1,k)$, except when $k=1$, 
because the value is intrinsically symmetric
(in particular, the values of $T_{112}$, $T_{121}$ and $T_{211}$ are identical),
while here we are precisely interested in breaking the symmetry in order to obtain 
bounds for rectangular matrix multiplication.
Instead, we will analyze the term 
\begin{equation}\label{eq_T112}
T_{112}^{\otimes a_{112}N}\otimes T_{121}^{\otimes a_{121}N}\otimes T_{211}^{\otimes a_{211}N}
\end{equation}
globally. This is the key new idea leading to our new bounds on $\alpha$ and more generally on 
$\omega(1,1,k)$ for any $k\neq 1$. Note that this difficulty was not present
in previous works on rectangular matrix multiplication \cite{Coppersmith97,Huang+98,Ke+08}: for 
simpler basic constructions such as $F_q$, all the smaller parts correspond to matrix products.
We will show that  $T_{112}$, $T_{121}$ and $T_{211}$ 
can be converted into a large number of 
objects called 
``$\Cc$-tensors'' in Strassen's terminology \cite{Strassen87}.
This will be done by relying on the graph-interpretation we introduced and showing how this conversion 
can be interpreted as finding large cliques in a graph (this is the main reason why we developed 
this graph-theoretic interpretation). 
While the fact that the form $T_{112}$ corresponds to a sum of 
$\Cc$-tensors was briefly mentioned in~\cite{Coppersmith+90},
and a proof sketched, 
we will need a complete analysis here. 
We will in particular rely on the fact that the $\Cc$-tensors obtained from $T_{112}$, 
$T_{121}$ and $T_{211}$ are not identical. 
The success of our approach comes from
the discovery that these $\Cc$-tensors are actually ``complementary'': while the $\Cc$-tensors obtained individually from 
$T_{112}$, $T_{121}$ and $T_{211}$ do not give any improvement for 
the exponent of rectangular matrix multiplication, their combination (i.e., the $\Cc$-tensors 
corresponding to the whole term (\ref{eq_T112})) does lead to improvements when analyzed globally
by the ``laser method'' developed by Strassen \cite{Strassen87}.
This will show that the form (\ref{eq_T112}), and thus the 
form (\ref{eq_T}) as well,
is isomorphic to a direct sum of  matrix products $\braket{m,m,m^k}$.

Finally, Sch\"onhage's asymptotic sum inequality will give an inequality, 
depending on the parameters~$a_{ijk}$,
that involves $\omega(1,1,k)$.
Our new upper bounds on $\omega(1,1,k)$ are 
obtained by optimizing these parameters. While 
this is done essentially though numerical calculations, 
the new lower bound on $\alpha$ requires 
a more careful analysis where the optimal values of all
but a few of the 	parameters are found analytically.

\paragraph{Higher powers.}
A natural question is whether our bounds can be improved by taking higher tensor powers of $F_q$
as the basic algorithm, i.e., taking $F_q^{\otimes r}$for $r>2$. As can be expected from the analysis for $r=2$ we have outlined above, the analysis is much more difficult than in the square case, for two main reasons. The first reason
is that the construction is not symmetric and thus more parameters 
have to be considered. This problem can be nevertheless addressed through a systematic framework 
similar to the one described in~\cite{WilliamsSTOC12} --- this is actually quite accessible without using a computer 
for $r=4$. 
The second, and more
fundamental, reason is that the analysis of the smaller parts is different from the square case since it does not use the concept of value. We solved this problem for $r=2$ by applying Strassen's laser method globally on the combination of $T_{112}$, $T_{121}$ and $T_{211}$, which is the key technical contribution of this paper.
For larger values of~$r$ the same approach can in principle be used, 
but other techniques seem to be necessary to convert these ideas into a systematic framework.

\paragraph{Organization of the paper.}
Section \ref{sec_prelim} describes formally the notions of trilinear forms
and Strassen's laser method. Section \ref{section_CW} presents in details
the basic construction $F_q\otimes F_q$ from \cite{Coppersmith+90} and some of its properties.
Section \ref{sec_graphtheory} describes the
graph-theoretic problems that will arise in the analysis of our trilinear forms.
Section \ref{sec_construction} describes our algorithm, and 
Section \ref{sec_analysis}  shows how to use this algorithm to
derive the optimization problem giving our new upper bounds on $\omega(1,1,k)$.
Finally, the optimization is done in Section \ref{sec_optimization}.

\section{Preliminaries}\label{sec_prelim}
In this section we present known results about trilinear forms, 
Strassen's laser method and Salem-Spencer sets
that we will use in this paper.
We refer to \cite{Burgisser+97} for an extensive treatment of the first two topics.

\subsection{Trilinear forms, degeneration of tensors and Sch\"onhage's asymptotic sum inequality}
It will be convenient for us to use an abstract approach and represent trilinear forms as tensors. 
Our presentation is independent from what was informally defined and stated in 
Subsection~\ref{sub_overview}, but describes essentially the same contents.
We thus encourage the reader who encounter these notions for the first time to refer  
at Subsection~\ref{sub_overview} for concrete illustrations.

We assume that $\field$ is an arbitrary field.

Let $U=\field^u$, $V=\field^v$ and $W=\field^w$ be three vector spaces over $\field$, where $u,v$
and $w$ are three positive integers.
A tensor $t$ of format $(u,v,w)$ is an element of $U\otimes V\otimes W=\field^{u\times v\times w}$,
where $\otimes$ denotes the tensor product.
If we fix bases $\{x_i\}$, $\{y_j\}$ and $\{z_k\}$ of $U$, $V$ and $W$, respectively,
then we can express $t$ as
$$
t=\sum_{ijk}t_{ijk}\:x_i\otimes y_j\otimes z_k
$$
for coefficients $t_{ijk}$ in $\field$. The tensor $t$ can then be represented by the 3-dimensional 
array $[t_{ijk}]$. We will often write $x_i\otimes y_j\otimes z_j$ simply as $x_iy_jz_k$. 

The tensor corresponding to the matrix multiplication of an $m\times n$ matrix by an $n\times p$
matrix
is the tensor of format $(m\times n,n\times p,m\times p)$ with coefficients 
$$
t_{ijk}=
\left\{
\begin{array}{ll}
1&\textrm{ if } i=(r,s), j=(s,t) \textrm{ and } k=(r,t) \textrm{ for some integers } (r,s,t)\in [m]\times [n]\times [p]\\
0&\textrm{ otherwise }.
\end{array}
\right.
$$
This tensor will be denoted by $\braket{m,n,p}$.

Another important example is the tensor $\sum_{\ell=1}^n x_\ell y_\ell z_\ell$ of format $(n,n,n)$.
This tensor is denoted $\braket{n}$ and corresponds to $n$ independent scalar products.

Let $\lambda$ be an indeterminate over $\field$ and consider the extension
$\field[\lambda]$ of $\field$, i.e., the set of all polynomials over $\field$ in $\lambda$. 
A triple of matrices $\alpha\in \field[\lambda]^{u'\times u}$, $\beta\in \field[\lambda]^{v'\times v}$ and
$\gamma\in \field[\lambda]^{w'\times w}$ transforms $t\in \field^{u\times v\times w}$ into the tensor 
 $(\alpha\otimes\beta\otimes\gamma)t\in \field[\lambda]^{u'\times v'\times w'}$ defined as
 $$
(\alpha\otimes\beta\otimes\gamma)t=\sum_{ijk}t_{ijk}\:\alpha(x_i)\otimes \beta(y_j)\otimes \gamma(z_j).
$$
This new tensor is called a restriction of $t$. Intuitively, the fact that a tensor $t'$ is a restriction 
of $t$ means that an algorithm computing $t$ can be converted into an algorithm computing $t'$
that uses the same amount of multiplications (i.e.,  an algorithm with essentially the same complexity).
We now give the definition of degeneration of tensors.
\begin{definition}\label{def_deg}
Let $t\in \field^{u\times v\times w}$ and $t'\in \field^{u'\times v'\times w'}$ be two tensors. 
We say that $t'$ is 
a degeneration of $t$, denoted $t'\degen t$, if there exists matrices 
$\alpha\in \field[\lambda]^{u'\times u}$, $\beta\in \field[\lambda]^{v'\times v}$ and
$\gamma\in \field[\lambda]^{w'\times w}$ such that
$$
\lambda^st'+\lambda^{s+1} t''=(\alpha\otimes\beta\otimes\gamma)t
$$
for some tensor $t''\in \field[\lambda]^{u'\times v'\times w'}$ and some nonnegative integer $s$.
\end{definition}
Intuitively, the fact that a tensor $t'$ is a degeneration of a tensor $t$ means that an 
algorithm computing~$t$ can be converted into an ``approximate 
algorithm'' computing $t'$ with essentially the same complexity.
The notion of degeneration can be used to define the notion of border rank. 
\begin{definition}
Let $t$ be a tensor. Then $\underline{R}(t)=\min\{{r\in\Nat\:|\: t\degen\braket{r}\}}$.
\end{definition}


Let $t\in U\otimes V\otimes W$ and $t'\in U'\otimes V'\otimes W'$ be two tensors.
We can naturally define the direct sum $t\oplus t'$, which is a tensor in 
$(U\oplus U')\otimes (V\oplus V')\otimes (W\oplus W')$, and the tensor product 
$t\otimes t'$, which is a tensor in 
$(U\otimes U')\otimes (V\otimes V')\otimes (W\otimes W')$. 
For any integer $c\ge 1$, we will denote the tensor $t\oplus\cdots\oplus t$ (with $c$ occurrences of $t$)
by $c\cdot t$ and 
the tensor $t\otimes\cdots \otimes t$ (with $c$ occurrences of $t$)
by $t^{\otimes c}$.
The degeneration of tensors has the following properties.


\begin{proposition}[Proposition 15.25 in \cite{Burgisser+97}]\label{prop_fact}
Let $t_1,t_1',t_2$ and $t_2'$ be four tensors.
Suppose that $t'_1\unlhd t_1$ and $t'_2\unlhd t_2$.
Then $t'_1\oplus t'_2\unlhd t_1\oplus t_2$ and $t'_1\otimes t'_2\unlhd t_1\otimes t_2$.
\end{proposition}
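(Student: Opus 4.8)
\textbf{Proof plan for Proposition \ref{prop_fact}.}
The plan is to construct explicitly the degeneration matrices for the direct sum and the tensor product out of the ones guaranteed by the hypotheses $t_1'\degen t_1$ and $t_2'\degen t_2$. By Definition \ref{def_deg}, fix matrices $\alpha_i,\beta_i,\gamma_i$ (for $i=1,2$) over $\field[\lambda]$ and nonnegative integers $s_1,s_2$ such that $\lambda^{s_i}t_i'+\lambda^{s_i+1}t_i''=(\alpha_i\otimes\beta_i\otimes\gamma_i)t_i$ for some error tensors $t_i''$. The whole proof is then bookkeeping: exhibit the right block matrices, apply them, and read off that the result again has the shape ``$\lambda^s(\text{target})+\lambda^{s+1}(\text{junk})$''.

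For the direct sum, I would first normalize the two exponents to a common value. If $s_1\le s_2$, multiply the first equation through by $\lambda^{s_2-s_1}$, so that after renaming we may assume $s_1=s_2=:s$; note this only changes $t_1''$ and costs nothing since $\field[\lambda]$ is closed under multiplication by $\lambda$. Now take the block-diagonal matrices $\alpha=\alpha_1\oplus\alpha_2$, $\beta=\beta_1\oplus\beta_2$, $\gamma=\gamma_1\oplus\gamma_2$. Because $t_1\oplus t_2$ lives in $(U_1\oplus U_2)\otimes(V_1\oplus V_2)\otimes(W_1\oplus W_2)$ and the cross terms vanish (the summands share no variables), one checks directly that $(\alpha\otimes\beta\otimes\gamma)(t_1\oplus t_2)=\big((\alpha_1\otimes\beta_1\otimes\gamma_1)t_1\big)\oplus\big((\alpha_2\otimes\beta_2\otimes\gamma_2)t_2\big)=\lambda^s(t_1'\oplus t_2')+\lambda^{s+1}(t_1''\oplus t_2'')$, which is exactly the defining identity for $t_1'\oplus t_2'\degen t_1\oplus t_2$.

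For the tensor product, take $\alpha=\alpha_1\otimes\alpha_2$, $\beta=\beta_1\otimes\beta_2$, $\gamma=\gamma_1\otimes\gamma_2$, and put $s=s_1+s_2$. The key computation is that the restriction operation is multiplicative under tensor product of tensors, i.e. $(\alpha_1\otimes\alpha_2)\otimes(\beta_1\otimes\beta_2)\otimes(\gamma_1\otimes\gamma_2)$ applied to $t_1\otimes t_2$ equals $\big((\alpha_1\otimes\beta_1\otimes\gamma_1)t_1\big)\otimes\big((\alpha_2\otimes\beta_2\otimes\gamma_2)t_2\big)$, which follows by expanding both tensors in coordinates and matching indices. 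Substituting the two hypotheses gives $\big(\lambda^{s_1}t_1'+\lambda^{s_1+1}t_1''\big)\otimes\big(\lambda^{s_2}t_2'+\lambda^{s_2+1}t_2''\big)$; expanding this product, the leading term is $\lambda^{s_1+s_2}(t_1'\otimes t_2')$ and every other term carries a factor of $\lambda^{s_1+s_2+1}$, so collecting them into a single error tensor $t''$ yields $\lambda^s(t_1'\otimes t_2')+\lambda^{s+1}t''$ with $s=s_1+s_2$, establishing $t_1'\otimes t_2'\degen t_1\otimes t_2$.

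The main obstacle is purely notational rather than mathematical: being careful that ``$\otimes$'' is overloaded (tensor product of vector spaces, of tensors, and of transformation matrices) and that the identity $(\alpha\otimes\beta\otimes\gamma)(t\otimes t')=((\alpha\otimes\beta\otimes\gamma)t)\otimes(\dots)$ requires the matrix tensor products to be taken in the compatible order on the three factors; once the index conventions are pinned down, both verifications are immediate. Since this is verbatim Proposition 15.25 in \cite{Burgisser+97}, I would in fact just cite it and give at most the two-line sketch above.
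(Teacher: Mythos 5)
Your argument is correct and matches the situation in the paper: the paper does not prove this statement at all but simply cites Proposition 15.25 of \cite{Burgisser+97}, and your block-diagonal construction for $\oplus$ (after equalizing the exponents by absorbing a power of $\lambda$ into one of the matrices) and the factor-wise tensor construction with $s=s_1+s_2$ is exactly the standard verification behind that cited result. The only nitpick is that multiplying the first identity by $\lambda^{s_2-s_1}$ changes the transformation (e.g.\ replaces $\alpha_1$ by $\lambda^{s_2-s_1}\alpha_1$), not just $t_1''$, but as you note this is harmless since the matrices live over $\field[\lambda]$.
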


Sch\"onhage's asymptotic sum inequality \cite{Schonhage81} will be one of the main tools used to prove our bounds.
Its original statement is for estimating the exponent of square matrix multiplication, but it can be easily 
generalized to estimate the exponent of rectangular matrix multiplication as well. 
We will use the following form, which has been also used implicitly in \cite{Huang+98,Ke+08}.
A proof can be found in \cite{Lotti+83}.

\begin{theorem}[Sch\"onhage's asymptotic sum inequality] \label{theorem_schonhage}
Let $k$, $m$ and $c$ be three positive integers.
Let $t$ be a tensor such that $c\cdot \braket{m,m,m^k}\unlhd t$.
Then 
$$
c\cdot  m^{\omega(1,1,k)}\le \underline{R}(t).
$$
\end{theorem}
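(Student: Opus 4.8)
The plan is to derive the inequality $c\cdot m^{\omega(1,1,k)}\le\underline{R}(t)$ from the hypothesis $c\cdot\braket{m,m,m^k}\unlhd t$ by passing to high tensor powers and invoking the sub-multiplicativity of border rank together with the definition of $\omega(1,1,k)$. First I would record the two facts about border rank that make this work: it is sub-multiplicative under tensor product, i.e. $\underline{R}(s\otimes s')\le\underline{R}(s)\cdot\underline{R}(s')$, and it is additive-friendly in the sense that $\underline{R}(N\cdot s)\le N\cdot\underline{R}(s)$ (indeed $N\cdot s\unlhd s^{\otimes ?}$ is not what we want — rather we use $\underline{R}(c\cdot s)\le c\,\underline{R}(s)$ directly, since a $\lambda$-approximate bilinear algorithm for each summand, run in parallel on disjoint variables, approximates the direct sum). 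Both follow from Proposition~\ref{prop_fact} applied to $\braket{\underline{R}(s)}$ and $\braket{\underline{R}(s')}$, using that $\braket{r}\otimes\braket{r'}=\braket{rr'}$ and $c\cdot\braket{r}\unlhd\braket{cr}$.

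Next I would take the $N$-th tensor power of the degeneration hypothesis. Since $\braket{m,m,m^k}^{\otimes N}=\braket{m^N,m^N,m^{kN}}$ (up to the rounding in $\sfloor{\cdot}$, which I would handle at the end) and $(c\cdot\braket{m,m,m^k})^{\otimes N}$ degenerates to $c^N\cdot\braket{m,m,m^k}^{\otimes N}$ — because the tensor power of a direct sum contains the direct sum of the tensor powers of the summands as a restriction, hence as a degeneration — Proposition~\ref{prop_fact} gives
$$
c^N\cdot\braket{m^N,m^N,m^{kN}}\unlhd t^{\otimes N}.
$$
Taking border ranks and using the two monotonicity/multiplicativity facts above yields
$$
c^N\cdot\underline{R}\!\left(\braket{m^N,m^N,m^{kN}}\right)\le\underline{R}(t^{\otimes N})\le\underline{R}(t)^N.
$$

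Now I would invoke the characterization of $\omega(1,1,k)$ in terms of border rank (equivalently rank, since the two differ only by sub-polynomial factors and both are sandwiched by $\braket{r}$-degenerations): for every $\epsilon>0$, $\underline{R}(\braket{n,n,\sfloor{n^k}})\ge n^{\omega(1,1,k)-\epsilon}$ for infinitely many $n$ — here I use $n=m^N$. Substituting, $c^N\cdot m^{N(\omega(1,1,k)-\epsilon)}\le\underline{R}(t)^N$, so taking $N$-th roots gives $c\cdot m^{\omega(1,1,k)-\epsilon}\le\underline{R}(t)$, and letting $\epsilon\to0$ finishes the argument. The rounding issue — that $\braket{m,m,m^k}$ really means $\braket{m,m,\sfloor{m^k}}$, and $\sfloor{m^k}^N\ne\sfloor{m^{kN}}$ in general — is handled by noting $\sfloor{m^{kN}}\ge\sfloor{m^k}^N$ and that $\braket{a,a,b'}\unlhd\braket{a,a,b}$ for $b'\le b$, which only helps the inequality; the sub-polynomial slack is absorbed into $\epsilon$. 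The main obstacle is really just bookkeeping: making precise the claim that $(c\cdot s)^{\otimes N}$ degenerates to $c^N\cdot s^{\otimes N}$ (a clean way is to observe it is literally a restriction, selecting the ``pure'' summands), and being careful that the definition of $\omega(1,1,k)$ as an infimum over $\tau$ translates into the lower bound $\underline{R}(\braket{n,n,\sfloor{n^k}})\ge n^{\omega(1,1,k)-o(1)}$ used above, which is exactly where one must recall that rank and border rank are polynomially equivalent for matrix-multiplication tensors.
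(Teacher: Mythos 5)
The crucial step in your outline is the displayed inequality $c^N\cdot\underline{R}\bigl(\braket{m^N,m^N,m^{kN}}\bigr)\le\underline{R}(t^{\otimes N})$. Monotonicity under degeneration only gives $\underline{R}\bigl(c^N\cdot\braket{m^N,m^N,m^{kN}}\bigr)\le\underline{R}(t^{\otimes N})$, so to reach your inequality you need $\underline{R}(c^N\cdot X)\ge c^N\,\underline{R}(X)$, i.e.\ super-additivity of border rank over direct sums. The two facts you cite ($\underline{R}(c\cdot s)\le c\,\underline{R}(s)$ and $\underline{R}(s\otimes s')\le\underline{R}(s)\,\underline{R}(s')$) go in the opposite direction and cannot yield it. This is not a bookkeeping issue: border rank is genuinely not additive on direct sums --- Sch\"onhage's classical example $\underline{R}\bigl(\braket{e,1,f}\oplus\braket{1,(e-1)(f-1),1}\bigr)=ef+1$, far below the sum of the two border ranks, is exactly what motivated the asymptotic sum inequality in \cite{Schonhage81}. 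If super-additivity were available, the theorem would indeed be a formal consequence of powering; its entire content is that the factor $c$ survives \emph{despite} the failure of additivity.

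The natural repair also falls short of the stated bound. One can convert the $c^N$ independent copies into a single larger product via $\braket{a,s,b}\otimes\braket{m^N,m^N,m^{kN}}\le asb\cdot\braket{m^N,m^N,m^{kN}}$, choosing $a=s$, $b\approx a^k$ and $a^{2+k}\le c^N$; this gives $\underline{R}(\braket{n,n,\sfloor{n^k}})\le\underline{R}(t)^N$ with $n\approx c^{N/(2+k)}m^N$, and the definition of the exponent then yields only $c^{\omega(1,1,k)/(2+k)}\,m^{\omega(1,1,k)}\le\underline{R}(t)$, strictly weaker than the claimed $c\cdot m^{\omega(1,1,k)}$ whenever $\omega(1,1,k)<2+k$. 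Recovering the full factor $c$ is precisely the delicate part of Sch\"onhage's proof (an iterative argument in which the independent summands are exploited at every level, not just once after a single powering), and the paper does not reprove the theorem but refers to \cite{Lotti+83} for the rectangular version. The rest of your skeleton --- the isomorphism $(c\cdot s)^{\otimes N}\cong c^N\cdot s^{\otimes N}$, submultiplicativity of $\underline{R}$ under tensor powers, the treatment of the rounding $\sfloor{m^k}$, and the border-rank characterization of $\omega(1,1,k)$ --- is sound, but it cannot close this gap.
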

Theorem \ref{theorem_schonhage} states that, if the form $t$ 
can be degenerated (i.e., approximately converted, in the sense of Definition \ref{def_deg}) into a direct sum of $c$ forms, each being
isomorphic to $\braket{m,m,m^k}$, then the inequality 
$c\cdot  m^{w(1,1,k)}\le \underline{R}(t)$ holds. Note that this is a powerful technique, since
the concepts of degeneration and border rank refer to ``approximate algorithms'',
while $\omega(1,1,k)$ refers to the complexity of exact algorithms for rectangular matrix multiplication.

\subsection{Strassen's laser method and $\Cc$-tensors}
Strassen \cite{StrassenFOCS86} introduced in 1986 
a new approach, often referred as the laser method, to derive
upper bounds on the exponent of matrix multiplication. To the best of 
our knowledge, all the applications 
of this method have focused so far on square matrix multiplication, in which case
several simplifications can be done due to the symmetry of the problem. 
In this paper we will nevertheless need the full power of the laser method, and in particular 
the notion of $\Cc$-tensor introduced in \cite{StrassenFOCS86,Strassen87}, to 
derive our new bounds on the exponent of rectangular matrix multiplication. 
The exposition below will mainly follow~\cite{Burgisser+97}.

Let $t\in U\otimes V\otimes W$ be a tensor. 
Suppose that $U$, $V$ and $W$ decompose as direct sums of subspaces as follows:
$$
U=\bigoplus_{i\in S_U}U_i,\hspace{3mm}
V=\bigoplus_{j\in S_V}V_j,\hspace{3mm}
W=\bigoplus_{k\in S_W}W_k.
$$
Denote by $D$ this decomposition.
We say that $t$ is a $\Cc$-tensor with respect to $D$ if $t$ can be written as 
$$t=\sum_{(i,j,k)\in S_U\times S_V\times S_W} t_{ijk}$$ 
where each $t_{ijk}$ is a tensor in $U_i\otimes V_j\otimes W_k$.
The support of $t$ is defined as
$$
\supp_D(t)=\{(i,j,k)\in S_U\times S_V\times S_W\:|\: t_{ijk}\neq 0\},
$$
and the nonzero $t_{ijk}$'s are called the components of $t$. 
We will usually omit the reference to $D$ when there is no ambiguity 
or when the decomposition does not matter.

As a simple example, consider the complete decompositions of the spaces $U=\field^{m\times n}$,
$V=\field^{n\times p}$ and $W=\field^{m\times p}$ (i.e., their decomposition as direct sums of 
one-dimensional subspaces, each subspace being spanned by one element of their basis). 
With respect to this decomposition, the tensor of matrix multiplication $\braket{m,n,p}$
is a $\Cc$-tensor with support
$$
\suppc(\braket{m,n,p})=\{((r,s),(s,t),(r,t))\:|\:(r,s,t)\in[m]\times[n]\times[p]\}
$$
where each component is trivial (i.e., isomorphic to $\braket{1,1,1}$). In this paper the notation $\suppc(\braket{m,n,p})$
will always refer to the support of $\braket{m,n,p}$ with respect to this complete decomposition.



We now introduce the concept of combinatorial degeneration.
A subset $\Delta$ of $S_U\times S_V\times S_W$ is called diagonal if the three projections
$\Delta\to S_U$, $\Delta\to S_V$ and $\Delta\to S_W$ are injective.
Let $\Phi$ be a subset of $S_U\times S_V\times S_W$. A set $\Psi\subseteq \Phi$
is a combinatorial degeneration of $\Phi$ if there exists tree functions
$a\colon S_U\to\Int$, $b\colon S_V\to\Int$ and $c\colon S_W\to\Int$ such that 
\begin{itemize}
\item 
for all $(i,j,k)\in\Psi$, $a(i)+b(j)+c(k)=0$;
\item 
for all $(i,j,k)\in\Phi\backslash\Psi$, $a(i)+b(j)+c(k)>0$.
\end{itemize}

The most useful application of combinatorial degeneration will be 
the following result, which states that a sum,
over indices in a diagonal combinatorial degeneration of  $\supp_D(t)$,
of the components $t_{ijk}$ is direct.
\begin{proposition}[Proposition 15.30 in \cite{Burgisser+97}]\label{prop_deg}
Let $t$ be $\Cc$-tensor with support $\supp_D(t)$ and components~$t_{ijk}$.
Let $\Delta\subseteq \supp_D(t)$
be a combinatorial degeneration of $\supp_D(t)$ 
and assume that $\Delta$ is diagonal.
Then
$$
\bigoplus_{(i,j,k)\in\Delta}t_{ijk}\degen t.
$$
\end{proposition}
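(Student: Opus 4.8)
The plan is to prove Proposition~\ref{prop_deg} by exhibiting an explicit degeneration, that is, explicit matrices $\alpha\in\field[\lambda]^{u'\times u}$, $\beta\in\field[\lambda]^{v'\times v}$, $\gamma\in\field[\lambda]^{w'\times w}$ built directly from the integer functions $a\colon S_U\to\Int$, $b\colon S_V\to\Int$, $c\colon S_W\to\Int$ witnessing that $\Delta$ is a combinatorial degeneration of $\supp_D(t)$. The left-hand side $\bigoplus_{(i,j,k)\in\Delta}t_{ijk}$ lives in $\bigoplus_{(i,j,k)\in\Delta}(U_i\otimes V_j\otimes W_k)$, and because $\Delta$ is diagonal the three projections of $\Delta$ onto $S_U$, $S_V$, $S_W$ are injective, so this target space can be identified with a subspace $\tilde U\otimes\tilde V\otimes\tilde W$ of $U\otimes V\otimes W$, where $\tilde U=\bigoplus_{i\in\pi_U(\Delta)}U_i$ and similarly for $\tilde V,\tilde W$. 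So the degeneration matrices can be taken to be block-diagonal with respect to the decomposition $D$: $\alpha$ acts on each $U_i$ as multiplication by the scalar $\lambda^{a(i)}$ if $i\in\pi_U(\Delta)$ and by $0$ otherwise, and similarly $\beta$ multiplies $V_j$ by $\lambda^{b(j)}$ (or kills it), and $\gamma$ multiplies $W_k$ by $\lambda^{c(k)}$ (or kills it). One should normalize so that all exponents are nonnegative; since only finitely many indices occur one can add a large constant to $a$ (and compensate in $c$, or simply absorb it into the power $\lambda^s$ in Definition~\ref{def_deg}), so assume $a,b,c\ge 0$ on the relevant indices.

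The key computation is then to apply $(\alpha\otimes\beta\otimes\gamma)$ to $t=\sum_{(i,j,k)\in\supp_D(t)}t_{ijk}$ and read off the coefficient of the appropriate power of $\lambda$. For a component with index $(i,j,k)$, the map $\alpha\otimes\beta\otimes\gamma$ multiplies $t_{ijk}$ by $\lambda^{a(i)+b(j)+c(k)}$ when all three of $i,j,k$ lie in the respective projections of $\Delta$ (and by $0$ otherwise). Set $s=0$ after the normalization above (so that $a(i)+b(j)+c(k)=0$ exactly on $\Delta$). Then: for $(i,j,k)\in\Delta$ we get exponent $0$, contributing $t_{ijk}$ to the $\lambda^0$-term; for $(i,j,k)\in\supp_D(t)\setminus\Delta$, either some index is outside the projection of $\Delta$ — in which case the component is killed outright — or all three indices are in the projections but $(i,j,k)\notin\Delta$. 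The latter is the case to worry about, and this is where diagonality plus the combinatorial-degeneration inequality combine: I would argue that if $i\in\pi_U(\Delta)$, $j\in\pi_V(\Delta)$, $k\in\pi_W(\Delta)$ then, because the three maps are injective on $\Delta$, the unique triple in $\Delta$ with first coordinate $i$ also has second coordinate (the unique one matching $i$) and third coordinate likewise; if $(i,j,k)$ itself is not that triple then $(i,j,k)\notin\Phi$ at all, so it does not even appear in $\supp_D(t)$ — wait, that's too strong. Rather, the correct statement is that the functions $a,b,c$ are defined on all of $S_U,S_V,S_W$ and satisfy $a(i)+b(j)+c(k)>0$ for every $(i,j,k)\in\Phi\setminus\Delta=\supp_D(t)\setminus\Delta$ directly by hypothesis; diagonality of $\Delta$ is what guarantees one can consistently normalize to $a(i)+b(j)+c(k)=0$ on all of $\Delta$ simultaneously (choose $a$ freely on $\pi_U(\Delta)$, then $b$ on $\pi_V(\Delta)$, then $c$ forced on $\pi_W(\Delta)$ — injectivity makes this well-defined, then extend arbitrarily). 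Hence every component outside $\Delta$ gets multiplied by a strictly positive power of $\lambda$, so it lands in $\lambda^{1}\field[\lambda]^{u'\times v'\times w'}$, i.e., it is absorbed into the error term $\lambda\cdot t''$. Thus $(\alpha\otimes\beta\otimes\gamma)t=\sum_{(i,j,k)\in\Delta}t_{ijk}+\lambda\cdot t''=\bigl(\bigoplus_{(i,j,k)\in\Delta}t_{ijk}\bigr)+\lambda t''$, which is exactly the defining equation of degeneration with $s=0$, after identifying the diagonal-indexed direct sum with its image in $U\otimes V\otimes W$.

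The main obstacle, and the point that needs care rather than calculation, is this identification of $\bigoplus_{(i,j,k)\in\Delta}(U_i\otimes V_j\otimes W_k)$ with the subspace of $U\otimes V\otimes W$ obtained by composing with suitable projections: one must check that, because $\Delta$ is diagonal, the components $t_{ijk}$ with $(i,j,k)\in\Delta$ genuinely occupy pairwise-disjoint ``coordinate slabs'' $U_i\otimes V_j\otimes W_k$ — disjoint in all three tensor factors — so that their sum inside $U\otimes V\otimes W$ is already a direct (external) sum, with no collisions forcing cancellation. If two triples $(i,j,k)\ne(i',j',k')$ in $\Delta$ shared, say, the same first coordinate $i=i'$, the slabs could overlap; injectivity of the projection $\Delta\to S_U$ rules this out, and similarly for the other two projections. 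Once that bookkeeping is in place, the degeneration matrices $\alpha,\beta,\gamma$ can be taken (after a further harmless restriction composing with the coordinate projections onto $\tilde U,\tilde V,\tilde W$) to have entries in $\field[\lambda]$ of the stated shape, and the conclusion $\bigoplus_{(i,j,k)\in\Delta}t_{ijk}\degen t$ follows. I would also remark that this is precisely Proposition~15.30 in \cite{Burgisser+97}, so in the paper proper one simply cites it; the sketch above is how one would reconstruct the proof if needed.
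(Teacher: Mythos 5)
The paper does not actually prove this statement: Proposition~\ref{prop_deg} is imported verbatim as Proposition~15.30 of \cite{Burgisser+97}, so the only ``in-paper proof'' is the citation. Your reconstruction is essentially the standard proof of that textbook result, and it is correct: scale each block $U_i,V_j,W_k$ by $\lambda^{a(i)},\lambda^{b(j)},\lambda^{c(k)}$ (killing the blocks outside the projections of $\Delta$), use the hypothesis $a(i)+b(j)+c(k)>0$ on $\supp_D(t)\setminus\Delta$ to push all other components into the error term, and use diagonality of $\Delta$ only to identify the surviving sum with the external direct sum $\bigoplus_{(i,j,k)\in\Delta}t_{ijk}$, which your last paragraph handles correctly. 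Two small points of precision: after shifting $a,b,c$ by constants to make the relevant exponents nonnegative, the common value of $a(i)+b(j)+c(k)$ on $\Delta$ is that constant, so one takes $s$ equal to it rather than $s=0$ (harmless, since Definition~\ref{def_deg} allows any nonnegative $s$, and your remark about absorbing the shift into $\lambda^s$ already covers this); and the digression about re-choosing $a,b,c$ to vanish on $\Delta$ is unnecessary, since these functions are supplied by the definition of combinatorial degeneration.
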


In this work we will construct $\Cc$-tensors where all the components 
are isomorphic to $\braket{m,m,m^k}$ for some values $m$ and $k$. 
Proposition \ref{prop_deg} then suggests
that a good bound on the exponent of rectangular matrix multiplication
can be derived, via Theorem \ref{theorem_schonhage}, if the support of the
$\Cc$-tensor contains a large diagonal combinatorial degeneration. 
When this support is isomorphic to $\suppc(\braket{e,h,\ell})$ for some positive
integers $e,h$ and $\ell$, 
a powerful
tool to construct large diagonal combinatorial degenerations is given by the following result 
by Strassen (Theorem 6.6 in \cite{Strassen87}), restated in our terminology.

\begin{proposition}[\cite{Strassen87}]\label{prop_Strassen}
Let $e_1, e_2$ and $e_3$ be three positive integers such that $e_1\le e_2\le e_3$.
For any permutation $\sigma$ of $\{1,2,3\}$, 
there exists a diagonal set $\Delta \subseteq \suppc(\braket{e_{\sigma(1)},e_{\sigma(2)},e_{\sigma(3)}})$  with
$$
|\Delta|=
\left\{
\begin{array}{ll}
e_1e_2-\floor{\frac{(e_1+e_2-e_3)^2}{4}}&\textrm{ if }e_1+e_2\ge e_3\\
e_1e_2&\textrm{ otherwise }
\end{array}
\right.
$$
that is a combinatorial degeneration of $\suppc(\braket{e_{\sigma(1)},e_{\sigma(2)},e_{\sigma(3)}})$.
In particular, $|\Delta|\ge \ceil{3e_1e_2/4}$.

\end{proposition}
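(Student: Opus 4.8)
The plan is to reduce everything to a concrete combinatorial construction inside $\suppc(\braket{e_1,e_2,e_3})$ and then transfer it to an arbitrary permutation of the three parameters by symmetry. Recall that
$$
\suppc(\braket{e_1,e_2,e_3})=\{((r,s),(s,t),(r,t))\:|\:(r,s,t)\in[e_1]\times[e_2]\times[e_3]\}.
$$
A subset $\Delta$ of this support is diagonal precisely when the three maps $((r,s),(s,t),(r,t))\mapsto(r,s)$, $\mapsto(s,t)$, $\mapsto(r,t)$ are injective on $\Delta$; equivalently, picking a set $T\subseteq[e_1]\times[e_2]\times[e_3]$ of triples, we need that no two triples in $T$ agree in their $(r,s)$-coordinates, nor in their $(s,t)$-coordinates, nor in their $(r,t)$-coordinates. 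So the task is to find a large ``partial Latin-box'' $T$: a set of triples $(r,s,t)$ such that the projections to each of the three coordinate-pairs are injective, together with integer weight functions $a,b,c$ on $[e_1]$-indexed, $[e_2]$-indexed and $[e_3]$-indexed coordinate pairs witnessing the combinatorial degeneration condition $a+b+c=0$ on $T$ and $>0$ off $T$.

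First I would handle the combinatorial-degeneration certificate in a uniform way: for the complete decomposition, choosing the linear functionals $a((r,s))=r$, $b((s,t))=s$, $c((r,t))=-(r+s)$ — more precisely, pulling these back through the coordinate projections — gives value $r+s-(r+t)$ hmm; the correct choice is to use a function of the form $a((r,s))=\rho_r+\sigma_s$ etc., so that on the support $a+b+c$ telescopes to a fixed quantity, and then perturb. Concretely, any ``error-correcting''/Salem–Spencer style selection of triples can be made into a combinatorial degeneration by a standard weighting argument (this is exactly the mechanism behind Proposition~\ref{prop_deg} being applicable), so the real content is the size bound, not the certificate. I would therefore isolate, as a small lemma, the statement that \emph{every} subset $T$ of $[e_1]\times[e_2]\times[e_3]$ whose three pairwise projections are injective arises as a diagonal combinatorial degeneration of $\suppc(\braket{e_1,e_2,e_3})$; the proof is to exhibit the weight functions, which for this highly structured support is a short linear-algebra computation.

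Next, the size bound. When $e_1+e_2\le e_3$ we can simply take, for each $(r,s)\in[e_1]\times[e_2]$, a distinct value $t=t(r,s)\in[e_3]$ chosen so that $t(r,s)-r$ and $t(r,s)+s$ are also injective — a greedy/linear choice like $t(r,s)=r\cdot e_2+s$ works once $e_3$ is at least $e_1e_2$, but when $e_1+e_2\le e_3<e_1e_2$ one needs a slightly cleverer assignment; here I would use the assignment $t\equiv r+s$ shifted appropriately, noting that $(r,s)\mapsto(r,s,r+s)$ already has all three projections injective whenever the sums $r+s$ stay within a window of size $e_3$, and $e_1+e_2-1\le e_3$ guarantees this, giving $|\Delta|=e_1e_2$. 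When $e_1+e_2>e_3$, the window $\{2,\dots,e_1+e_2\}$ of possible sums $r+s$ overflows $[e_3]$, and we must delete triples whose sum falls outside an admissible length-$e_3$ interval; choosing the interval to be centered optimally, the number of deleted pairs $(r,s)$ with $r+s$ outside it is exactly $\floor{(e_1+e_2-e_3)^2/4}$ (two triangular corners of the $e_1\times e_2$ grid), which yields $|\Delta|=e_1e_2-\floor{(e_1+e_2-e_3)^2/4}$. The final clause $|\Delta|\ge\ceil{3e_1e_2/4}$ then follows by checking the worst case $e_3=e_1$ (so $e_1=e_2=e_3=:e$), where the bound is $e^2-\floor{e^2/4}=\ceil{3e^2/4}$, and observing monotonicity in $e_3$.

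The main obstacle I anticipate is not the size count — that is an elementary ``shave the corners of a rectangle'' argument — but making the \emph{diagonality together with the combinatorial-degeneration property} hold simultaneously for the overflow-deleted set, and doing so uniformly for every permutation $\sigma$ of $\{1,2,3\}$. The permutation invariance is the reason the statement is phrased for $\braket{e_{\sigma(1)},e_{\sigma(2)},e_{\sigma(3)}}$: the support $\suppc$ is symmetric under cyclically relabeling which coordinate-pair plays the role of $x$, $y$, $z$ only up to a transpose, so one must check that the weight functions transform correctly under $\sigma$ rather than just asserting it. I expect the cleanest route is to prove the $\sigma=\mathrm{id}$ case in full (construction of $T$ plus the three weight functions), then observe that a permutation of $\{1,2,3\}$ induces an isomorphism of the ambient tensor that carries $\suppc(\braket{e_1,e_2,e_3})$ to $\suppc(\braket{e_{\sigma(1)},e_{\sigma(2)},e_{\sigma(3)}})$ and transports both $\Delta$ and the weighting, so the general case is immediate. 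With that structural remark in place, the proposition reduces to the single explicit construction described above.
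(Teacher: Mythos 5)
The paper itself gives no proof of this proposition (it is quoted from Theorem 6.6 of \cite{Strassen87}), so your proposal stands on its own; and while your explicit construction (keep the triples $(r,s,t)$ with $t=r+s-c$ for an optimally chosen shift $c$, so that the two triangular corners of the $e_1\times e_2$ grid of size $\floor{(e_1+e_2-e_3)^2/4}$ are lost) is indeed the right one and your count is correct, there is a genuine gap at the exact point you declare to be ``not the real content''. Your auxiliary lemma --- that \emph{every} subset of $\suppc(\braket{e_1,e_2,e_3})$ whose three pairwise projections are injective is automatically a combinatorial degeneration --- is false. Counterexample: in $\suppc(\braket{2,2,2})$ take the Latin-square diagonal $\{(1,1,1),(1,2,2),(2,1,2),(2,2,1)\}$ (triples written as $(r,s,t)$). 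Summing the four required equalities $a((r,s))+b((s,t))+c((r,t))=0$ gives that the total of all twelve weight values is $0$, while summing the four strict inequalities required at the excluded triples $(1,1,2),(1,2,1),(2,1,1),(2,2,2)$ gives that the same total is strictly positive --- a contradiction. This is precisely why the maximal \emph{diagonal} in $\suppc(\braket{e,e,e})$ has size $e^2$ (a Latin square) while the proposition only delivers $\ceil{3e^2/4}$: the weight certificate is the whole content of ``combinatorial degeneration'', and it cannot be obtained for free from diagonality. Your alternative suggestion of telescoping linear weights $a((r,s))=\rho_r+\sigma_s$ also cannot work, since then $a+b+c$ is an affine function of $(r,s,t)$, and an affine function vanishing on the plane $t=r+s-c$ is negative on one of its sides, violating the strict positivity off $\Psi$.

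The repair is standard but must be written down: use a convex (quadratic) certificate adapted to your plane. Since
$$
(r+s-t-c)^2=(r+s-c)^2-2rt+\bigl(t^2+2ct-2st\bigr),
$$
you may set $a((r,s))=(r+s-c)^2$, $c((r,t))=-2rt$ and $b((s,t))=t^2+2ct-2st$; these are integer-valued, their sum is $0$ exactly on the kept triples $t=r+s-c$ and strictly positive on every other element of the support, and the kept set is diagonal because any two of $r,s,t$ determine the third on the plane. Permutation invariance is then immediate (transport by the isomorphism induced by $\sigma$, or use the fully symmetric relation $r+s+t=c$ instead). Finally, a minor slip in your last clause: the worst case for $e_1e_2-\floor{(e_1+e_2-e_3)^2/4}$ over admissible $e_3$ is $e_3=e_2$ (not $e_3=e_1$), where the deleted quantity is at most $\floor{e_1^2/4}\le e_1e_2/4$, which yields $|\Delta|\ge\ceil{3e_1e_2/4}$ by integrality.
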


Finally, we mention that the concept of $\Cc$-tensor is preserved by the tensor product.
We will just state this property for the restricted class of $\Cc$-tensors that we will encounter in this paper
(for which the precise decompositions do not matter),
and refer to \cite{Burgisser+97} or to Section 7 in \cite{Strassen87} for a complete treatment. 
\begin{proposition}
Let $t$ be a $\Cc$-tensor with support isomorphic to 
$\suppc(\braket{e,h,\ell})$ in which each component is isomorphic to $\braket{m,n,p}$.
Let $t'$ be a $\Cc$-tensor with support isomorphic to 
$\suppc(\braket{e',h',\ell'})$ in which each component is isomorphic to $\braket{m',n',p'}$.
Then $t\otimes t'$ is a $\Cc$-tensor with support isomorphic to 
$\suppc(\braket{ee',hh',\ell\ell'})$ in which each component is isomorphic to $\braket{mm',nn',pp'}$.
\end{proposition}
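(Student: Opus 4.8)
The plan is to reduce this multiplicativity statement to the more basic fact that tensor product distributes over direct sums, together with the observation that the complete decomposition of a tensor product of matrix-multiplication spaces is (isomorphic to) the tensor product of the complete decompositions. Concretely, suppose $t \in U \otimes V \otimes W$ is a $\Cc$-tensor with respect to decompositions $U = \bigoplus_{i \in S_U} U_i$, $V = \bigoplus_{j \in S_V} V_j$, $W = \bigoplus_{k \in S_W} W_k$, and likewise $t' \in U' \otimes V' \otimes W'$ with respect to decompositions indexed by $S_{U'}, S_{V'}, S_{W'}$. First I would write down the canonical decomposition of the ambient space of $t \otimes t'$: using $(U \otimes U') = \bigoplus_{(i,i') \in S_U \times S_{U'}} (U_i \otimes U'_{i'})$ and the analogous identities for $V \otimes V'$ and $W \otimes W'$, we obtain a decomposition $D \otimes D'$ of $(U\otimes U') \otimes (V \otimes V') \otimes (W \otimes W')$ indexed by $(S_U \times S_{U'}) \times (S_V \times S_{V'}) \times (S_W \times S_{W'})$.

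Next I would verify that $t \otimes t'$ is a $\Cc$-tensor with respect to $D \otimes D'$ and identify its support. Writing $t = \sum_{(i,j,k)} t_{ijk}$ and $t' = \sum_{(i',j',k')} t'_{i'j'k'}$ with each $t_{ijk} \in U_i \otimes V_j \otimes W_k$ and each $t'_{i'j'k'} \in U'_{i'} \otimes V'_{j'} \otimes W'_{k'}$, bilinearity of $\otimes$ gives
$$
t \otimes t' = \sum_{(i,j,k)}\sum_{(i',j',k')} t_{ijk} \otimes t'_{i'j'k'},
$$
and the summand $t_{ijk} \otimes t'_{i'j'k'}$ lies in $(U_i \otimes U'_{i'}) \otimes (V_j \otimes V'_{j'}) \otimes (W_k \otimes W'_{k'})$, which is exactly the $((i,i'),(j,j'),(k,k'))$-block of $D \otimes D'$. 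Hence $t \otimes t'$ is a $\Cc$-tensor with respect to $D \otimes D'$, and its support is $\{((i,i'),(j,j'),(k,k')) : t_{ijk} \neq 0 \text{ and } t'_{i'j'k'} \neq 0\} = \supp_D(t) \times \supp_{D'}(t')$, with component indexed by $((i,i'),(j,j'),(k,k'))$ equal to $t_{ijk} \otimes t'_{i'j'k'}$.

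It then remains to translate the index-set and component statements through the isomorphism hypotheses. By hypothesis $\supp_D(t) \cong \suppc(\braket{e,h,\ell})$ and $\supp_{D'}(t') \cong \suppc(\braket{e',h',\ell'})$. Since $\suppc(\braket{e,h,\ell})$ is the set of triples $((r,s),(s,t),(r,t))$ with $(r,s,t) \in [e]\times[h]\times[\ell]$, a direct calculation shows that $\suppc(\braket{e,h,\ell}) \times \suppc(\braket{e',h',\ell'})$ is in bijection, via $((r,s),(s,t),(r,t),(r',s'),(s',t),(r',t)) \mapsto (((r,r'),(s,s')),((s,s'),(t,t')),((r,r'),(t,t')))$, with $\suppc(\braket{ee',hh',\ell\ell'})$; this is precisely the statement that the complete decomposition of $\braket{ee',hh',\ell\ell'}$ factors as the tensor product of the complete decompositions of $\braket{e,h,\ell}$ and $\braket{e',h',\ell'}$. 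Under this identification the component at a given index is $t_{ijk} \otimes t'_{i'j'k'} \cong \braket{m,n,p} \otimes \braket{m',n',p'} \cong \braket{mm',nn',pp'}$, the last isomorphism being the standard fact that the tensor product of matrix-multiplication tensors is again a matrix-multiplication tensor (the product of an $mm'\times nn'$ matrix by an $nn'\times pp'$ matrix). Combining these, $t \otimes t'$ is a $\Cc$-tensor with support isomorphic to $\suppc(\braket{ee',hh',\ell\ell'})$ in which every component is isomorphic to $\braket{mm',nn',pp'}$, as claimed.

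The only genuinely delicate point — the one I would be most careful about — is the bookkeeping in the isomorphism $\suppc(\braket{e,h,\ell}) \times \suppc(\braket{e',h',\ell'}) \cong \suppc(\braket{ee',hh',\ell\ell'})$: one must check that the "shared index" structure defining a matrix-multiplication support (the fact that the second coordinate of the $U$-index equals the first coordinate of the $V$-index, etc.) is respected by the pairing of coordinates, so that pairs of compatible triples map to compatible triples and nothing else. Everything else is routine distributivity of $\otimes$ over $\oplus$ and the elementary identity $\braket{m,n,p}\otimes\braket{m',n',p'} \cong \braket{mm',nn',pp'}$; for a full treatment of the general (non-complete-decomposition) case one defers to \cite{Burgisser+97} or Section 7 of \cite{Strassen87}, as the statement itself indicates.
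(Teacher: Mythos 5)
Your proof is correct, and it supplies precisely the argument the paper omits: the paper states this proposition without proof, deferring to \cite{Burgisser+97} and Section~7 of \cite{Strassen87}, and your route --- taking the product decomposition, observing that the support of $t\otimes t'$ is the Cartesian product of the supports with components $t_{ijk}\otimes t'_{i'j'k'}$, identifying $\suppc(\braket{e,h,\ell})\times\suppc(\braket{e',h',\ell'})$ with $\suppc(\braket{ee',hh',\ell\ell'})$ via $[ee']\cong[e]\times[e']$ (and likewise for the other index sets), and using $\braket{m,n,p}\otimes\braket{m',n',p'}\cong\braket{mm',nn',pp'}$ --- is the standard one found in those references. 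The only blemish is a typo in your displayed bijection, where $(s',t)$ and $(r',t)$ should read $(s',t')$ and $(r',t')$; the surrounding discussion makes clear this is not a substantive gap.
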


\subsection{Salem-Spencer sets}
Let $M$ be a positive integer and consider $\Int_{M}=\{0,1,\ldots,M-1\}$.
We say that a set $B\subseteq\Int_M$ has no length-3 arithmetic progression if,
for 
any three elements $b_1,b_2$ and $b_3$ in $B$, 
$$
b_1+b_2=2b_3 \bmod M \Longleftrightarrow b_1=b_2=b_3. 
$$
Salem and Spenser have shown the existence of very dense sets with no
length-3 arithmetic progression.
\begin{theorem}[\cite{Salem+42}]\label{th_Salem}
For any $\epsilon>0$, there exists an integer $M_\epsilon$
such that, for any integer $M>M_\epsilon$, there is 
a set $B\subseteq \Int_M$ of size $|B|>M^{1-\epsilon}$ with no length-3 
arithmetic progression.
\end{theorem}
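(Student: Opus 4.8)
The plan is to construct, for a given large $M$, a dense subset $B \subseteq \Int_M$ with no length-3 arithmetic progression by working in a mixed-radix number system. Pick a radix $d$ (to be optimized, roughly $d \approx 2\sqrt{\log M / \log\log M}$ up to constants) and write $n = \lfloor \log_d M \rfloor$-ish, so that integers below $M$ can be represented by $n$ digits in base $2d-1$. First I would consider vectors $v = (v_1,\dots,v_n)$ whose entries lie in $\{0,1,\dots,d-1\}$ and whose digit sum $\sum_i v_i$ equals a fixed target value $s$; encode each such vector as the integer $\phi(v) = \sum_i v_i (2d-1)^{i-1}$. The point of using base $2d-1$ rather than base $d$ is that when we add two such vectors digit-by-digit, the entries stay in $\{0,\dots,2d-2\}$, so no carrying occurs and $\phi$ is additively faithful on these vectors: $\phi(v)+\phi(w) = \phi(v+w)$ with coordinatewise addition.

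The key step is to verify the no-3-AP property. Suppose $\phi(u) + \phi(v) = 2\phi(w)$ in $\Int_M$; since all three integers are smaller than $(2d-1)^n \le M$ (choosing $n$ so this holds), the equation holds over $\Int$ without wraparound, hence coordinatewise $u_i + v_i = 2w_i$ for every $i$. Now invoke the digit-sum constraint: $\sum_i(u_i+v_i) = \sum_i 2w_i = 2s$, and $\sum_i u_i = \sum_i v_i = s$, which is automatically consistent, so this alone is not enough — the real leverage is geometric. Each $u_i, v_i \in \{0,\dots,d-1\}$, so the midpoint condition $w_i = (u_i+v_i)/2$ together with a convexity/variance argument forces $u = v = w$: more precisely, one shows the vectors with fixed coordinate range and fixed coordinate sum, mapped into Euclidean space, have the property that a point is the midpoint of two others in the set only if all three coincide — this is where fixing the digit sum $s$ matters, as it confines the vectors to a face on which the relevant set is "in convex position" in the needed sense. (Concretely: if $u_i+v_i=2w_i$ for all $i$ with all digits bounded, the only obstruction to $u=v$ is spread in the digits, and fixing $\sum u_i = \sum v_i$ kills it via Cauchy–Schwarz; I would write out that $\sum (u_i - v_i)^2 = 0$.)

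Finally, I would count: the number of vectors in $\{0,\dots,d-1\}^n$ with digit sum $s$ is maximized for $s$ near $n(d-1)/2$, and by a standard pigeonhole/averaging argument the best choice of $s$ yields at least $d^n / (n(d-1)+1)$ such vectors, hence $|B| \ge d^n/(nd)$. Comparing with $M \approx (2d-1)^n$, one gets $|B| \ge M^{\log d / \log(2d-1)} / (nd)$, and choosing $d$ growing slowly with $M$ makes the exponent $\log d/\log(2d-1) \to 1$ while the polynomial-in-$n$ denominator is absorbed; for $M$ large enough this beats $M^{1-\epsilon}$. The main obstacle is the middle step — making the convexity argument that the midpoint condition plus fixed digit sum forces equality fully rigorous; everything else is bookkeeping about radices and a counting estimate. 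I would isolate that as a short self-contained lemma about lattice points on a simplex slice before assembling the theorem.
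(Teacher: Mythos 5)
The paper itself does not prove this theorem --- it quotes it from Salem and Spencer --- so the only question is whether your construction stands on its own, and as written it does not: the key lemma you defer to the end is false. Fixing only the digit sum does not preclude coordinatewise midpoints. Concretely, with $d=3$, $n=2$, base $2d-1=5$, take $u=(0,2)$, $v=(2,0)$, $w=(1,1)$: all digits lie in $\{0,1,2\}$, all digit sums equal $2$, no carries occur, and $\phi(u)+\phi(v)=2\phi(w)$ (namely $10+2=2\cdot 6$), yet $u\neq v$. The geometric reason is that a fixed coordinate sum confines the vectors to an affine hyperplane, which is convex but not \emph{strictly} convex, so a point of the set can perfectly well be the midpoint of two other distinct points of the set; your proposed step ``fixing $\sum_i u_i=\sum_i v_i$ kills it via Cauchy--Schwarz, $\sum_i(u_i-v_i)^2=0$'' simply does not follow from $u_i+v_i=2w_i$ together with equal sums. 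The standard repairs are: (i) Behrend's device of fixing the sum of squares $\sum_i v_i^2=r$ (a sphere, which is strictly convex): then the parallelogram identity gives $\Vert u-v\Vert^2=2\Vert u\Vert^2+2\Vert v\Vert^2-\Vert u+v\Vert^2=4r-\Vert 2w\Vert^2=0$, hence $u=v=w$; or (ii) Salem and Spencer's original device of fixing the entire digit distribution (each digit value occurs a prescribed number of times), after which a downward induction on the digit value $d-1,d-2,\ldots$ forces $u=v=w$. Either fix is compatible with your counting step (pigeonhole over the $O(nd^2)$ possible values of $r$, respectively a multinomial count), and the density $|B|>M^{1-\epsilon}$ still comes out by letting $d$ grow slowly with $M$.

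A second, smaller gap: the statement is modular. From $b_1+b_2\equiv 2b_3 \bmod M$ with $0\le b_i<M$ you may only conclude $b_1+b_2=2b_3$ or $b_1+b_2=2b_3\pm M$, so ``all three integers are smaller than $M$'' is not enough to remove the wraparound. Arrange instead that every element of $B$ is below $M/2$ (e.g.\ choose $n$ with $(2d-1)^n\le M/2$); then $b_1+b_2<M$ and $2b_3<M$, the congruence becomes an equality over $\Int$, and the loss is only a constant factor in the density, which the $M^{1-\epsilon}$ bound absorbs.
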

We refer to these sets as Salem-Spencer sets. They can be constructed in time polynomial in $M$.
Note that this construction has been improved by Behrend \cite{Behrend46}, but the above statement
will be enough for our purpose. 

\section{Coppersmith-Winograd's construction}\label{section_CW}
In this section we describe the construction by Coppersmith and Winograd
\cite{Coppersmith+90}, which we will use as the basis of our algorithm, and several of its properties. 

We start with the simpler construction presented in Section 7 of \cite{Coppersmith+90}.
For any positive integer $q$, let us define the following trilinear form $F_q$, where $\lambda$
is an indeterminate over $\field$.
\begin{align*}
F_{q}=&\sum_{i=1}^q \lambda^{-2}(x_0+\lambda x_i)(y_0+\lambda y_i)(z_0+\lambda z_i)\:-\\
&\lambda^{-3}(x_0+\lambda^2\sum_{i=1}^q x_i)(y_0+\lambda^2\sum_{i=1}^q y_i)(z_0+\lambda^2\sum_{i=1}^q z_i)\:+\\
&(\lambda^{-3}-q\lambda^{-2})(x_0+\lambda^3x_{q+1})(y_0+\lambda^3y_{q+1})(z_0+\lambda^3z_{q+1})
\end{align*}
In this trilinear form the $x$-variables are $x_0,x_1,\ldots, x_{q+1}$. Similarly, the number of  
$y$-variables is $(q+2)$ and the number of  $z$-variables is $(q+2)$ as well. 
Define the form
$$
F'_q=\sum_{i=1}^q (x_0y_iz_i+x_iy_0z_i+x_iy_iz_0)+x_0y_0z_{q+1}+x_0y_{q+1}z_{0}+x_{q+1}y_0z_0.
$$
It is easy to check that the form $F_q$ can be written as 
$F_q=F'_q+\lambda\cdot F''_q$, where $F''_q$ is a polynomial in $\lambda$ and in the $x$-variables, $y$-variables and $z$-variables. In the language of Section \ref{sec_prelim}, this means that $F'_q\degen F_q$ and, informally, this means
that an algorithm computing $F_q$ can be converted into an algorithm computing $F'_q$ with the same complexity.
Note that $\underline{R}(F_q)\le q+2$ since, by definition, 
$F_q$ is the sum of $q+2$ products.

A more complex construction is proposed in Section 8 of \cite{Coppersmith+90}.
It
is obtained by taking the tensor product of $F_q$ by itself. 
By Proposition \ref{prop_fact} we know that $F'_q\otimes F'_q\degen F_q\otimes F_q$.
Consider the tensor product of $F'_q$ by itself:
\begin{eqnarray*}
F'_q\otimes F'_q&=&
T_{004}+T_{040}+T_{400}+
T_{013}+T_{031}+T_{103}+T_{130}+T_{301}+T_{310}+\\
&&T_{022}+T_{202}+T_{220}+
T_{112}+T_{121}+T_{211}
\end{eqnarray*}
where 
\begin{eqnarray*}
T_{004}&=&x_{0,0}^{0}y_{0,0}^{0}z_{q+1,q+1}^{4}\\
T_{013}&=&\sum_{i=1}^q x_{0,0}^{0}y_{i,0}^{1}z_{i,q+1}^{3}+\sum_{k=1}^q x_{0,0}^{0}y_{0,k}^{1}z_{q+1,k}^{3}\\
T_{022}&=&x_{0,0}^{0}y_{q+1,0}^{2}z_{0,q+1}^{2}+x_{0,0}^{0}y_{0,q+1}^{2}z_{q+1,0}^{2}+\sum_{i,k=1}^q x_{0,0}^{0}y_{i,k}^{2}z_{i,k}^{2}\\
T_{112}&=&\sum_{i=1}^q x_{i,0}^{1}y_{i,0}^{1}z_{0,q+1}^{2}+\sum_{k=1}^q x_{0,k}^{1}y_{0,k}^{1}z_{q+1,0}^{2}+\sum_{i,k=1}^q x_{i,0}^{1}y_{0,k}^{1}z_{i,k}^{2}+\sum_{i,k=1}^q x_{0,k}^{1}y_{i,0}^{1}z_{i,k}^{2}
\end{eqnarray*}
and the other eleven terms are obtained by permuting the indexes of the $x$-variables, the $y$-variables and $z$-variables in the above expressions (e.g., $T_{040}=x_{0,0}^{0}y_{q+1,q+1}^{4}z_{0,0}^{0}$ and $T_{400}=x_{q+1,q+1}^{4}y_{0,0}^{0}z_{0,0}^{0}$). Let us describe in more details the notations used here. The number of $x$-variables is $(q+2)^2$. They are indexed as $x_{i,k}$, for $i,k\in \{0,1,\ldots, q+1\}$. The superscript is assigned in the following
way:
the variable $x_{0,0}$ has superscript $0$, 
the variables in $\{x_{i,0},x_{0,k}\}_{1\le i,j\le q}$ have superscript $1$,
the variables in $\{x_{q+1,0},x_{i,k},x_{0,q+1}\}_{1\le i,j\le q}$ have superscript $2$, 
the variables in $\{x_{q+1,k},x_{i,q+1}\}_{1\le i,j\le q}$ have superscript~$3$ 
and the variable $x_{q+1,q+1}$ has superscript $4$. 
Note that the superscript is completely determined by the subscript. 
Similarly, the number of $y$-variables is $(q+2)^2$, and 
the number of $z$-variables is $(q+2)^2$ as well. The 
$y$-variables and the $z$-variables are assigned subscripts and superscripts 
exactly as for the $x$-variables. Observe that any term $xyz$ that appears in 
$T_{ijk}$ is such that
$x$
has superscript~$i$, $y$
has superscript $j$ and $z$ has
superscript~$k$.

We thus obtain 
$$
\sum_{\begin{subarray}{l} 0\le i,j,k\le 4 \\ i+j+k=4 \end{subarray}}T_{ijk} \degen F_q\otimes F_q. 
$$
Moreover we know that $\underline{R}(F_q\otimes F_q)\le (q+2)^2$, since $F_q\otimes F_q$ can be 
written using $(q+2)^2$ multiplications.

We will later need to analyze all the forms $T_{ijk}$. It happens,
as observed in \cite{Coppersmith+90}, that most of
these forms (all the forms except $T_{112}$, $T_{121}$ and $T_{211}$)
can be analyzed in a straightforward way, since they are isomorphic to the following matrix products:
\begin{eqnarray*}
T_{004}\cong T_{040}\cong T_{400}&\cong&\langle 1,1,1\rangle\\
T_{013}\cong T_{031}&\cong&\langle 1,1,2q\rangle\\
T_{103}\cong T_{301}&\cong & \langle 2q,1,1\rangle\\
T_{130}\cong T_{310}&\cong & \langle 1,2q,1\rangle\\
T_{022}&\cong &\langle 1,1,q^2+2\rangle\\
T_{202}&\cong &\langle q^2+2,1,1\rangle\\
T_{220}&\cong &\langle 1,q^2+2,1\rangle .
\end{eqnarray*}
This can be seen from the definition of the trilinear form (or the tensor) corresponding 
to matrix multiplication described in Section \ref{sec_prelim}. For example, the form 
$T_{013}$ is isomorphic to the tensor 
$\sum_{\ell=1}^{2q}x_0y_\ell z_\ell=\langle 1,1,2q\rangle$, which 
represents
the product of a $1\times 1$ matrix (a scalar) by a $1\times 2q$
matrix (a row). 

\section{Graph-Theoretic Problems}\label{sec_graphtheory}
In this section we describe and solve several graph-theoretic problems
that will arise in the analysis of our trilinear forms. While the presentation
given here is independent from the remaining of the paper, the reader may prefer
to read Section \ref{sec_construction} before going through this section.

\subsection{Problem setting}\label{sub_graphtheorysetting}

Let $\tau$ be a fixed positive integer.
Let $N$ be a large integer and  
define the set 
$$\Lambda= \left\{(I,J,K)\in [\tau]^N\times [\tau]^N\times [\tau]^N \:|\: I_\ell+J_\ell+K_\ell=\tau \textrm{ for all } \ell\in\{1,\ldots,N\}\right\}.$$
Define the three coordinate functions $f_1,f_2,f_3\colon [\tau]^N\times [\tau]^N\times [\tau]^N\to [\tau]^N$ as follows.
\begin{eqnarray*}
f_1((I,J,K))&=&I\\
f_2((I,J,K))&=&J\\ 
f_3((I,J,K))&=&K
\end{eqnarray*}

From the definition of $\Lambda$, two distinct elements in $\Lambda$ cannot agree on more than 
one coordinate. Since this simple observation will be crucial in our analysis, we state it explicitly as 
follows.
\begin{fact}\label{fact}
Let $u$ and $v$ be two elements in $\Lambda$. If $f_i(u)=f_i(v)$ for more than one index
$i\in\{1,2,3\}$, then $u=v$.
\end{fact}

Let $U$ be a subset of $\Lambda$ such that there exist integers $\Nn_1,\Nn_2$ and $\Nn_3$ for which
the following property holds: for any $I\in [\tau]^N$, 
\begin{eqnarray*}
|\{u\in U \:|\: f_1(u)=I\}|&\in&\{0,\Nn_1\}\\
|\{u\in U \:|\: f_2(u)=I\}|&\in&\{0,\Nn_2\}\\
|\{u\in U \:|\: f_3(u)=I\}|&\in&\{0,\Nn_3\}.
\end{eqnarray*}
This means that, for any $i\in\{1,2,3\}$ and any $I\in [\tau]^N$,
the size of the set $f_i(I)^{-1}$ is either $0$ or $\Nn_i$.
Let us write $|f_1(U)|=T_1$, $|f_2(U)|=T_2$ and $|f_3(U)|=T_3$.
It is easy to see that 
$$
|U|=T_1\Nn_1=T_2\Nn_2=T_3\Nn_3.
$$
We are interested in stating asymptotic results holding when $N$ goes to infinity.
Through this section the $\Nn_i$'s and the $T_i$'s 
will be strictly increasing functions of $N$. 

Let $G$ be the (simple and undirected) graph with vertex set $U$ in which
two distinct vertices~$u$ and~$v$ are connected if and only there exists
one
index $i\in\{1,2,3\}$ such that $f_i(u)=f_i(v)$.
%
The goal will be to modify the graph $G$ to obtain a subgraph satisfying some specific properties. 
The only modification we allow is to remove all the vertices with a given sequence at a given position: 
given a sequence $I\in [\tau]^N$ and a position $s\in\{1,2,3\}$, remove all the vertices $u$ (if any) such that $f_i(u)=I$.
We call such an operation a removal operation.
The reason why only such removal operations are allowed is that they will correspond, when considering trilinear forms, 
to setting to zero some variables, which is one of the only operations that can be performed on trilinear forms.

While not stated explicitly in graph-theoretic terms, the key technical result by Coppersmith and Winograd \cite{Coppersmith+90} is a method to convert, 
when $\Nn_1=\Nn_2=\Nn_3$,
the graph $G$ into an edgeless graph that still contains a non-negligible fraction of the vertices.
We state this result in the following theorem.
\begin{theorem}[\cite{Coppersmith+90}]\label{th_CW}
Suppose that $\Nn_1=\Nn_2=\Nn_3$. Then, for any constant $\epsilon>0$, the graph $G$ can be converted, with only removal operations, into an edgeless graph with 
$\Omega\left(\frac{T_1}{\Nn_1^{\epsilon}}\right)$ vertices.
\end{theorem}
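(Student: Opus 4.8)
The plan is to show that the graph $G$, when $\Nn_1=\Nn_2=\Nn_3=:\Nn$, can be made edgeless by removal operations while keeping $\Omega(T_1/\Nn^\epsilon)$ vertices. The natural approach is the one implicit in Coppersmith--Winograd's hashing argument: use a Salem--Spencer set (Theorem~\ref{th_Salem}) to select, via a random linear hash, a large subset of vertices on which no two distinct vertices can share a coordinate. First I would fix a large prime (or prime power) $M$ of size roughly $\Nn^{c}$ for a suitable constant $c=c(\epsilon)$, and pick a Salem--Spencer set $B\subseteq\Int_M$ with $|B|>M^{1-\epsilon'}$ for a small $\epsilon'$ to be chosen at the end. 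Then I would choose a random vector $w\in\Int_M^N$ uniformly at random and, for each coordinate direction $i\in\{1,2,3\}$, define a hash value on sequences $I\in[\tau]^N$ by $h_i(I)=\sum_{\ell=1}^N w_\ell\cdot (\text{digit encoding of } I_\ell) \bmod M$; actually, following Coppersmith--Winograd, the cleaner setup is to work with a single hash $h(I)=\sum_\ell w_\ell I_\ell \bmod M$ and exploit the arithmetic-progression-free property on the three hash values $h(f_1(u)), h(f_2(u)), h(f_3(u))$ attached to each vertex $u\in U$.

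Second, the key point: for $u=(I,J,K)\in\Lambda$ we have $I_\ell+J_\ell+K_\ell=\tau$ for every $\ell$, so $h(I)+h(J)+h(K)=\tau\sum_\ell w_\ell =: \sigma$ is a constant independent of $u$. I would retain only those vertices $u$ for which $h(f_1(u))$, $h(f_2(u))$ and $h(f_3(u))$ all lie in a suitably shifted copy of $B$ arranged so that the relation ``three hash values sum to $\sigma$'' forces, via the no-length-3-arithmetic-progression property, that the triple is ``locked''. Concretely one shifts $B$ by different offsets in the three coordinates so that $h(f_1(u))+h(f_2(u))+h(f_3(u))=\sigma$ together with membership in the shifted sets implies, whenever two vertices $u,v$ agree in one coordinate $f_i(u)=f_i(v)$, that their remaining two hash values satisfy an equation of the form $b_1+b_2=2b_3$ in $B$, hence are all equal; combining the equalities of hash values with Fact~\ref{fact} then forces $u=v$. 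Thus after this selection no two distinct retained vertices share any coordinate, i.e. the induced subgraph is edgeless. The removal operations are exactly: for each $i$ and each sequence $I$ with $h(I)\notin(\text{the shifted }B)$, remove all vertices with $f_i(u)=I$; this is a legal sequence of removal operations.

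Third, I would bound the number of surviving vertices in expectation. Since $w$ is uniform and $M$ exceeds the number of distinct values any single coordinate sequence can take only in a controlled way, the hash $h(f_i(u))$ is close to uniform on $\Int_M$ for a fixed $u$ (here one uses that the sequences appearing as $f_i(u)$ are not too degenerate — one may need to first pass to a sub-collection where coordinates are ``balanced,'' or invoke that collisions of the hash are rare because two distinct sequences differ in some coordinate and $w_\ell$ is uniform). Hence $\Pr[h(f_i(u))\in\text{shifted }B]\approx |B|/M > M^{-\epsilon'}$ for each $i$, and by a second-moment or direct independence-of-coordinates argument across the three (essentially independent) events, the expected number of retained vertices is $\Omega(|U|\cdot M^{-3\epsilon'})$. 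Dividing through, since $|U|=T_1\Nn$ and $M=\Nn^{c}$, this is $\Omega(T_1\cdot \Nn^{1-3c\epsilon'})$, which is $\Omega(T_1/\Nn^\epsilon)$ once $3c\epsilon'\le 1+\epsilon$; choosing $\epsilon'$ small (and $c$, $M$ accordingly, with $M>M_{\epsilon'}$ as in Theorem~\ref{th_Salem}) makes this hold. Picking a $w$ achieving at least the expectation, then performing the corresponding removals, yields the claimed edgeless subgraph.

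\textbf{Main obstacle.} The delicate step is the third one: controlling the distribution of the hash values $h(f_i(u))$ and the dependence structure among the three coordinates, so that the expected survival probability is genuinely $\Omega(M^{-O(\epsilon')})$ rather than being killed by atypical concentration of a single coordinate sequence. This is exactly where the structure of $\Lambda$ (and possibly a preliminary balancing/regularization of which sequences appear, together with the injectivity-type Fact~\ref{fact}) has to be used carefully; it is also the place where the hypothesis $\Nn_1=\Nn_2=\Nn_3$ enters, ensuring the three coordinate directions play symmetric roles so that a single hash $h$ and a single random $w$ suffice. I expect the argument here to mirror, almost verbatim in the symmetric case, the combinatorial estimate of Coppersmith and Winograd, and the bulk of the work is verifying that their estimate survives the abstraction to this graph-theoretic formulation.
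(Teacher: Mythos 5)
Your first stage (the Salem--Spencer hashing restriction, implemented by removal operations) matches the paper's first pruning, but the claim that this alone leaves an edgeless graph is where the proof breaks. Compatibility only forces, for each surviving vertex $u$, that $b_1(f_1(u))=b_2(f_2(u))=b_3(f_3(u))=b$ for some $b\in\Gamma$ (this is Lemma~\ref{lemma_comp}); if two \emph{distinct} surviving vertices satisfy $f_1(u)=f_1(v)=I$, all you can conclude is $b_2(f_2(u))=b_2(f_2(v))=b_1(I)$, i.e.\ a hash collision between the distinct sequences $f_2(u)\neq f_2(v)$. Equality of hash values does not give equality of the underlying sequences, so Fact~\ref{fact} cannot be invoked and $u=v$ does not follow. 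Quantitatively the collisions are unavoidable at the scale you need: the survival probability of a vertex is $|\Gamma|/M^2$ (not $(|\Gamma|/M)^3$ as in your third step --- two of the three membership events already force the third), so to keep $\Omega(T_1/\Nn_1^{\epsilon})$ vertices you are forced to take $M=\Theta(\Nn_1)$, and then the expected number of surviving edges is of order $T_1\Nn_1^2|\Gamma|/M^3$, a constant fraction of the surviving vertex count --- certainly not zero. Taking $M$ polynomially larger than $\Nn_1$, as you suggest, makes the surviving vertex count $\approx T_1\Nn_1|\Gamma|/M^2\ll T_1/\Nn_1^{\epsilon}$, so that route cannot meet the required bound either.

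What is missing is the paper's second pruning (Proposition~\ref{prop_secondpruningv1}, proved for the more general Theorem~\ref{th_1} of which Theorem~\ref{th_CW} is the special case $U^\ast=U$): after the hashing step one runs a greedy loop which, whenever a surviving vertex $u$ still has a neighbour, applies a further removal operation zeroing all vertices that share an appropriate coordinate with $u$. The crucial counting is that if such an operation deletes a set $S$ of surviving vertices, it also deletes at least $\binom{|S|}{2}+1\ge |S|$ of the remaining edges, so the total number of vertices lost in this phase is at most the number $|E'|$ of edges left after hashing; with $M=\Theta(\Nn_1+\Nn_2+\Nn_3)$ one has $\E[|E'|]\le\E[|V|]/2$ by Lemmas~\ref{lemma_Sast} and~\ref{lemma_W}, and fixing weights achieving the expectation yields the claimed $\Omega\bigl(T_1/\Nn_1^{\epsilon}\bigr)$ edgeless set. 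Without this stage (or some equivalent mechanism, using only legal removal operations, for disposing of the residual hash collisions) your argument does not establish the theorem.
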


In this section we will give several generalizations of this result. 
In Subsection \ref{sub_stat} we state our generalizations. 
Then, in Subsections \ref{sub_weights}--\ref{sub_second2},
we describe the algorithms and prove the results.
We stress that, in this section as in \cite{Coppersmith+90}, the number of removal operations 
(i.e., the time complexity of the algorithms) is irrelevant. This is 
because, in the applications of these results to matrix multiplication,
the parameter $N$ will be treated as a large constant independent 
of the size of the matrices considered.

\subsection{Statement of our results}\label{sub_stat}

The generalization we consider assume the existence of a known set $U^\ast\subseteq U$ such that 
\begin{itemize}
\item
$|f_i(U^\ast)|=T_i$ for each $i\in\{1,2,3\}$;
\item
there exist integers $\Nn^\ast_1,\Nn^\ast_2$ and $\Nn^\ast_3$ such that
$$|\{u\in U \:|\: f_i(u)=I\}|=\Nn_i \Leftrightarrow |\{u\in U^\ast \:|\: f_i(u)=I\}|=\Nn_i^\ast$$
for each $I\in [\tau]^N$ and each $i\in\{1,2,3\}$.
\end{itemize}
Note that we have necessarily $T_1\Nn^\ast_1=T_1\Nn^\ast_2=T_1\Nn^\ast_2$.

The first problem considered is again to convert the graph $G$ into an edgeless subgraph 
that contains a non-negligible fraction of the vertices using only removal operations, 
but we additionally require that all the remaining vertices are in $U^\ast$.
Our first result is the following theorem.
\begin{theorem}\label{th_1}
For any constant $\epsilon>0$, the graph $G$ can be converted, with only removal operations, into an edgeless graph with 
$$\Omega\left(\frac{T_1\Nn^\ast_1}{(\Nn_1+\Nn_2+\Nn_3)^{1+\epsilon}}\right)$$ vertices, all of them being in $U^\ast$.
\end{theorem}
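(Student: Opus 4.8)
The plan is to mimic the Coppersmith--Winograd strategy from Theorem~\ref{th_CW}, but to carry out the hashing/projection argument in a way that (a) forces survivors into $U^\ast$ and (b) tolerates the three different multiplicities $\Nn_1,\Nn_2,\Nn_3$. First I would discard, via removal operations, every vertex of $U\setminus U^\ast$: for each index $i\in\{1,2,3\}$ and each sequence $I$ with $|\{u\in U\mid f_i(u)=I\}|=\Nn_i$ but $|\{u\in U^\ast\mid f_i(u)=I\}|<\Nn_i$, remove the fibre. By the second bullet defining $U^\ast$ this is impossible unless that fibre already has $\Nn_i^\ast$ elements in $U^\ast$, so after a preliminary clean-up we may assume the ambient graph is $G$ restricted to a set $U'$ with $U^\ast\subseteq U'\subseteq U$, the fibre sizes are $\Nn_1,\Nn_2,\Nn_3$ on the support and contain exactly $\Nn_i^\ast$ vertices of $U^\ast$, and $|f_i(U')|=T_i$ for each $i$. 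It then suffices to edgelessly prune $U'$ down to $\Omega\bigl(T_1\Nn^\ast_1/(\Nn_1+\Nn_2+\Nn_3)^{1+\epsilon}\bigr)$ vertices all lying in $U^\ast$.

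The core step is a Salem--Spencer hashing argument. Pick a prime $M$ slightly larger than $\Nn_1+\Nn_2+\Nn_3$ (say $M\le 2(\Nn_1+\Nn_2+\Nn_3)$) and, using Theorem~\ref{th_Salem}, a set $B\subseteq\Int_M$ with $|B|>M^{1-\epsilon'}$ and no length-$3$ arithmetic progression, where $\epsilon'$ is chosen as a function of $\epsilon$. Within each nonempty fibre $f_i^{-1}(I)$ I would enumerate its $\Nn_i$ vertices and assign them labels in $\Int_M$; crucially I will choose the labelling so that the $\Nn_i^\ast$ vertices of $U^\ast$ in that fibre receive a distinguished contiguous block of labels. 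Now choose uniformly random ``shifts'' $w_i(I)\in\Int_M$ for every $i$ and every sequence $I$, and keep a vertex $u$ only if, writing $b_i(u)$ for its label in the fibre $f_i^{-1}(f_i(u))$, the three quantities $b_1(u)+w_1(f_1(u))$, $b_2(u)+w_2(f_2(u))$, $b_3(u)+w_3(f_3(u))$ all lie in $B$ and furthermore sum to a fixed target residue $2c\bmod M$ for an appropriate constant $c$; all these conditions are enforced by removal operations (remove every fibre whose shift puts its image outside the prescribed window). The no-3-AP property of $B$ together with Fact~\ref{fact} guarantees the surviving graph is edgeless: if two distinct survivors $u,v$ shared a coordinate, say $f_1(u)=f_1(v)$, then by Fact~\ref{fact} they differ in the other two coordinates, so $w_1(f_1(u))$ is common and the relation $\sum b_i(u)=\sum b_i(v)=2c$ combined with membership in $B$ forces a length-$3$ progression among elements of $B$, hence $b_i(u)=b_i(v)$ for all $i$, hence $u=v$, a contradiction. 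To make survivors land in $U^\ast$, I additionally remove any fibre whose kept block is not the $U^\ast$-block; since a vertex survives only if all three of its fibres are ``kept,'' every survivor then lies in $U^\ast$.

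For the counting, a vertex $u\in U^\ast$ survives with probability at least $(|B|/M)^2\cdot(1/M)$ over the random shifts (two independent membership events plus one sum-fixing event), so the expected number of survivors is $\Omega\bigl(|U^\ast|\cdot|B|^2/M^3\bigr)=\Omega\bigl(T_1\Nn_1^\ast\cdot M^{-2\epsilon'}/M\bigr)$, and choosing $\epsilon'$ small relative to $\epsilon$ and recalling $M=\Theta(\Nn_1+\Nn_2+\Nn_3)$ gives expectation $\Omega\bigl(T_1\Nn_1^\ast/(\Nn_1+\Nn_2+\Nn_3)^{1+\epsilon}\bigr)$; fixing a good choice of shifts yields the theorem. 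One technical point to handle carefully is that the labelling must be \emph{consistent}: a single sequence $I$ can be $f_1$ of some vertices and $f_2$ of others (indeed $f_i^{-1}(I)$ is indexed by $i$), so the labels and the $U^\ast$-blocks are attached to pairs $(i,I)$, not to $I$ alone, and the shifts $w_i(I)$ are likewise indexed by $(i,I)$ --- this is exactly what the hypotheses on $U^\ast$ are set up to allow.

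\textbf{Main obstacle.} The delicate part is not the probabilistic estimate but arranging that the $U^\ast$-restriction is compatible with the hashing window: we must be able to simultaneously (i) slice each fibre so that its $\Nn_i^\ast$ distinguished vertices form a block that the shifted window $B+w_i(I)$ can isolate, and (ii) do so uniformly enough that the survival probability of a fixed $u\in U^\ast$ still factors as a product of the three near-independent events. Getting the bookkeeping right here --- in particular checking that removing ``wrong'' blocks never removes a vertex we wanted to keep, and that the events remain independent after conditioning on being in $U^\ast$ --- is where the real care lies; once that is set up, the no-3-AP argument and the expectation computation go through essentially as in the proof of Theorem~\ref{th_CW}.
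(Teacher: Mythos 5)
There is a genuine gap, and it is structural rather than a matter of bookkeeping. First, your hashing scheme is per-vertex: you enumerate the $\Nn_i$ vertices inside each fibre $f_i^{-1}(I)$, give them individual labels, and then keep or discard a vertex according to a condition on its own three labels. But the theorem only allows removal operations, which delete an \emph{entire} fibre $f_i^{-1}(I)$ at once; a keep/discard rule that distinguishes vertices within one fibre cannot be implemented this way, and your parenthetical ``remove every fibre whose shift puts its image outside the prescribed window'' does not rescue it, since a shifted fibre's image is a set of $\Nn_i$ distinct labels, not a single value. The same objection kills the preliminary ``clean-up'': when a fibre contains both $U^\ast$ and non-$U^\ast$ vertices (the typical case, $\Nn_i^\ast\le\Nn_i$), no sequence of removal operations can delete only the non-$U^\ast$ part. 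The paper avoids both problems by making the hash functions $b_1,b_2,b_3$ depend only on the index sequence (random weights $\omega_j$ on the $N$ coordinates), so that ``keep $u$ iff $b_i(f_i(u))\in\Gamma$'' is a fibre-level condition, and by never removing $U\setminus U^\ast$ early: membership in $U^\ast$ is enforced only at the very end of a greedy procedure.

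Second, hashing with a Salem--Spencer set does \emph{not} by itself make the surviving graph edgeless, and your argument for edgelessness does not go through: from $\beta_1(u)+\beta_2(u)+\beta_3(u)=\beta_1(v)+\beta_2(v)+\beta_3(v)=2c$ with all six values in $B$ you cannot extract a length-$3$ progression inside $B$, so nothing forces $u=v$. Indeed, even in the paper's scheme two distinct compatible vertices with $f_1(u)=f_1(v)$ hash identically on that coordinate and both survive the first pruning; the Salem--Spencer property is used only to prove the compatibility characterization (Lemma~\ref{lemma_comp}, via the identity $b_1(I)+b_2(J)-2b_3(K)\equiv 0 \bmod M$, which exploits $I_\ell+J_\ell+K_\ell=\tau$ and has no analogue for your arbitrary within-fibre labels). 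The missing ingredient is the entire second pruning: a greedy procedure that, whenever a vertex of $V^\ast$ still has a neighbour, removes a whole fibre through that vertex, together with the counting argument that each such removal destroys at least as many edges of $E'$ as vertices of $V^\ast$ (since $\binom{|S|}{2}+1\ge|S|$), giving $|\overline{V_f}|\ge|V^\ast|-|E'|$; combined with $\E[|V^\ast|]=T_1\Nn_1^\ast|\Gamma|/M^2$, $\E[|E'|]\le T_1\Nn_1^\ast(\Nn_1+\Nn_2+\Nn_3)|\Gamma|/M^3$ and $M=\Theta(\Nn_1+\Nn_2+\Nn_3)$, this yields the stated bound. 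Without this stage (or a substitute for it), your construction neither guarantees an edgeless graph nor restricts the survivors to $U^\ast$.
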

Theorem \ref{th_CW} is a special case of Theorem \ref{th_1}, for $U^\ast=U$.
The case $\Nn_1=\Nn_2=\Nn_3$ has been proved implicitly by Stothers \cite{Stothers10}
and Vassilevska Williams \cite{WilliamsSTOC12}.

Our second problem deals with another kind of conversion.
Remember that, from the definition of the graph $G$ 
and Fact \ref{fact}, for any edge connecting
two vertices $u$ and $v$ in $G$ there exists
exactly one index $i\in\{1,2,3\}$ such that $f_i(u)=f_i(v)$.
Let us define the concept of 1-clique as follows.

\begin{definition}
A 1-clique in the graph $G$ is  
a set $U'\subseteq U$ for which
there exists a sequence $I\in [\tau^N]$ such that $f_{1}(u)=I$ for all $u\in U'$.
The size of the 1-clique is $|U'|$.
\end{definition}

Our second result shows how to convert, using only removal operations,
the graph $G$ into a graph with vertices in $U^\ast$ that is the disjoint
union of many large 1-cliques. The formal statement follows. 
\begin{theorem}\label{th_2}
Suppose that $\Nn_1\ge \Nn_2\ge \Nn_3$.
Assume that
$\frac{\Nn_2T_1}{\Nn_1^\ast}+\frac{\Nn_2}{T_1}<\frac{1}{1024}$.
Then, for any constant $\epsilon>0$, 
the graph $G$ can be converted, with only removal operations, into a graph satisfying the following conditions:
\begin{itemize}
\item 
all the vertices of the graph are in $U^\ast$;
\item
each connected component is a 1-clique (i.e., the graph is a disjoint union of 1-cliques);
\item
among these 1-cliques, there are
$\Omega\left(\frac{T_1}{\Nn_2^{\epsilon}}\right)$ 1-cliques that have size 
$\Omega\left(\frac{\Nn_1^\ast}{\Nn_2}\right)$.
\end{itemize}
\end{theorem}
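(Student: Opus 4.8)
The plan is to bootstrap from Theorem \ref{th_1}, which already produces a large edgeless subgraph inside $U^\ast$. The key observation is that an edgeless subgraph of $G$ is, by definition of $G$ and Fact \ref{fact}, a set of vertices that are pairwise disagreeing on \emph{all three} coordinates; in particular they disagree on the first coordinate, so such a set meets each $1$-clique (each fiber $f_1^{-1}(I)$) in at most one vertex. I want the opposite extreme: a union of fibers $f_1^{-1}(I)$, each of which is large, and which are pairwise non-adjacent (i.e.\ no shared second or third coordinate between different fibers). So the overall strategy is: first reduce to a subgraph where every surviving fiber of $f_1$ has size $\Omega(\Nn_1^\ast/\Nn_2)$ (a ``density'' cleanup using the smallness hypothesis), and then apply the Coppersmith--Winograd machinery of Theorem \ref{th_1}, but at the level of \emph{fibers} rather than vertices, treating each large $f_1$-fiber as a single super-vertex whose ``second coordinate'' and ``third coordinate'' are the sets of $j$- and $k$-values it uses.

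Concretely, I would proceed in the following steps. \emph{Step 1 (restrict to $U^\ast$ and clean up light fibers).} First remove everything outside $U^\ast$; this is legal since $U^\ast$ is a union of fibers in each coordinate by its defining property. Now for a generic sequence $I$ with $f_1^{-1}(I)$ nonempty, the fiber has exactly $\Nn_1^\ast$ vertices; but after the restriction the effective degree-related quantities are governed by $\Nn_2,\Nn_3\le\Nn_1$. The hypothesis $\frac{\Nn_2 T_1}{\Nn_1^\ast}+\frac{\Nn_2}{T_1}<\frac{1}{1024}$ is exactly the kind of bound that guarantees, by a counting/averaging argument, that one can iteratively delete ``bad'' $j$-sequences and $k$-sequences (those shared by too many fibers, or appearing in fibers that are already too depleted) so that a constant fraction of the $f_1$-fibers survive with at least, say, $\Nn_1^\ast/(2\Nn_2)$ of their vertices intact. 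This is where the constant $1024$ comes from: it provides enough slack to run a union-bound/alteration argument with a definite surplus. \emph{Step 2 (pass to the fiber graph).} Form an auxiliary structure whose ``vertices'' are the surviving heavy fibers. Two fibers are adjacent iff they share a $j$-value or a $k$-value at some coordinate; because the fibers are heavy, each fiber carries $\Omega(\Nn_1^\ast/\Nn_2)$ distinct $j$-sequences and $k$-sequences, and one checks the analogue of the ``can't agree on two coordinates'' fact lifts appropriately so that Theorem \ref{th_1}'s hypotheses are met by this fiber-level instance with parameters $T_1$ (number of fibers) and $\Nn_2,\Nn_3$ playing the role of the fiber-multiplicities. \emph{Step 3 (apply Theorem \ref{th_1} at the fiber level).} This yields $\Omega\!\left(T_1/\Nn_2^{\epsilon}\right)$ pairwise non-adjacent heavy fibers; removing all other fibers (again a union of legal removal operations) leaves a graph that is a disjoint union of $1$-cliques, of which $\Omega\!\left(T_1/\Nn_2^{\epsilon}\right)$ have size $\Omega\!\left(\Nn_1^\ast/\Nn_2\right)$, as claimed. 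Since Theorem \ref{th_1} only uses removal operations and Steps 1--2 only use removal operations, the whole construction is legal.

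The main obstacle will be Step 1, making the light-fiber cleanup quantitatively tight enough to survive into the $\epsilon$-loss regime. The subtlety is that deleting a $j$-sequence to reduce its over-sharing may further deplete other fibers, so the argument must be set up so that these depletions are controlled simultaneously across all three coordinates; this is precisely why the hypothesis is stated as a sum of two error terms, $\frac{\Nn_2 T_1}{\Nn_1^\ast}$ (controlling how much each $j$- or $k$-sequence is shared, relative to fiber size) and $\frac{\Nn_2}{T_1}$ (controlling the total number of problematic sequences relative to the number of fibers). I expect the cleanest route is a potential-function / probabilistic-deletion argument: independently keep each $j$-sequence and each $k$-sequence with a suitable constant probability, show the expected number of heavy surviving fibers is $\Omega(T_1)$, and then derandomize or just invoke an averaging argument to fix a good choice. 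Verifying that $1/1024$ suffices for such an argument is routine but tedious, and everything downstream (Steps 2--3) is then a clean reduction to the already-proven Theorem \ref{th_1}.
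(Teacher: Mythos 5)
Your plan has two genuine gaps, and the second one is fatal to the whole route. First, the opening move of Step 1 --- ``remove everything outside $U^\ast$'' --- is not realizable by removal operations: $U^\ast$ is \emph{not} a union of coordinate fibers. Its defining property only fixes the sizes $\Nn_i^\ast$ of the intersections of $U^\ast$ with the fibers; in the intended application ($U$ = all forms of balanced type, $U^\ast$ = forms of the exact type $[a_{ijk}]$) every $f_1$-fiber meeting $U^\ast$ contains $\Nn_1^\ast<\Nn_1$ vertices of $U^\ast$ together with $\Nn_1-\Nn_1^\ast$ vertices outside it, so no sequence of fiber removals isolates $U^\ast$. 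The paper never performs this restriction; instead, membership in $U^\ast$ is enforced only through the bookkeeping ($V^\ast$, $E''$) and through the greedy selection of the set $L$, while the removal operations themselves always act on whole fibers of $G$.

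Second, and more importantly, you delegate the heart of the theorem --- finding $\Omega(T_1/\Nn_2^{\epsilon})$ pairwise non-adjacent heavy fibers --- to ``apply Theorem \ref{th_1} at the fiber level,'' but the fiber graph is not an instance of the framework of Subsection \ref{sub_graphtheorysetting}: a super-vertex carries $\Theta(\Nn_1^\ast/\Nn_2)$ distinct second and third coordinates rather than one, there is no analogue of the set $\Lambda$ or of Fact \ref{fact}, and no multiplicities $\Nn_i$ are defined for it, so Theorem \ref{th_1} simply does not apply and no substitute argument is offered. The quantitative issue hiding here is that each heavy fiber participates in up to roughly $\Nn_1^\ast$ potential $J$/$K$-conflicts, so any greedy, alteration, or constant-probability deletion scheme (your ``keep each $j$- and $k$-sequence with constant probability'') loses far more than the allowed $\Nn_2^{\epsilon}$ factor; the only-$\Nn_2^{\epsilon}$ loss in the paper comes from Salem--Spencer hashing modulo $M\approx\Nn_2$, which confines conflicts to hash collisions, and you never invoke it. Moreover, even granting the hashing, the expectation-only bookkeeping behind Theorem \ref{th_1} cannot certify ``many $1$-cliques, \emph{each} of guaranteed size''; the paper needs the additional Chebyshev concentration statement (Lemma \ref{lemma_R}) showing that with probability more than $1/2$ at least $T_1|\Gamma|/(2M)$ fibers each retain at least $\Nn_1^\ast/(2M)$ compatible vertices, combined with Markov's bound on $|E''|$ and the modified pruning of Figure \ref{fig_alg2} that only resolves second/third-coordinate conflicts. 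The hypothesis $\frac{\Nn_2T_1}{\Nn_1^\ast}+\frac{\Nn_2}{T_1}<\frac{1}{1024}$ is exactly what makes these two probabilistic events coexist for $M<128\Nn_2$; it is not a combinatorial bound on how much sequences are shared, as your Step 1 reads it. Without an ingredient playing the role of Lemma \ref{lemma_R} and of the hashing, Steps 2--3 cannot be carried out.
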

Note that, when only removal operations are allowed, the graph obtained 
is necessary a subgraph of~$G$ induced be a subset of its vertices.
Theorem \ref{th_2} thus states that there exist 
1-cliques $U_r\subseteq U^\ast$ such 
that 
the graph 
obtained after the removal operations is the subgraph of $G$ induced by $\cup_r U_r$, 
at least $\Omega\left(T_1/ \Nn_2^{\epsilon}\right)$ of these
$U_r$'s have size $\Omega\left(\Nn_1^\ast/ \Nn_2\right)$, and
for any $r\neq r'$ there is no edge with one extremity in~$U_r$ and the other extremity
in~$U_{r'}$.
We mention that the constant $1/1024$ in the assumption of Theorem~\ref{th_2} 
is chosen only for concreteness. The same theorem actually holds even for  
weaker conditions on $\Nn^\ast_1, T_1$ and $\Nn_2$,  
but this simpler version will be sufficient for our purpose.

\subsection{Choice of the weight functions}\label{sub_weights}
Let $M$ be a large prime number that will be chosen later.
We take a Salem-Spencer set $\Gamma$ with $|\Gamma|\ge M^{1-\epsilon}$, which existence is 
guaranteed by Theorem \ref{th_Salem}.
Similarly to $\cite{Coppersmith+90}$, we take $N+2$ integers $\omega_0$, $\omega_1$,$\ldots$, $\omega_{N+1}$ uniformly at random in $\Int_M=\{0,1,\ldots,M-1\}$ 
and define three hash functions $b_1,b_2,b_3\colon [\tau]^N\to \Int_M$ as follows.
\begin{eqnarray*}
b_1(I)&=&\omega_0+\sum_{j=1}^N I_j \omega_j \bmod M\\
b_2(I)&=&\omega_{N+1}+\sum_{j=1}^N I_j \omega_j \bmod M\\
b_3(I)&=&\frac{1}{2}\times\left(\omega_0+\omega_{N+1}+\sum_{j=1}^N (\tau-I_j) \omega_j\right) \bmod M
\end{eqnarray*}

A property of these functions is that, for a fixed sequence $I\in[\tau]^N$, the value $b_i(I)$
is uniformly distributed in $\Int_M$, for each $i\in\{1,2,3\}$. Note that the term $\omega_0$ is not used in~\cite{Coppersmith+90},
but we introduce it to obtain a uniform distribution even if $I$ is the all-zero sequence.

Let us introduce the notion of a compatible vertex.
\begin{definition}
A vertex $u\in{U}$ is compatible if 
$b_i(f_i(u))\in \Gamma$ for each $i\in\{1,2,3\}$.
Let $V$ be the set of vertices in $U$ that are compatible,
and denote $V^\ast=V\cap U^\ast$. 
\end{definition}
We stress that the definition of compatibility 
(and thus the definitions of $V$ and $V^\ast$ too)
depend
of the choice of $\Gamma$ and of the weights $\omega_i$.
Using the fact that $\Gamma$ is a Salem-Spencer set  we can give the following simple but very useful characterization of compatible vertices.
\begin{lemma}\label{lemma_comp}
Let $i\in\{1,2,3\}$ and $i'\in\{1,2,3\}$ be any two distinct indexes.
Then a vertex $u\in U$ is compatible if and only if $b_i(f_i(u))=b_{i'}(f_{i'}(u))=b$ for some $b\in\Gamma$.
\end{lemma}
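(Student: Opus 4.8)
The plan is to unwind the definition of compatibility and exploit the key algebraic identity relating $b_1,b_2,b_3$, namely that $b_1(f_1(u))+b_2(f_2(u))+b_3(f_3(u))$ is, for every $u\in\Lambda$, a constant independent of $u$. Indeed, if $u=(I,J,K)$ then $b_1(I)+b_2(J)$ contributes $\omega_0+\omega_{N+1}+\sum_j(I_j+J_j)\omega_j$, while $2b_3(K)$ contributes $\omega_0+\omega_{N+1}+\sum_j(\tau-K_j)\omega_j$; since $I_j+J_j+K_j=\tau$ for all $j$ by the definition of $\Lambda$, we get $I_j+J_j=\tau-K_j$, hence $b_1(I)+b_2(J)=2b_3(K)\bmod M$. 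So for any vertex $u$, writing $\beta_i=b_i(f_i(u))$, we always have $\beta_1+\beta_2=2\beta_3\bmod M$ (and by symmetry of the relation, also $\beta_i+\beta_{i'}=2\beta_{i''}$ for any ordering, since the three values satisfy one linear relation).

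With this identity in hand the two directions are short. For the ``if'' direction, suppose $\beta_i=\beta_{i'}=b$ for some $b\in\Gamma$ and some pair of distinct indices $i,i'$. Let $i''$ be the third index. From the linear relation we obtain $2b=\beta_i+\beta_{i'}=2\beta_{i''}\bmod M$; since $M$ is an odd prime, $2$ is invertible mod $M$, so $\beta_{i''}=b\in\Gamma$. Hence $\beta_i\in\Gamma$ for all three indices, i.e.\ $u$ is compatible. For the ``only if'' direction, suppose $u$ is compatible, so $\beta_1,\beta_2,\beta_3\in\Gamma$. Applying the relation $\beta_i+\beta_{i'}=2\beta_{i''}\bmod M$ with all three values lying in the Salem-Spencer set $\Gamma$, the defining property of $\Gamma$ (no length-3 arithmetic progression) forces $\beta_i=\beta_{i'}=\beta_{i''}$. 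In particular $\beta_i=\beta_{i'}=:b$ with $b\in\Gamma$, as required. (Note this actually shows the stronger statement that a compatible vertex satisfies $b_1(f_1(u))=b_2(f_2(u))=b_3(f_3(u))$.)

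The only mild subtlety is making sure the Salem-Spencer property is being applied correctly: Theorem \ref{th_Salem} concerns triples $b_1,b_2,b_3\in B$ with $b_1+b_2=2b_3\bmod M$, and our relation has exactly this shape once we name the relevant index as the ``middle'' term, so there is no issue. I expect no real obstacle here; the one thing to be careful about is that $M$ is prime (hence odd for $M>2$, so $2$ is a unit mod $M$), which is already assumed in Subsection \ref{sub_weights}, and that the ordering of indices in the arithmetic-progression relation is handled symmetrically — both of which are routine.
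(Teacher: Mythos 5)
Your proof is correct and follows essentially the same route as the paper: derive the identity $b_1(f_1(u))+b_2(f_2(u))=2\,b_3(f_3(u)) \bmod M$ from $I_\ell+J_\ell+K_\ell=\tau$, use the primality of $M$ for the ``if'' direction, and invoke the Salem--Spencer property of $\Gamma$ for the ``only if'' direction. One small caveat: your parenthetical claim that ``by symmetry'' $\beta_i+\beta_{i'}=2\beta_{i''}$ holds for every ordering of the indices is false in general (e.g.\ $\beta_1+\beta_2=2\beta_3$ does not imply $\beta_1+\beta_3=2\beta_2$), but this is harmless here, since in each of the three cases the single relation $\beta_1+\beta_2=2\beta_3$ (with $2$ invertible mod the odd prime $M$) already forces the remaining value to equal $b$, and the ``only if'' direction only needs the Salem--Spencer property applied to that one relation.
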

\begin{proof}
Let us take a vertex $u\in U$, and write $f_1(u)=I$, $f_2(u)=J$ and $f_3(u)=K$.

Note that, since $I_\ell+J_\ell+K_\ell=\tau$ for any index $\ell\in\{1,\ldots,N\}$, the equality
$b_1(I)+b_2(J)-2 b_3(K)=0\bmod M$
always holds (i.e., holds for all choices of the weights $\omega_j$).

Suppose that $b_i(f_i(u))=b_{i'}(f_{i'}(u))=b$ for some $b\in\Gamma$, where $i$ and $i'$
are distinct.
For instance, suppose that $i=1$ and $i'=2$. Then, since $M$ is a 
prime, the above property implies that $b_3(K)=b$, which means that $u$ is compatible.
The same conclusion is true for the other choices of $i$ and $i'$.

Now suppose that $u$ is compatible.
From the definition of a Spencer-Salem set, we can conclude that there exists 
an element $b\in \Gamma$ such that $b_1(I)=b_2(J)=b_3(K)=b$.
\end{proof}

\subsection{The first pruning}\label{sub_first}
Similarly to \cite{Coppersmith+90}, 
the first pruning simply eliminates all the nodes $u\in U$ that are not compatible.
Note that this can be done using removal operations: for each $i\in\{1,2,3\}$, we remove all the vertices 
$w$ such that $f_i(w)\notin b_i^{-1}(\Gamma)$. 
The vertices remaining are precisely those in $V$.
Among those remaining vertices, the vertices in $U^\ast$
are precisely those in $V^\ast=V\cap U^\ast$. 

We now evaluate the expectation of $|V^\ast|$.
The proof is very similar to what was shown in \cite{Coppersmith+90}
for the case $U=U^\ast$, and to what was shown in 
\cite{Stothers10,WilliamsSTOC12} for $\Nn^\ast_1=\Nn^\ast_2=\Nn^\ast_3$.
\begin{lemma}\label{lemma_Sast}
$\E[|V^\ast|]=\frac{T_1\Nn_1^\ast|\Gamma|}{M^2}$.
\end{lemma}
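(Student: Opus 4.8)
The plan is to compute $\E[|V^\ast|]$ by linearity of expectation, summing over all vertices $u \in U^\ast$ the probability that $u$ is compatible. Concretely, I would write
$$
\E[|V^\ast|] = \sum_{u \in U^\ast} \Pr[u \text{ is compatible}],
$$
so the whole computation reduces to evaluating a single such probability for a fixed vertex $u \in U^\ast$, and then counting how many vertices lie in $U^\ast$.

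First I would fix a vertex $u \in U^\ast$ with $f_1(u)=I$, $f_2(u)=J$, $f_3(u)=K$, and use Lemma~\ref{lemma_comp}: $u$ is compatible if and only if $b_1(I)=b_2(J)=b_3(K)=b$ for some $b\in\Gamma$. Since the identity $b_1(I)+b_2(J)-2b_3(K)=0 \bmod M$ holds for all choices of the weights (as noted in the proof of Lemma~\ref{lemma_comp}) and $M$ is prime, compatibility is equivalent to the single pair of conditions $b_1(I)\in\Gamma$ and $b_1(I)=b_2(J)$; the value $b_3(K)$ is then automatically equal and in $\Gamma$. Now $b_1(I)=\omega_0+\sum_j I_j\omega_j$ and $b_2(J)=\omega_{N+1}+\sum_j J_j\omega_j$ differ only in that $b_1$ uses $\omega_0$ and $b_2$ uses $\omega_{N+1}$, which are two independent uniform variables in $\Int_M$ not appearing in the other hash function; hence, conditioned on $\omega_1,\dots,\omega_N$, the pair $(b_1(I),b_2(J))$ is uniformly distributed over $\Int_M\times\Int_M$. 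Therefore $\Pr[b_1(I)\in\Gamma \text{ and } b_1(I)=b_2(J)] = \sum_{b\in\Gamma}\Pr[b_1(I)=b]\Pr[b_2(J)=b] = |\Gamma|/M^2$, independently of which vertex $u$ we chose.

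Next I would count $|U^\ast|$. By hypothesis $|f_1(U^\ast)|=T_1$, and for each $I$ with $f_1^{-1}(I)\cap U^\ast$ nonempty this fiber has size exactly $\Nn_1^\ast$ (this is the content of the second bullet defining $U^\ast$, combined with the defining property of $U$). Hence $|U^\ast| = T_1\Nn_1^\ast$. Combining, $\E[|V^\ast|] = T_1\Nn_1^\ast\cdot|\Gamma|/M^2$, as claimed.

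I do not expect a serious obstacle here; this lemma is a routine first-moment computation and the only point requiring a little care is the joint-uniformity argument for $(b_1(I),b_2(J))$ — one must be explicit that the role of the distinct ``offset'' weights $\omega_0$ and $\omega_{N+1}$ is precisely to make these two hash values independent and uniform even when $I$ or $J$ is the all-zero sequence (which is the reason $\omega_0$ was introduced, as remarked in the text). An alternative, essentially equivalent route is to condition on $b_3(K)$ being a fixed element $b\in\Gamma$ and use the uniformity of each $b_i$ together with the linear relation; but the cleanest exposition uses Lemma~\ref{lemma_comp} to reduce to two of the three conditions and then exploits the two free weights directly.
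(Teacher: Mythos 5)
Your proof is correct and follows essentially the same route as the paper: both reduce compatibility via Lemma~\ref{lemma_comp} to the two conditions on $b_1(f_1(u))$ and $b_2(f_2(u))$, use the independence coming from the offsets $\omega_0$ and $\omega_{N+1}$ to get the probability $|\Gamma|/M^2$ per vertex, and multiply by $|U^\ast|=T_1\Nn_1^\ast$ via linearity of expectation. Your explicit joint-uniformity remark is just a slightly more detailed statement of the paper's observation that the two events are independent even when $f_1(u)=f_2(u)$.
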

\begin{proof}
We use Lemma \ref{lemma_comp} with $i=1$ and $i'=2$. Remember that $|U^\ast|=T_1\Nn_1^\ast$.
For each vertex $u\in U^\ast$ and each value $b\in\Int_M$, 
the probability that $b_1(f_1(u))=b_2(f_2(u))=b$ is $1/M^2$. Note that the two events 
$b_1(f_1(u))=b$ and $b_2(f_2(u))=b$ are independent even when $f_1(u)=f_2(u)$ 
due to the terms $\omega_0$ and $\omega_{N+1}$ in the hash functions.
\end{proof}

Let $E$ be the edge set of the subgraph of $G$ induced by $V$: it consists of all
edges connecting 
two distinct vertices $u$ and $v$ in $V$ such that $f_i(u)= f_i(v)$ for some $i\in\{1,2,3\}$. 
Let $E'\subseteq E$ be the subset of edges in $E$ with (at least) 
one extremity in $V^\ast$.
Let $E''\subseteq E'$ be the subset of edges in $E'$ connecting two vertices $u$ and $v$
such that $f_1(u)\neq f_1(v)$ (which means that either $f_2(u)= f_2(v)$ or $f_3(u)= f_3(v)$).
The following lemma gives upper bounds on the expectations of $|E'|$ and $|E''|$. 
The proof can be considered as a generalization of similar statements in 
\cite{Coppersmith+90,Stothers10,WilliamsSTOC12}.

\begin{lemma}\label{lemma_W}
$\E[|E'|]\le \frac{T_1\Nn_1^\ast (\Nn_1+\Nn_2+\Nn_3-3)|\Gamma|}{M^3}$ and $\:\E[|E''|]\le \frac{T_1\Nn_1^\ast (\Nn_2+\Nn_3-2)|\Gamma|}{M^3}$.
\end{lemma}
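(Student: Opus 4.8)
The plan is to bound $\E[|E'|]$ and $\E[|E''|]$ by summing, over all candidate edges of the induced subgraph of $G$ on $V$, the probability that both endpoints are compatible and that the edge is of the required type. First I would recall from Fact \ref{fact} that every edge of $G$ corresponds to a \emph{unique} index $i\in\{1,2,3\}$ with $f_i(u)=f_i(v)$, so the edge set of $G$ restricted to $U$ decomposes into three disjoint classes according to this index. For $E'$ I partition the edges incident to $V^\ast$ by this index, and for each class I count the potential edges before pruning: fixing a vertex $u\in U^\ast$ with $f_i(u)=I$, the number of other vertices $v\in U$ sharing coordinate $i$ is exactly $\Nn_i-1$ (since $|f_i^{-1}(I)\cap U|=\Nn_i$ when it is nonzero, and $u$ itself is excluded). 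Summing over the $|U^\ast|=T_1\Nn_1^\ast$ choices of $u$ and over $i\in\{1,2,3\}$ gives at most $T_1\Nn_1^\ast(\Nn_1+\Nn_2+\Nn_3-3)$ candidate edges with at least one endpoint in $U^\ast$ — the factor $T_1\Nn_1^\ast$ rather than $|U^\ast|^{\text{something}}$ comes from insisting one endpoint lie in $U^\ast$ and anchoring the count there. For $E''$ I do the same but only sum over $i\in\{2,3\}$, giving at most $T_1\Nn_1^\ast(\Nn_2+\Nn_3-2)$ candidate edges.

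Next I would estimate, for a single such candidate edge $\{u,v\}$ (with, say, $f_1(u)=f_1(v)$ for the $E''$ analysis, or an arbitrary shared index for $E'$), the probability that both $u$ and $v$ survive the first pruning, i.e., that both are compatible. Here I invoke Lemma \ref{lemma_comp}: $u$ is compatible iff $b_i(f_i(u))=b_{i'}(f_{i'}(u))=b$ for some $b\in\Gamma$, for any fixed pair of distinct indices $i,i'$. The key point is to choose the pair of indices for each of $u,v$ so that the events become almost independent modulo the one shared coordinate. If $u$ and $v$ share coordinate $s$, then conditioning on $f_s(u)=f_s(v)$, the event "$u$ compatible'' can be tested using indices $s$ and one other, and "$v$ compatible'' using indices $s$ and a third; the common constraint is that $b_s(f_s(u))=b_s(f_s(v))$ must land in $\Gamma$, while the two remaining hash values are forced to equal it. A careful accounting (exactly as in \cite{Coppersmith+90,Stothers10,WilliamsSTOC12}) shows the joint survival probability of the edge is $|\Gamma|/M^3$: one free choice of $b\in\Gamma$ for the shared value, and then two more independent hash-equality constraints each contributing a factor $1/M$. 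The $\omega_0$ and $\omega_{N+1}$ terms guarantee the relevant hash values are genuinely uniform and the needed independences hold even when the two vertices additionally agree on some other coordinate or when a coordinate is the all-zero sequence. Multiplying the edge count by $|\Gamma|/M^3$ and using linearity of expectation yields the two claimed bounds.

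The step I expect to be the main obstacle is the probability computation for an edge, specifically making rigorous that the survival probability is \emph{exactly} (or at most) $|\Gamma|/M^3$ rather than something larger. The subtlety is that $u$ and $v$ might share coordinate $s$ but the "other'' coordinates used to certify compatibility via Lemma \ref{lemma_comp} could interact — e.g. $f_t(u)$ and $f_t(v)$ could also coincide for some $t\neq s$, which by Fact \ref{fact} would force $u=v$, a contradiction, so in fact they differ; but one must still check that the linear forms $b_i(f_i(u))$ and $b_i(f_i(v))$ appearing in the certificates are jointly uniform with the right correlation structure over the random $\omega_j$'s. I would handle this by writing each $b_i(f_i(u))$ explicitly as an affine function of the $\omega_j$'s, observing that once we fix the shared value $b_s(f_s(u))=b_s(f_s(v))=b\in\Gamma$ (which by the Salem-Spencer property pins down all three hash values of each vertex to $b$), the remaining randomness forces two further independent linear equations, each satisfied with probability $1/M$; the presence of $\omega_0,\omega_{N+1}$ as "fresh'' variables not appearing symmetrically ensures no degeneracy. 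For $E''$ the restriction to $i\in\{2,3\}$ simply drops the $i=1$ class from the edge count, and the identical per-edge probability bound $|\Gamma|/M^3$ applies, giving the second inequality.
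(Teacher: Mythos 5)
Your proposal is correct and follows essentially the same route as the paper: anchor the count at the $T_1\Nn_1^\ast$ vertices of $U^\ast$, note each has exactly $\Nn_i-1$ partners sharing coordinate $i$ (summed over $i\in\{1,2,3\}$ for $E'$, over $i\in\{2,3\}$ for $E''$), and use Lemma \ref{lemma_comp} to reduce joint compatibility of a pair sharing coordinate $i$ to the event $b_i(f_i(u))=b_{i'}(f_{i'}(u))=b_{i'}(f_{i'}(v))=b$ for some $b\in\Gamma$, which has probability $|\Gamma|/M^3$ by the mutual independence of these three hash values (guaranteed by $f_{i'}(u)\neq f_{i'}(v)$, via Fact \ref{fact}, and the $\omega_0,\omega_{N+1}$ terms). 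The independence issue you flag as the main obstacle is exactly the point the paper settles with this same observation, so no gap remains.
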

\begin{proof}
We show that, for any index $i\in \{1,2,3\}$, the expected number of ordered pairs $(u,v)$ with 
$u\in V^\ast$ and $v\in V\backslash\{u\}$ such that $f_i(u)=f_i(v)$ is exactly 
$$\frac{T_1\Nn_1^\ast (\Nn_i-1)|\Gamma|}{M^3}.$$
The expectation of the number of edges (i.e., unordered pairs) is necessarily smaller.

The factor $T_1\Nn_1^\ast$ counts the number of vertices in $U^\ast$.
For any vertex $u\in U^\ast$,  
there are exactly $\Nn_i-1$ vertices $v\in U\backslash\{u\}$ such that $f_i(v)=f_i(u)$. 

Let $u$ be a vertex in $U^\ast$ and $v$ be a vertex in $U\backslash\{u\}$
such that $f_i(v)=f_i(u)$.
Take another index
$i'\in\{1,2,3\}\backslash\{i\}$ arbitrarily.
From Lemma \ref{lemma_comp}, both $u$ and $v$ are in $V$ if and only if 
$$b_i(f_i(u))=b_{i'}(f_{i'}(u))=b_{i'}(f_{i'}(v))=b$$
for some element $b\in \Gamma$.
This happens with probability $|\Gamma|/M^3$ since the random variables
$b_i(f_i(u))$, $b_{i'}(f_{i'}(u))$, and $b_{i'}(f_{i'}(v))$ are mutually independent.
From the linearity of the expectation, the expected number of ordered pairs is 
$T_1\Nn_1^\ast(\Nn_i-1)$ times this probability.
\end{proof}

Lemmas \ref{lemma_Sast} and \ref{lemma_W} 
focused on expected values and
were proven, essentially, by the linearity of the expectation. 
This was sufficient for the applications to square matrix multiplication
presented in~\cite{Coppersmith+90,Stothers10,WilliamsSTOC12}, and 
this will be sufficient for proving Theorem \ref{th_1} as well.
Since, in order to prove Theorem \ref{th_2}, we will need 
a more precise analysis of the behavior of the random variables
considered, we now prove the following lemma, which gives a lower bound on the 
probability that the subgraph induced by~$V^\ast$ has many large $1$-cliques.
\begin{lemma}\label{lemma_R}
With probability (on the choice of the weights $\omega_i$) at least $1-\frac{4MT_1}{\Nn_1^\ast}-\frac{4M}{T_1|\Gamma|}$, 
there exists
a 
set $R\subseteq [\tau]^N$  satisfying the following two 
conditions:
\begin{itemize}
\item
$|R|\ge \frac{T_1|\Gamma|}{2M}$;
\item
for any
$I\in R$, there exist at least $\frac{\Nn_1^\ast}{2M}$ vertices $u\in V^\ast$ such that $f_1(u)=I$.
\end{itemize}
\end{lemma}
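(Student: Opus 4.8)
The plan is to set up the "right" events, compute their expectations via Lemmas \ref{lemma_Sast} and \ref{lemma_W} (already available), and then apply Markov's inequality twice to control the bad contributions. For each sequence $I\in f_1(U^\ast)$ call $I$ \emph{rich} if at least $\frac{\Nn_1^\ast}{2M}$ vertices $u\in V^\ast$ satisfy $f_1(u)=I$, and define $R$ to be the set of rich sequences. There are $T_1$ candidate values $I$ with $f_1(U^\ast)$-preimage of size exactly $\Nn_1^\ast$. The number of vertices of $V^\ast$ lying over non-rich $I$ is at most $T_1\cdot\frac{\Nn_1^\ast}{2M}=\frac{T_1\Nn_1^\ast}{2M}$, so whenever $|V^\ast|\ge\frac{T_1\Nn_1^\ast}{M}$ we get at least $\frac{T_1\Nn_1^\ast}{2M}$ vertices over rich $I$'s, hence $|R|\ge\frac{T_1\Nn_1^\ast}{2M}\big/\Nn_1^\ast=\frac{T_1}{2M}$. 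Wait — I actually want $|R|\ge\frac{T_1|\Gamma|}{2M}$, so I should instead lower-bound $|V^\ast|$ by its expectation $\frac{T_1\Nn_1^\ast|\Gamma|}{M^2}$ (Lemma \ref{lemma_Sast}): if $|V^\ast|\ge\frac{T_1\Nn_1^\ast|\Gamma|}{2M^2}$ then, since non-rich $I$'s carry at most $\frac{T_1\Nn_1^\ast}{2M}\le\frac{T_1\Nn_1^\ast|\Gamma|}{4M^2}$ vertices (using $|\Gamma|\ge M^{1-\epsilon}$ is not quite enough — rather use that we will ultimately choose $M$ so that $|\Gamma|\ge 2M^{1/2}$ or simply carry $|\Gamma|/M$ as the relevant small factor), the rich $I$'s carry at least $\frac{T_1\Nn_1^\ast|\Gamma|}{4M^2}$ vertices, giving $|R|\ge\frac{T_1|\Gamma|}{4M^2}$. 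I will reconcile the constants so that the final bound reads $|R|\ge\frac{T_1|\Gamma|}{2M}$ as stated; this amounts to choosing the rich-threshold as $\frac{\Nn_1^\ast}{2M}$ against the first-moment lower bound and bookkeeping the constant-$2$ slack carefully.

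Concretely, I would carry out the argument in two steps. \emph{Step 1 (enough compatible starred vertices):} By Lemma \ref{lemma_Sast}, $\E[|V^\ast|]=\frac{T_1\Nn_1^\ast|\Gamma|}{M^2}$. Since $|V^\ast|\ge 0$, a reverse-Markov argument (or a direct variance-free bound using that $|V^\ast|$ is a sum of indicator variables bounded by $|U^\ast|=T_1\Nn_1^\ast$) gives $\Pr[|V^\ast|\le\frac12\E[|V^\ast|]]\le 1-\frac{\E[|V^\ast|]}{2(|U^\ast|-\frac12\E[|V^\ast|])}$; more simply, I expect to use Markov on the complement count $|U^\ast|-|V^\ast|$, whose expectation is $T_1\Nn_1^\ast(1-|\Gamma|/M^2)$, though that ratio is too close to $1$ to be useful. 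The clean route is: apply Markov to $|E'|$ (see Step 2) to kill the bad vertices rather than trying to reverse-Markov $|V^\ast|$ directly. So Step 1 will instead be: let $V^{\ast\ast}\subseteq V^\ast$ be the starred compatible vertices that are \emph{isolated within $V^\ast$ along coordinate $1$}, i.e. have no $V$-neighbour via $f_2$ or $f_3$ and no other $V^\ast$-vertex over the same $f_1$-value beyond the clique. Hmm, that is the wrong notion here since we \emph{want} many vertices over the same $f_1$-value. Let me restate: the only obstruction to a rich $I$ having its full complement is vertices removed by non-compatibility, which is already accounted for in $\E[|V^\ast|]$; there is no further pruning in this lemma. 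Hence $|R|$ is determined purely by how $|V^\ast|$ distributes over $f_1$-fibers, and the only tool needed is a lower bound on $|V^\ast|$ together with the deterministic fiber-counting above. So Step 1 reduces to: show $\Pr\!\big[|V^\ast|<\tfrac12\E[|V^\ast|]\big]$ is small.

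\emph{Step 2 (second moment / Markov on edges):} To lower-bound $|V^\ast|$ with the needed failure probability $\frac{4MT_1}{\Nn_1^\ast}+\frac{4M}{T_1|\Gamma|}$, I expect the genuine mechanism is not a bound on $|V^\ast|$ alone but the following: a sequence $I$ fails to be rich only if many of the $\Nn_1^\ast$ starred vertices over $I$ are non-compatible; the expected number of non-compatible starred vertices over a fixed $I$ is $\Nn_1^\ast(1-|\Gamma|/M^2)$, which is useless, \emph{unless} we instead count, per $I$, the number of \emph{compatible} starred vertices and use that its expectation is $\frac{\Nn_1^\ast|\Gamma|}{M^2}$ and apply Markov/Chebyshev fiber-by-fiber, then union bound over the $T_1$ fibers — this is exactly where the $T_1$ factor and the $M/(T_1|\Gamma|)$ term in the failure probability come from. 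The per-fiber variance is controlled by the pairwise-collision count, i.e. by $\E[|E''|]$ restricted to that fiber (edges via $f_2$ or $f_3$ among the $\Nn_1^\ast$ vertices over $I$), for which Lemma \ref{lemma_W} gives the aggregate bound $\frac{T_1\Nn_1^\ast(\Nn_2+\Nn_3-2)|\Gamma|}{M^3}$. So the plan is: (i) for each $I$, let $X_I=|\{u\in V^\ast: f_1(u)=I\}|$; compute $\E[X_I]=\frac{\Nn_1^\ast|\Gamma|}{M^2}$ and $\mathrm{Var}(X_I)\le\E[X_I]+(\text{expected within-fiber }f_2/f_3\text{ collisions})$; (ii) by Chebyshev, $\Pr[X_I<\tfrac12\E[X_I]]\le O\!\big(\frac{M^2}{\Nn_1^\ast|\Gamma|}+\frac{\Nn_2+\Nn_3}{\Nn_1^\ast}\big)$; sum over $T_1$ fibers and over the event that too few fibers survive via a final Markov step to land on $\frac{4MT_1}{\Nn_1^\ast}+\frac{4M}{T_1|\Gamma|}$. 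The main obstacle I anticipate is exactly this bookkeeping: threading the constants so the two error terms come out as claimed, and in particular isolating the within-fiber collision contribution from the aggregate Lemma \ref{lemma_W} bound (one needs that these collisions are spread roughly evenly over fibers, or else a further Markov over fibers). Everything else — linearity of expectation, the deterministic fiber-counting $|U^\ast|=T_1\Nn_1^\ast$, and the rich/non-rich dichotomy — is routine.
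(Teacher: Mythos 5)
There is a genuine gap, and it is not a matter of ``bookkeeping the constant-$2$ slack.'' Both mechanisms you propose fail for the same structural reason. If you set $X_I=|\{u\in V^\ast: f_1(u)=I\}|$, then $\E[X_I]=\Nn_1^\ast|\Gamma|/M^2$, which is \emph{smaller} than the richness threshold $\Nn_1^\ast/(2M)$ by a factor $M/(2|\Gamma|)$; worse, $X_I=0$ whenever $b_1(I)\notin\Gamma$, an event of probability $1-|\Gamma|/M$, so $X_I$ is not concentrated at all and ``every fiber is rich'' is simply false --- the lemma only claims that about $T_1|\Gamma|/(2M)$ fibers (essentially those with $b_1(I)\in\Gamma$) are rich. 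No per-fiber Chebyshev around the mean $\Nn_1^\ast|\Gamma|/M^2$, with or without the within-fiber collision terms you want to extract from Lemma \ref{lemma_W}, can certify a threshold lying above that mean. Your fallback route (lower-bound $|V^\ast|$ by half its expectation from Lemma \ref{lemma_Sast} and count fibers deterministically) also collapses: the non-rich fibers can jointly hold up to $T_1\Nn_1^\ast/(2M)$ vertices, which exceeds $\E[|V^\ast|]=T_1\Nn_1^\ast|\Gamma|/M^2$ since $|\Gamma|\ll M$, so knowing $|V^\ast|\ge\frac12\E[|V^\ast|]$ yields no lower bound whatsoever on the number of rich fibers. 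You noticed this tension mid-proposal but never resolved it, and the claimed failure probability $\frac{4MT_1}{\Nn_1^\ast}+\frac{4M}{T_1|\Gamma|}$ does not emerge from your sketch.

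The missing idea is a decoupling of the two sources of randomness. For each $I\in f_1(U^\ast)$ consider $X_I=|\{u\in U^\ast: f_1(u)=I,\ b_2(f_2(u))=b_1(I)\}|$, with no requirement that $b_1(I)\in\Gamma$. Each event $b_2(f_2(u))=b_1(I)$ has probability exactly $1/M$, so $\E[X_I]=\Nn_1^\ast/M$, and for distinct $u,v$ in the same fiber the events are pairwise independent (their $f_2$-values differ by Fact \ref{fact}), giving $\var[X_I]\le\Nn_1^\ast/M$; Chebyshev plus a union bound over the $T_1$ fibers then yields $X_I\ge\Nn_1^\ast/(2M)$ for all $I$ except with probability $4MT_1/\Nn_1^\ast$. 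Separately, $R=\{I\in f_1(U^\ast): b_1(I)\in\Gamma\}$ has mean size $T_1|\Gamma|/M$ with pairwise-independent indicators, so $|R|\ge T_1|\Gamma|/(2M)$ except with probability $4M/(T_1|\Gamma|)$. Finally, for $I\in R$ the vertices counted by $X_I$ satisfy $b_1(f_1(u))=b_2(f_2(u))\in\Gamma$ and are therefore compatible by Lemma \ref{lemma_comp}, i.e.\ they lie in $V^\ast$. This is why the per-fiber mean is $\Nn_1^\ast/M$ rather than $\Nn_1^\ast|\Gamma|/M^2$, why the two error terms have the stated form, and why no edge-count input from Lemma \ref{lemma_W} is needed at all.
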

\begin{proof}
Let us consider the set $f_1(U^\ast)=\{f_1(u)\:|\: u\in U^\ast\}$. For each element $I\in f_1(U^\ast)$, define the
set $$S_I=\{u\in U^\ast\:|\: f_1(u)=I\}.$$ Note that $|f_1(U^\ast)|=T_1$, and $|S_I|=\Nn_1^\ast$ for each $I\in f_1(U^\ast)$.

Fix an element $I\in f_1(U^\ast)$ and define
$$X_I=|\{u\in S_I\:|\: b_2(f_2(u))=b_1(I)\}|.$$ This random variable
represents the number of vertices from $S_I$ that are
mapped by $b_2\circ f_2$ into $b_1(I)$. 
For any vertex $u=(I,J,K)\in S_I$, we have the equivalence
$$
b_2(f_2(u))=b_1(I)\Longleftrightarrow \sum_{i=1}^N(J_i-I_i)\omega_{i}=\omega_{0}-\omega_{N+1}.
$$
Thus the probability of the event $b_2(f_2(u))=b_1(I)$ is $1/M$.
Thus $\E[X_I]=\frac{\Nn_1^\ast}{M}$.
Note that, for any 
two distinct elements $u$ and $v$ in $S_I$, the two events $b_2(f_2(u))=b_1(I)$ and 
$b_2(f_2(v))=b_1(I)$ are independent. 
From this pairwise independence, $\var[X_I]=\frac{\Nn_1^\ast}{M}(1-\frac{1}{M})$.
This gives
\begin{eqnarray*}
\Pr\left[\left|X_I-\frac{\Nn_1^\ast}{M}\right|\ge \frac{\Nn_1^\ast}{2M}\right]\le \Pr\left[\left|X_I-\frac{\Nn_1^\ast}{M}\right|\ge 
 \sqrt{\frac{\Nn_1^\ast}{4M}}\cdot \sqrt{\var(X_I)}\right]\le \frac{4M}{\Nn_1^\ast},
\end{eqnarray*}
where the second inequality is obtained by Chebyshev's inequality.


By the union bound we can conclude that 
$$
\Pr\left[|X_I|\ge \frac{\Nn_1^\ast}{2M} \textrm{ for all }I\in f_1(U^\ast)\right]\ge 1-\frac{4MT_1}{\Nn_1^\ast}.
$$

Define $R=\{I\in f_1(U^\ast) \:|\: b_1(I)\in \Gamma\}$.
For each element $I\in f_1(U^\ast)$, the variable $b_1(I)$ is distributed uniformly at random 
in $\Int_M$. Moreover, the variables $b_1(I)$'s are pairwise independent. 
Thus, by Chebyshev's inequality, we obtain
$$
\Pr\left[\left|R\right|\ge \frac{T_1|\Gamma|}{2M}\right]\ge 1-\frac{4M}{T_1|\Gamma|}.
$$

From the union bound we can conclude that, with probability at least $1-\frac{4MT_1}{\Nn_1^\ast}-\frac{4M}{T_1|\Gamma|}$,
the inequality $\left|R\right|\ge T_1|\Gamma|/(2M)$ holds and simultaneously,
for each element $I\in R$, 
there exist at least $\Nn_1^\ast/(2M)$ vertices $u\in U^\ast$ such that $f_1(u)=I$
and $b_2(f_2(u))=b_1(I)$.
These vertices are in $V^\ast$ by Lemma~\ref{lemma_comp}.
%
%
\end{proof}

\subsection{The second pruning: first version and proof of Theorem \ref{th_1}}\label{sub_second1}
In this subsection $M$ is an arbitrary prime number such that $2(\Nn_1+\Nn_2+\Nn_3)<M<4(\Nn_1+\Nn_2+\Nn_3)$.

The first pruning has transformed the graph $G$ into the subgraph induced by $V$.
The second pruning, similarly to \cite{Coppersmith+90}, will further modify this subgraph by removing vertices 
in order to obtain a subgraph consisting of isolated vertices from $V^\ast$ (i.e., an edgeless graph).
This is done by constructing greedily a set $L\subseteq V^\ast$ of isolated vertices. Initially
$L=\emptyset$ and, at each iteration, either one remaining vertex in $V^\ast\backslash L$ will be 
added to~$L$ or several vertices in $V$ will be removed. This will be repeated until there is no
remaining vertex in $V^\ast\backslash L$.
Finally, all the remaining vertices not in $L$ will be removed.
The detailed procedure is described in Figure \ref{fig_alg1}, where $\overline{V}$ 
represents the set of remaining vertices (initially $\overline{V}=V$).
Note that the procedure slightly differs from what was done in
\cite{Coppersmith+90,Stothers10,WilliamsSTOC12} since
we need to take in consideration the asymmetry of the problem.

\begin{figure}
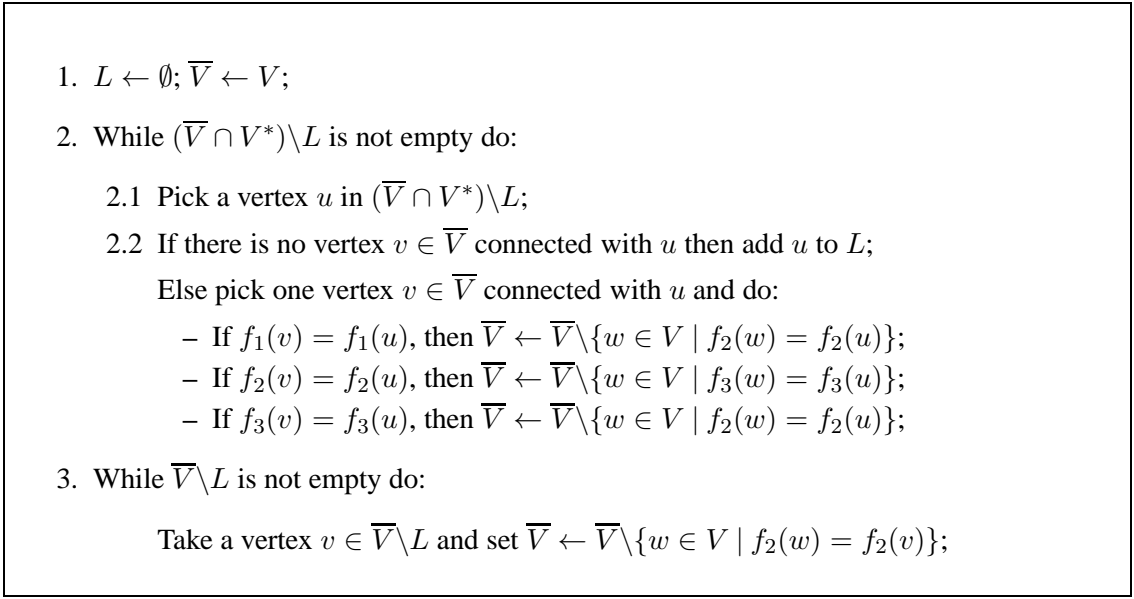

\begin{center}
    \begin{tabular}[h!]{|p{0.9\textwidth}|}
    \hline
\begin{enumerate}    
\item
$L\gets\emptyset$; $\overline{V}\gets V$;  
\item
While $(\overline{V}\cap V^\ast)\backslash L$ is not empty do: 
\begin{itemize}
\item[2.1]
Pick a vertex $u$ in $(\overline{V}\cap V^\ast)\backslash L$;
\item[2.2]
If there is no vertex $v\in \overline{V}$ connected with $u$  then add $u$ to $L$;
\item[]
Else pick one vertex $v\in \overline{V}$ connected with $u$ and do:
\begin{itemize}
\item
If $f_1(v)=f_1(u)$, then $\overline{V}\gets \overline{V}\backslash \{w\in V\:|\:f_2(w)=f_2(u)\}$;
\item
If $f_2(v)=f_2(u)$, then $\overline{V}\gets \overline{V}\backslash \{w\in V\:|\:f_3(w)=f_3(u)\}$;
\item
If $f_3(v)=f_3(u)$, then $\overline{V}\gets \overline{V}\backslash \{w\in V\:|\:f_2(w)=f_2(u)\}$;
\end{itemize}
\end{itemize}
\item
While $\overline{V}\backslash L$ is not empty do: 
\begin{itemize}
\item[]
Take a vertex $v\in \overline{V}\backslash L$ and set $\overline{V}\gets \overline{V}\backslash \{w\in V\:|\:f_2(w)=f_2(v)\}$;
\end{itemize}
\end{enumerate}
        \\\hline
    \end{tabular}\caption{The second pruning (first version) \label{fig_alg1}}
\end{center}
\end{figure}

Let $\overline{V_f}$ denote the contents of $\overline{V}$ at the end of the procedure.
The following proposition, shown using the same ideas as in \cite{Coppersmith+90},
 shows that what we obtain is a large set of isolated vertices from $V^\ast$.

\begin{proposition}\label{prop_secondpruningv1}
The subgraph of $G$ induced by $\overline{V_f}$ is an edgeless graph.
Moreover, $\overline{V_f}\subseteq V^\ast$ and the expectation of $|\overline{V_f}|$ (over the choices of $\omega_j$) is 
$$\Omega\left(\frac{T_1 \Nn_1^\ast}{(\Nn_1+\Nn_2+\Nn_3)^{1+\epsilon}}\right).$$
\end{proposition}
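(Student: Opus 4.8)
The plan is to analyze the greedy procedure of Figure~\ref{fig_alg1} in three stages, mirroring the structure of \cite{Coppersmith+90} but carrying along the two sources of asymmetry ($U^\ast\subsetneq U$ and $\Nn_1,\Nn_2,\Nn_3$ possibly distinct). First I would establish the two structural claims that do not depend on the random weights: that $\overline{V_f}\subseteq V^\ast$, and that the subgraph of $G$ induced by $\overline{V_f}$ is edgeless. Both follow by inspection of the algorithm. The set $L$ is built only from vertices of $(\overline V\cap V^\ast)\setminus L$, so $L\subseteq V^\ast$; step~3 then removes every remaining vertex not in $L$, so at termination $\overline{V_f}=L\subseteq V^\ast$. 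For edgelessness, I would argue that a vertex $u$ is placed in $L$ in step~2.2 only when it currently has no neighbor in $\overline V$, and that no vertex later added to $\overline{V_f}$ (equivalently, never removed) can be adjacent to $u$: if some $v$ adjacent to $u$ survived, then at the iteration handling $u$ the vertex $v$ was either already gone or was removed by the branch keyed on the coordinate $i$ with $f_i(u)=f_i(v)$ --- here one uses Fact~\ref{fact} to see that this coordinate is unique, and checks that in each of the three branches the removal set $\{w: f_2(w)=f_2(u)\}$ or $\{w: f_3(w)=f_3(u)\}$ indeed contains $v$. (The choice of which coordinate to clean out in each branch is exactly what makes this work while keeping the count manageable; this is where the asymmetry forces a slightly different bookkeeping than in \cite{Coppersmith+90}.)

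Next I would bound $\E[|\overline{V_f}|]$ from below. The key observation is the standard potential/charging argument: every iteration of the loop in step~2 either adds one vertex to $L$, or removes all vertices sharing a fixed value in coordinate $2$ or coordinate $3$ of some vertex $u\in V^\ast$; and step~3 removes, per iteration, all vertices sharing a fixed value in coordinate~$2$. Each ``destructive'' iteration therefore deletes at most $\max(\Nn_2,\Nn_3)\le \Nn_1+\Nn_2+\Nn_3$ vertices of $V$ --- more to the point, at most $\Nn_2$ (resp.\ $\Nn_3$) vertices of $V^\ast$ that lie in a given $f_2$- or $f_3$-fiber, but since we are not assuming $\Nn_i^\ast$ are comparable it is cleanest to just bound deletions of \emph{any} $V$-vertices. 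Actually the sharper route, and the one that yields the $(\Nn_1+\Nn_2+\Nn_3)^{1+\epsilon}$ denominator, is to charge against the edge counts: the number of destructive iterations in step~2 is at most the number of vertices $u\in V^\ast$ that are ever touched, and each contributes at most the number of $V$-vertices in the relevant fiber through $u$; summing, the total number of $V^\ast$-vertices lost in the two loops is at most (a constant times) $|E'|+|E''|$ plus lower-order terms. Hence
$$
|L|\;\ge\; |V^\ast|\;-\;c\,(|E'|+|E''|)
$$
for an absolute constant $c$.

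Finally I would take expectations, plug in Lemma~\ref{lemma_Sast} and Lemma~\ref{lemma_W}, and choose $M$. We have $\E[|V^\ast|]=\frac{T_1\Nn_1^\ast|\Gamma|}{M^2}$ and $\E[|E'|+|E''|]\le \frac{T_1\Nn_1^\ast\,(\Nn_1+2\Nn_2+2\Nn_3)\,|\Gamma|}{M^3}$ up to the $-3$'s and $-2$'s, which are negligible. Since $M>2(\Nn_1+\Nn_2+\Nn_3)$, the edge contribution is at most a small constant fraction of $\E[|V^\ast|]$, so $\E[|L|]=\Omega\!\big(\frac{T_1\Nn_1^\ast|\Gamma|}{M^2}\big)$. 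With $M<4(\Nn_1+\Nn_2+\Nn_3)$ and $|\Gamma|\ge M^{1-\epsilon}$ from Theorem~\ref{th_Salem}, this is $\Omega\!\big(\frac{T_1\Nn_1^\ast}{M^{1+\epsilon}}\big)=\Omega\!\big(\frac{T_1\Nn_1^\ast}{(\Nn_1+\Nn_2+\Nn_3)^{1+\epsilon}}\big)$, as claimed; a final rescaling of $\epsilon$ absorbs the constants. The main obstacle I anticipate is not any single estimate but getting the charging in the second stage exactly right: I must make sure that, across both the step~2 loop and the cleanup loop in step~3, every deleted $V^\ast$-vertex can be paired with a distinct edge in $E'\cup E''$ (or shown to be in a negligible leftover set), and in particular that step~3 does not destroy so many surviving $V^\ast$-vertices as to swamp the bound --- this is precisely the point where one needs that after step~2 every vertex of $\overline V\cap V^\ast$ is already in $L$, so step~3 only cleans out vertices of $V\setminus V^\ast$ together with at most a controlled number of $V^\ast$-vertices, and the asymmetry $U^\ast\neq U$ has to be tracked carefully there.
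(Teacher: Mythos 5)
Your plan follows the paper's own route: analyze the greedy procedure of Figure~\ref{fig_alg1}, show that the survivors are exactly $L_f\subseteq V^\ast$ and pairwise non-adjacent, bound the number of $V^\ast$-vertices destroyed during Step~2 by an edge count, and then combine Lemmas~\ref{lemma_Sast} and~\ref{lemma_W} with the choice $2(\Nn_1+\Nn_2+\Nn_3)<M<4(\Nn_1+\Nn_2+\Nn_3)$. The structural part and the final expectation computation are essentially right, modulo one phrasing slip: when $u$ enters $L$ no removal happens in that iteration, so the correct argument is simply that any neighbor of $u$ must already have been removed (removals are permanent), while any neighbor still present would have prevented $u$ from entering $L$; the same ``no remaining neighbor'' property shows that vertices of $L$ are never removed later, which is also exactly why Step~3 is harmless.

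The genuine gap is in the charging step, which is the heart of the proof and which you explicitly leave as an ``anticipated obstacle''. Your sketch (``number of destructive iterations $\le$ number of touched $u\in V^\ast$, each contributing at most the fiber size'') only bounds the loss by (iterations)$\times$(fiber size), which can far exceed $|E'|$; asserting ``at most $c(|E'|+|E''|)$'' does not follow from it. The missing observation is: in a destructive iteration with examined vertex $u\in V^\ast$ and witness $v$, the set $S$ of removed $V^\ast$-vertices all share the coordinate $f_2(u)$ (or $f_3(u)$), hence form a clique whose $\binom{|S|}{2}$ edges lie in $E'$ and are destroyed by this removal; moreover the witness edge $\{u,v\}$ lies in $E'$, is destroyed as well (the removal set contains $u$), and is distinct from those clique edges because $v\notin S$ --- if $v$ also shared that second coordinate with $u$, Fact~\ref{fact} would force $v=u$. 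Since $\binom{|S|}{2}+1\ge|S|$ and destroyed edges never reappear, Step~2 removes at most $|E'|$ vertices of $V^\ast$ in total; and since after Step~2 every surviving $V^\ast$-vertex is in $L$ and $L$-vertices are never removed, Step~3 removes none, giving $|\overline{V_f}|=|L_f|\ge|V^\ast|-|E'|$ with no constant $c$ and no $E''$. This sharpness also matters quantitatively: with your looser $c(|E'|+|E''|)$ bound, the stated range of $M$ does not obviously make the edge term a small fraction of $\E[|V^\ast|]$ (you would have to enlarge $M$ by a constant factor), another sign that the accounting needs to be pinned down as above.
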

\begin{proof}
Let $L_f$ denote the contents of $L$ at the end of the procedure.
First observe that $\overline{V_f}\subseteq L_f$, due to Step 3.
Note that any vertex added to $L$ cannot be later removed from $\overline{V}$,
since it has no neighbor.
Thus $L_f\subseteq\overline{V_f}$, and we conclude that  
$\overline{V_f}= L_f$.
This shows in particular that $\overline{V_f}\subseteq V^\ast$.
Moreover, since each vertex in $L$ has no neighbor, the subgraph induced by $\overline{V_f}$ is edgeless. 

To prove the second part, we will show an upper bound on the number of vertices in $V^\ast$ removed from $\overline{V}$
during the loop of Step~2. The bound will be obtained by considering the number of edges from $E'$ remaining
in the subgraph induced by $\overline{V}$.

Let us consider what happens during Step 2.2.
Let $u\in(\overline{V}\cap V^\ast)\backslash L$ be the vertex currently examined.
Suppose that another vertex in $\overline{V}$ sharing one index with $u$ is found. 
For example, suppose that we find another vertex $v\in\overline{V}$ with $f_1(v)=f_1(u)$.
Let $S=\{w\in \overline{V}\cap V^\ast\:|\:f_2(w)=f_2(u)\}$ be the set of vertices in $V^\ast$ eliminated by the consequent removal operation. 
Observe that this removal operation will eliminate at least ${|S| \choose 2}+1$ new edges from $E'$: 
the edges between two vertices in $S$, and the edge connecting $u$ and~$v$. 
Since ${|S| \choose 2}+1\ge |S|$, the number of vertices in $V^\ast$ removed during one execution of 
Step 2.2 is at most the number of edges eliminated from $E'$. 

The total number of vertices from $V^\ast$ that are removed by the procedure during the loop of Step~2 
is thus at most $|E'|$, which means that $|\overline{V_f}\cap V^\ast|\ge |V^\ast|-|E'|$. 
Since $\overline{V}\subseteq V^\ast$,
Lemmas \ref{lemma_Sast} and \ref{lemma_W} imply that
the expected number of vertices in 
$\overline{V_f}$ is at least
\begin{eqnarray*}
\frac{T_1\Nn_1^\ast|\Gamma|}{M^2}\left(1 - \frac{(\Nn_1+\Nn_2+\Nn_3-3)}{M}\right)&\ge& 
\frac{T_1\Nn_1^\ast|\Gamma|}{2M^2},
\end{eqnarray*}
where the inequality follows from the choice of $M$.
Since $M=O(\Nn_1+\Nn_2+\Nn_3)$ and $|\Gamma|\ge M^{1-\epsilon}$,
we conclude that $\E[\overline{|V_f}|]=\Omega\left(\frac{T_1 \Nn_1^\ast}{(\Nn_1+\Nn_2+\Nn_3)^{1+\epsilon}}\right)$.
\end{proof}
Theorem \ref{th_1} follows from Proposition \ref{prop_secondpruningv1}
by fixing a choice of the weights $\omega_0,\omega_1,\ldots,\omega_{N+1}$ for which
$|\overline{V_f}|\ge \E[|\overline{V_f}|]$ (such a choice necessarily exists from 
the definition of the expectation).

\subsection{The second pruning: second version and proof of Theorem \ref{th_2}}\label{sub_second2}
In this subsection $M$ is an arbitrary prime such that $64\Nn_2<M<128\Nn_2$.

The first version of the pruning described in Subsection \ref{sub_second1}
was designed to obtain an edgeless subgraph of $G$. In this subsection we 
describe how to modify it to obtain a union of many large 1-cliques instead.
The detailed procedure of the new pruning algorithm is described in Figure \ref{fig_alg2}. 
The only difference is that, at Step 2.2, a vertex is not added in $L$ only if it is 
connected to another vertex with the same second or third index.

\begin{figure}
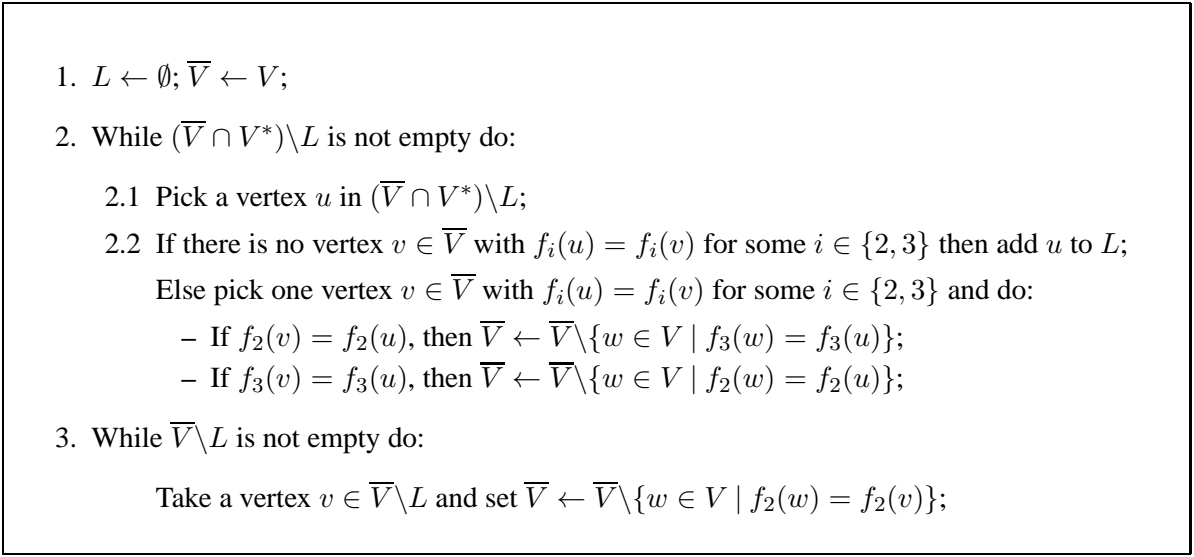

\begin{center}
    \begin{tabular}[h!]{|p{0.95\textwidth}|}
    \hline
\begin{enumerate}    
\item
$L\gets\emptyset$; $\overline{V}\gets V$;  
\item
While $(\overline{V}\cap V^\ast)\backslash L$ is not empty do: 
\begin{itemize}
\item[2.1]
Pick a vertex $u$ in $(\overline{V}\cap V^\ast)\backslash L$;
\item[2.2]
If there is no vertex $v\in \overline{V}$ with $f_i(u)=f_i(v)$ for some $i\in\{2,3\}$  then add $u$ to $L$;
\item[]
Else pick one vertex $v\in \overline{V}$ with $f_i(u)=f_i(v)$ for some $i\in\{2,3\}$ and do:
\begin{itemize}
\item
If $f_2(v)=f_2(u)$, then $\overline{V}\gets \overline{V}\backslash \{w\in V\:|\:f_3(w)=f_3(u)\}$;
\item
If $f_3(v)=f_3(u)$, then $\overline{V}\gets \overline{V}\backslash \{w\in V\:|\:f_2(w)=f_2(u)\}$;
\end{itemize}
\end{itemize}
\item
While $\overline{V}\backslash L$ is not empty do: 
\begin{itemize}
\item[]
Take a vertex $v\in \overline{V}\backslash L$ and set $\overline{V}\gets \overline{V}\backslash \{w\in V\:|\:f_2(w)=f_2(v)\}$;
\end{itemize}
\end{enumerate}
        \\\hline
    \end{tabular}\caption{The second pruning (second version)\label{fig_alg2}}
\end{center}
\end{figure}

Let $\overline{V_f}$ denote the contents of $\overline{V}$ at the end of the procedure.
By slightly modifying the arguments of the previous subsection, it is easy to see that
the resulting graph has only vertices from $V^\ast$ and is a disjoint union of 1-cliques
(i.e., each 
connected component is a 1-clique).
\begin{proposition}\label{prop_comp}
The subgraph of $G$ induced by $\overline{V_f}$ is a disjoint union of 1-cliques.
Moreover, $\overline{V_f}\subseteq V^\ast$ and 
$|\overline{V_f}|\ge |V^\ast|-|E''|$.
\end{proposition}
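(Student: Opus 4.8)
The plan is to follow closely the proof of Proposition~\ref{prop_secondpruningv1}, adapting its three assertions to the modified procedure of Figure~\ref{fig_alg2}; write $L_f$ for the final contents of $L$. As there, Step~3 forces $\overline{V_f}\subseteq L_f$, and the reverse inclusion rests on the invariant that \emph{a vertex placed in $L$ is never subsequently removed from $\overline{V}$}. Every removal operation of the procedure deletes a set of the form $\{w\in V\:|\:f_2(w)=f_2(x)\}$ or $\{w\in V\:|\:f_3(w)=f_3(x)\}$ for some vertex $x$ lying in $\overline{V}$ at that moment (the vertex $u$ examined in Step~2.2, or the vertex $v$ of Step~3). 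Were such an operation to delete some $\ell\in L$, then $\ell$ would agree with $x\in\overline{V}$ on its second or third coordinate, and since vertices are only ever removed, $x$ was already present in $\overline{V}$ when $\ell$ was added to $L$, contradicting the test of Step~2.2 for~$\ell$. Hence $L_f=\overline{V_f}$, and because vertices enter $L$ only from $(\overline{V}\cap V^\ast)\backslash L\subseteq V^\ast$, we get $\overline{V_f}\subseteq V^\ast$.

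Next I would show that the subgraph of $G$ induced by $\overline{V_f}=L_f$ is a disjoint union of 1-cliques. By Fact~\ref{fact} every edge of $G$ joins two vertices agreeing on exactly one coordinate, so it suffices to prove that no two distinct vertices of $L_f$ agree on the second coordinate, nor on the third. If $u,v\in L_f$ satisfied $f_2(u)=f_2(v)$ (the case $f_3(u)=f_3(v)$ being identical), with $v$ added to $L$ after $u$, then at that iteration $u\in L\subseteq\overline{V}$ by the invariant above and $f_2(u)=f_2(v)$, so the test of Step~2.2 would have found a vertex of $\overline{V}$ sharing the second coordinate of $v$, contradicting the addition of $v$ to $L$. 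Consequently every edge of the induced subgraph comes from a shared first coordinate; as equality of $f_1$ is an equivalence relation on $\overline{V_f}$, each connected component is exactly a set $\{u\in\overline{V_f}\:|\:f_1(u)=I\}$ for some $I\in[\tau]^N$, i.e.\ a 1-clique.

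Finally I would establish $|\overline{V_f}|\ge|V^\ast|-|E''|$, which parallels the estimate in Proposition~\ref{prop_secondpruningv1} but counts edges of $E''$ in place of $E'$. Consider one execution of the ``Else'' branch of Step~2.2: the chosen vertex $v$ shares with $u$ exactly one index $i\in\{2,3\}$ (not both, by Fact~\ref{fact}), and the ensuing removal deletes the set $S=\{w\in\overline{V}\cap V^\ast\:|\:f_j(w)=f_j(u)\}$ for the other index $j\in\{2,3\}$, whose size equals the number of $V^\ast$-vertices eliminated by this operation. The ${|S|\choose 2}$ edges internal to $S$ all share their $j$-th coordinate, hence lie in $E''$ (they cannot also share the first coordinate, by Fact~\ref{fact}), and the edge $\{u,v\}$ lies in $E''$ as well since $u\in V^\ast$ and $u,v$ share the $i$-th but not the first coordinate; moreover $v\notin S$, so these ${|S|\choose 2}+1\ge|S|$ edges are pairwise distinct. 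Each of them is present in the induced subgraph just before the operation and absent just after, and since edges are only ever deleted, the edges destroyed by distinct removal operations are pairwise distinct. Summing over Step~2 bounds the number of $V^\ast$-vertices removed during Step~2 by $|E''|$. It remains to note that Step~3 removes no vertex of $V^\ast$: after Step~2 one has $\overline{V}\cap V^\ast\subseteq L$ and this persists, while a set $\{w\in V\:|\:f_2(w)=f_2(v)\}$ with $v\in\overline{V}\backslash L$ contains no vertex of $L$ (by the argument used for the invariant), hence no remaining vertex of $V^\ast$. Therefore $|\overline{V_f}|=|\overline{V_f}\cap V^\ast|\ge|V^\ast|-|E''|$.

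The only point requiring genuine care is the invariant that vertices of $L$ are never deleted, together with its corollary that the Step-3 cleanup never touches $V^\ast$; everything else is a routine transcription of the symmetric argument of Proposition~\ref{prop_secondpruningv1}, with $E'$ replaced by $E''$ and the three-way case split of Figure~\ref{fig_alg1} replaced by the two-way split of Figure~\ref{fig_alg2}.
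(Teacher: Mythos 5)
Your proof is correct and follows essentially the same route as the paper: establish $\overline{V_f}=L_f\subseteq V^\ast$ via the invariant that vertices added to $L$ have no remaining neighbor sharing the second or third index (hence are never removed and induce only 1-cliques), and then bound the $V^\ast$-vertices removed in Step~2 by the edges of $E''$ exactly as in Proposition~\ref{prop_secondpruningv1} with $E'$ replaced by $E''$. The paper's own proof is just a terser version of this, deferring the edge-counting to the earlier proposition, so your additional care about Step~3 and the distinctness of destroyed edges is a faithful expansion rather than a different argument.
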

\begin{proof}
Let $L_f$ denote the contents of $L$ at the end of the procedure.
Due to Step 3, we know that $\overline{V_f}\subseteq L_f$.
Moreover, any vertex added to $L$ cannot be later removed from $\overline{V}$,
since it has no neighbor with the same second or third index.
Thus $L_f\subseteq\overline{V_f}$, and we conclude that  
$\overline{V_f}= L_f$, which shows in particular that $\overline{V_f}\subseteq V^\ast$.
Furthermore, since each vertex in $L$ has no neighbor with the same second or third index, 
the subgraph induced by $\overline{V_f}$ is a disjoint union of 1-cliques. 

The inequality $|\overline{V_f}|\ge |V^\ast|-|E''|$ is obtained 
as in the proof of Proposition \ref{prop_secondpruningv1},
but replacing the edge set $E'$ by~$E''$.
\end{proof}

We now give the proof of Theorem \ref{th_2}, by using Lemmas \ref{lemma_W} 
and \ref{lemma_R} to evaluate
the size and the numbers of 1-cliques in the resulting graph.
\begin{proof}[Proof of Theorem \ref{th_2}]
Using Lemma \ref{lemma_W} with the value $M>64\Nn_2$ and Markov's bound,
we conclude that 
$$\Pr\left[|E''|\le \frac{T_1\Nn_1^\ast |\Gamma|}{16M^2}\right]\ge \frac{1}{2}.$$

Note that the conditions $M<128\Nn_2$ and $\frac{\Nn_2T_1}{\Nn_1^\ast}+\frac{\Nn_2}{T_1}<\frac{1}{1024}$ imply
that 
$$
1-\frac{4MT_1}{\Nn_1^\ast}-\frac{4M}{T_1|\Gamma|}> 
1-512\left(\frac{\Nn_2T_1}{\Nn_1^\ast}+\frac{\Nn_2}{T_1|\Gamma|}\right)>
1-512\left(\frac{\Nn_2T_1}{\Nn_1^\ast}+\frac{\Nn_2}{T_1}\right)>
\frac{1}{2}.
$$
Thus the probability that a set $R$ as in Lemma \ref{lemma_R} exists and 
the inequality $|E''|\le \frac{T_1\Nn_1^\ast|\Gamma|}{16M^2}$ simultaneously holds
is positive. There then exists a choice of the weights $\omega_0,\omega_1,\ldots,\omega_{N+1}$
such that this happens. Let us take such a choice.


From Proposition \ref{prop_comp} we know that 
at most $|E''|$ vertices in $V^\ast$ are removed by the second pruning.
Then there 
exists a set 
$R'\subseteq R$ of size $|R'|\ge T_1|\Gamma|/(4M)$ such that, for any
$r'\in R'$, there are at least $\Nn_1^\ast/(4M)$ vertices $u$ with $f_1(u)=r'$ remaining
after the second pruning.
This is because, otherwise, 
from the properties of the set $R$ stated in Lemma \ref{lemma_R}
it would be necessary to remove more than
$T_1\Nn_1^\ast|\Gamma|/(16M^2)$ vertices during the second pruning.
\end{proof}

\section{Algorithm for Rectangular Matrix Multiplication}\label{sec_construction}
In this section we present our algorithm, which essentially consists in the two algorithmic 
steps described in Subsections~\ref{sub_step1} and~\ref{sub_step2}.
We first start by explaining the construction we will use.
\subsection{Our construction}\label{subsec_ourcon}
Let $a_{004},a_{400},a_{013},a_{103},a_{301},a_{022},a_{202}, a_{112},a_{211}$ be  
nine arbitrary positive\footnote{The hypothesis that each $a_{ijk}$ is not zero 
is made only for convenience  (all the bounds presented 
in this paper are obtained using positive values for these parameters). 
More specifically, this hypothesis is used only  
when approximating quantities like $(a_{ijk}N)!$ using Stirling's inequality.
Without the hypothesis it would be necessary to treat the (trivial) case $a_{ijk}= 0$ separately.} 
rational numbers 
such that
\begin{equation*}
2a_{004}+a_{400}+2a_{013}+2a_{103}+2a_{301}+a_{022}+2a_{202}+2a_{112}+a_{211}=1
\end{equation*}
and 
\begin{equation*}
a_{013}a_{202}a_{112}=a_{103}a_{022}a_{211}.
\end{equation*}

Let us define rational numbers $A_0,A_1,A_2,A_3,A_4,B_0,B_1,B_2,B_3,B_4$ as follows:
\begin{eqnarray*}
A_0&=&
2a_{004}+2a_{013}+a_{022}\\
A_1&=&
2a_{103}+2a_{112}\\
A_2&=&
2a_{202}+a_{211}\\
A_3&=&
2a_{301}\\
A_4&=&
a_{400}\\
B_0&=&
a_{004}+a_{400}+a_{103}+a_{301}+a_{202}\\
B_1&=&
a_{013}+a_{301}+a_{112}+a_{211}\\
B_2&=&
a_{022}+a_{202}+a_{112}\\
B_3&=&
a_{013}+a_{103}\\
B_4&=&
a_{004}.
\end{eqnarray*}
Note that $\sum_{i=0}^4A_i=\sum_{i=0}^4B_i=1.$

It will be convenient to define six additional numbers $a_{040}$, $a_{031}$, $a_{130}$, 
$a_{310}$, $a_{220}$ and  $a_{121}$ as
$a_{040}=a_{004}$, $a_{031}=a_{013}$,
$a_{130}=a_{103}$, $a_{310}=a_{301}$, $a_{220}=a_{202}$ and $a_{121}=a_{112}$. 
We can then rewrite concisely the $A_{i}$'s and the $B_j$'s as follows.
\begin{eqnarray*}
A_i&=&\sum_{\begin{subarray}{c}0\le j,k\le 4\\ i+j+k=4\end{subarray}} a_{ijk}\hspace{3mm}\textrm{ for }  i=0,1,2,3,4\\
B_j&=&\sum_{\begin{subarray}{c}0\le i,k\le 4\\ i+j+k=4\end{subarray}} a_{ijk}\hspace{3mm}\textrm{ for }  j=0,1,2,3,4\\
\end{eqnarray*}

Let $N$ be a large enough positive integer such each $Na_{ijk}$ is an integer.
We rise the construction $F_q\otimes F_q$ described in Section \ref{section_CW} to the $N$-th power.
Observe that $(F'_q\otimes F'_q)^{\otimes N} \degen (F_q\otimes F_q)^{\otimes N}$ and
\begin{eqnarray*}
(F'_q\otimes F'_q)^{\otimes N}&=&
\sum_{\begin{subarray}{l} IJK\end{subarray}}T_{IJK},
\end{eqnarray*}
where the sum is over all triples of sequences $IJK$ with $I,J,K\in \{0,1,2,3,4\}^N$ such that $I_\ell+J_\ell+K_\ell=4$ for all $\ell\in\{1,\ldots,N\}$. 
Here we use the notation
$T_{IJK}=T_{I_1J_1K_1}\otimes \cdots\otimes T_{I_NJ_NK_N}$. Note that there are $15^N$ terms $T_{IJK}$ in the above sum.
In the tensor product the number of $x$-variables is $(q+2)^{2N}$. The number of $y$-variables and $z$-variables 
is also $(q+2)^{2N}$.
Remember that in the original construction, each $x$-variable was indexed 
by a superscript in $\{0,1,2,3,4\}$. Each $x$-variable in the tensor product is thus indexed by a sequence of $N$ 
such superscripts, i.e., by an element $I\in \{0,1,2,3,4\}^N$.
The same is true for 
the $y$-variables and the $z$-variables. Note that the $x$-variables appearing in $T_{IJK}$ have superscript~$I$, the $y$-variables appearing in $T_{IJK}$ have superscript $J$, and the $z$-variables appearing in 
$T_{IJK}$ have superscript~$K$.

Let us introduce the following definition.
\begin{definition}\label{def_type}
Let 
$\overline{a}_{004}$, $ \overline{a}_{040}$, $\overline{a}_{400}$, $\overline{a}_{013}$, $\overline{a}_{031}$, $\overline{a}_{103}$, $\overline{a}_{130}$, $\overline{a}_{301}$, $\overline{a}_{310}$, $\overline{a}_{022}$, $\overline{a}_{202}$, $\overline{a}_{220}$, $\overline{a}_{112}$, $\overline{a}_{121}$, $\overline{a}_{211}$ 
be fifteen nonnegative rational numbers.
We say that a triple $IJK$ is of type $[\overline{a}_{ijk}]$ if 
$$|\left\{\ell\in\{1,\ldots,N\}\:|\: I_\ell=i, J_\ell=j \textrm{ and } K_\ell=k\right\}|=\overline{a}_{ijk}N$$
for all 15 combinations of positive $i,j,k$ with $i+j+k=4$.
\end{definition}
With a slight abuse of notation, we will say that a form $T_{IJK}$ is of type $[\overline{a}_{ijk}]$ if the triple $IJK$ is of type $[\overline{a}_{ijk}]$.


\subsection{The first step}\label{sub_step1}

We set to zero all $x$-variables except those satisfying the following condition: their superscript $I$ 
has exactly $A_0N$ coordinates with value $0$, $A_1N$ coordinates with value $1$, $A_2N$ coordinates with value $2$, $A_3N$ coordinates with value $3$ and $A_4N$ coordinates with value $4$.
We will say that such a sequence $I$ is of type $A$.
There are 
$$
T_X={N\choose{A_0N,\ldots,A_4N}}=\Theta\left(\frac{1}{N^{2}}\left(\frac{1}{A_0^{A_0}A_1^{A_1}A_2^{A_2}A_3^{A_3}A_4^{A_4}}\right)^{N}\right)
$$
sequences $I$ of type $A$ (the approximation is done using Stirling's formula). 
After the zeroing operation, all forms $T_{IJK}$ such that $I$ is not of type $A$ disappear (i.e., become zero).

We process the $y$-variables and the $z$-variables slightly differently.
We set to zero all $y$-variables except those satisfying the following condition: their superscript $J$ 
has exactly $B_0N$ coordinates with value $0$, $B_1N$ coordinates with value $1$, $B_2N$ coordinates with value $2$,  $B_3N$ coordinates with value $3$ and $B_4N$ coordinates with value $4$.
We will say that such a sequence is of type $B$.
There are 
$$
T_Y={N\choose{B_0N,\ldots,B_4N}}=\Theta\left(\frac{1}{N^{2}}\left(\frac{1}{B_0^{B_0}B_1^{B_1}B_2^{B_2}B_3^{B_3}B_4^{B_4}}\right)^{N}\right)
$$
sequences $J$ of type $B$. 
Similarly, we set to zero all $z$-variables except those such that their superscript $K$ is of type $B$ (there are $T_Y$ such sequences). 

After these three zeroing operations, 
the forms $T_{IJK}$ remaining are precisely those such that $I$ is of type $A$, $J$ is of type $B$, and $K$ is of type $B$. 
Equivalently, the forms remaining are precisely the forms $T_{IJK}$ that are of type $[\overline{a}_{ijk}]$ with 
fifteen numbers 
$\overline{a}_{ijk}$
(for all fifteen combinations of positive $i,j,k$ such that $i+j+k=4$)
satisfying the following four conditions:
\begin{align}
\label{con1}
\hspace{5mm}&\overline{a}_{ijk}N\in\{0,1,\ldots,N\} \:\:\:\textrm{for all }i,j,k;\\
\label{con2}
\hspace{5mm}&A_i=\sum_{j,k \::\: i+j+k=4} \overline{a}_{ijk}\hspace{3mm}\textrm{ for }  i=0,1,2,3,4;\\
\label{con3}
\hspace{5mm}&B_j=\sum_{i,k \::\: i+j+k=4} \overline{a}_{ijk}\hspace{3mm}\textrm{ for }  j=0,1,2,3,4;\\
\label{con4}
\hspace{5mm}&B_k=\sum_{i,j \::\: i+j+k=4} \overline{a}_{ijk}\hspace{3mm}\textrm{ for }  k=0,1,2,3,4.
\end{align}

Let $I$ be a fixed sequence of type $A$. The number of non-zeros forms $T_{IJK}$ with this sequence $I$ as its first index is
thus precisely
\begin{eqnarray*}
\Nn_X&=&\sum_{[\overline{a}_{ijk}]}
{A_0N \choose \overline{a}_{004}N,\overline{a}_{040}N,\overline{a}_{013}N,\overline{a}_{031}N,\overline{a}_{022}N}
{A_1N \choose \overline{a}_{103}N,\overline{a}_{130}N,\overline{a}_{112}N,\overline{a}_{121}N}\times\\
&&\hspace{48mm}{A_2N \choose \overline{a}_{202}N,\overline{a}_{220}N,\overline{a}_{211}N}
{A_3N \choose \overline{a}_{301}N,\overline{a}_{310}N}
{A_4N \choose \overline{a}_{400}N},
\end{eqnarray*}
where the sum is over all the choices of fifteen parameters $\overline{a}_{ijk}$'s satisfying conditions (\ref{con1})--(\ref{con4}).

For a fixed sequence $J$ of type $B$, the number of non-zeros forms $T_{IJK}$
with this sequence $J$ as its second index is
\begin{eqnarray*}
\Nn_Y&=&\sum_{[\overline{a}_{ijk}]}
{B_0N \choose \overline{a}_{004}N,\overline{a}_{400}N,\overline{a}_{103}N,\overline{a}_{301}N,\overline{a}_{202}N}
{B_1N \choose \overline{a}_{013},\overline{a}_{310},\overline{a}_{112},\overline{a}_{211}N}\times\\
&&\hspace{38mm}{B_2N \choose \overline{a}_{022}N,\overline{a}_{220}N,\overline{a}_{121}N}
{B_3N \choose \overline{a}_{031}N,\overline{a}_{130}N}
{B_4N \choose \overline{a}_{040}N},
\end{eqnarray*}
where the sum is again over all the choices of fifteen parameters $\overline{a}_{ijk}$'s satisfying conditions (\ref{con2})--(\ref{con4}).
Similarly, for a fixed sequence $K$ of type $B$, the number of non-zeros forms $T_{IJK}$
with this sequence $K$ as its third index is
\begin{eqnarray*}
\Nn_Z&=&\sum_{[\overline{a}_{ijk}]}
{B_0N \choose \overline{a}_{040}N,\overline{a}_{400}N,\overline{a}_{130}N,\overline{a}_{310}N,\overline{a}_{220}N}
{B_1N \choose \overline{a}_{031},\overline{a}_{301},\overline{a}_{121},\overline{a}_{211}N}\times\\
&&\hspace{38mm}{B_2N \choose \overline{a}_{022}N,\overline{a}_{202}N,\overline{a}_{112}N}
{B_3N \choose \overline{a}_{013}N,\overline{a}_{103}N}
{B_4N \choose \overline{a}_{004}N}.
\end{eqnarray*}

The total number of remaining triples is 
\begin{equation}\label{relation1}
T_X\Nn_X=T_Y\Nn_Y=T_Y\Nn_Z.
\end{equation}
Note that this implies that $\Nn_Y=\Nn_Z$.

We will also be interested in the number of remaining forms $T_{IJK}$ 
of type $[a_{ijk}]$. 
For a fixed sequence~$I$ of type $A$, 
the number of non-zeros forms $T_{IJK}$ of type $[a_{ijk}]$ with this sequence $I$ as its first index is
\begin{eqnarray*}
\Nn_X^\ast&=&
{A_0N \choose a_{004}N,a_{004}N,a_{013}N,a_{013}N,a_{022}N}
{A_1N \choose a_{103}N,a_{103}N,a_{112}N,a_{112}N}\times\\
&&\hspace{48mm}{A_2N \choose a_{202}N,a_{202}N,a_{211}N}
{A_3N \choose a_{301}N,a_{301}N}
{A_4N \choose a_{400}N}.
\end{eqnarray*}
For a fixed sequence $J$ of type $B$, 
the number of non-zeros forms $T_{IJK}$ of type $[a_{ijk}]$ with this sequence $J$ as its second index is
\begin{eqnarray*}
\Nn_Y^\ast&=&
{B_0N \choose a_{004}N,a_{400}N,a_{103}N,a_{301}N,a_{202}N}
{B_1N \choose a_{013},a_{301},a_{112},a_{211}N}\times\\
&&\hspace{38mm}{B_2N \choose a_{022}N,a_{202}N,a_{112}N}
{B_3N \choose a_{013}N,a_{103}N}
{B_4N \choose a_{004}N}.
\end{eqnarray*}
We have 
\begin{equation}\label{relation2}
T_X\Nn^\ast_X=T_Y\Nn_Y^\ast.
\end{equation}

\subsection{Approximation}
In this subsection we will use the notation $[\overline{a}_{ijk}]$ to represent an arbitrary set of fifteen parameters~$\overline a_{ijk}$
such that $0\le \overline a_{ijk}\le 1$ for each $i,j,k$.
Let $c([\overline a_{ijk}])$ denote the number of nonzero elements among these fifteen parameters.
Consider the following expression:
$$g([\overline{a}_{ijk}])=
\left(\overline{a}_{004}^{\overline{a}_{004}}\overline{a}_{040}^{\overline{a}_{040}}\overline{a}_{400}^{\overline{a}_{400}}
\overline{a}_{013}^{\overline{a}_{013}}\overline{a}_{031}^{\overline{a}_{031}}\overline{a}_{103}^{\overline{a}_{103}}\overline{a}_{130}^{\overline{a}_{130}}\overline{a}_{301}^{\overline{a}_{301}}\overline{a}_{310}^{\overline{a}_{310}}
\overline{a}_{022}^{\overline{a}_{022}}\overline{a}_{202}^{\overline{a}_{202}}\overline{a}_{220}^{\overline{a}_{220}}
\overline{a}_{112}^{\overline{a}_{112}}\overline{a}_{121}^{\overline{a}_{121}}\overline{a}_{211}^{\overline{a}_{211}}\right)^{-1}.
$$
Using Stirling's formula, we can give the following approximations.
\begin{eqnarray*}
\Nn_X
&=&\Theta\left(\sum_{[\overline{a}_{ijk}]}\frac{\left[ A_0^{A_0}A_1^{A_1}A_2^{A_2}A_3^{A_3}A_4^{A_4}\times g([\overline{a}_{ijk}])\right]^{N}}{N^{(c([\overline a_{ijk}])-5)/2}}\right)\\
\Nn_Y
&=&\Theta\left(\sum_{[\overline{a}_{ijk}]}\frac{\left[ B_0^{B_0}B_1^{B_1}B_2^{B_2}B_3^{B_3}B_4^{B_4}\times g([\overline{a}_{ijk}])\right]^{N}}{N^{(c([\overline a_{ijk}])-5)/2}}\right)\\
\Nn^\ast_X&=&
\Theta\left(\frac{1}{N^5}\left[ A_0^{A_0}A_1^{A_1}A_2^{A_2}A_3^{A_3}A_4^{A_4}\times g([a_{ijk}])\right]^{N}\right)\\
\Nn^\ast_Y&=&
\Theta\left(\frac{1}{N^5}\left[ B_0^{B_0}B_1^{B_1}B_2^{B_2}B_3^{B_3}B_4^{B_4}\times g([a_{ijk}])\right]^{N}\right)
\end{eqnarray*}
The first two sums are again 
over all the choices of fifteen parameters $\overline{a}_{ijk}$'s satisfying conditions (\ref{con1})--(\ref{con4}).
Note that, for any $[\overline a_{ijk}]$ satisfying these four conditions, $c([\overline a_{ijk}])\ge 5$, since the 
$A_i$'s are non-zero.

We know that $\Nn_X^\ast\le \Nn_X$ and $\Nn_Y^\ast\le \Nn_Y$, by definition. The following proposition shows that $\Nn_X$ and
$\Nn_Y$ can actually be approximated by $\Nn_X^\ast$ and $\Nn_Y^\ast$, respectively.
\begin{proposition}\label{proposition_max}
$\Nn_X=O(N^8\Nn_X^\ast)
\:\:\textrm{ and }\:\: 
\Nn_Y=O(N^8\Nn_Y^\ast)$.
\end{proposition}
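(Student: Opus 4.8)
The plan is to show that the single term $[A_0^{A_0}\cdots A_4^{A_4}\cdot g([a_{ijk}])]^N$ corresponding to the parameters $[a_{ijk}]$ already dominates (up to a polynomial factor) the whole sum defining $\Nn_X$, and symmetrically for $\Nn_Y$. Since $\Nn_X$ is a sum of at most $O(N^{15})$ terms (each of the fifteen $\overline a_{ijk}N$ ranges over $\{0,\dots,N\}$), and every term has the same prefactor $A_0^{A_0}\cdots A_4^{A_4}$ outside of $g$, the claim reduces to the statement that
$$
g([\overline a_{ijk}])\le g([a_{ijk}])
$$
for every $[\overline a_{ijk}]$ satisfying conditions (\ref{con1})--(\ref{con4}); combining this with the polynomial count of terms and the $N^{-(c-5)/2}$ factors (which only help, being at most $O(1)$ since $c\ge 5$) against the $N^{-5}$ in $\Nn_X^\ast$ then yields the stated $O(N^8\Nn_X^\ast)$ bound, with room to spare.

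First I would recast the maximization of $g$ as a convex optimization problem. Writing $h([\overline a_{ijk}])=-\log g([\overline a_{ijk}])=\sum \overline a_{ijk}\log \overline a_{ijk}$ (the negative entropy of the vector $(\overline a_{ijk})$), the function $h$ is strictly convex on the simplex-like polytope cut out by the linear constraints (\ref{con2})--(\ref{con4}); hence $-h$, i.e. $\log g$, is strictly concave, so it has a unique maximizer over that polytope (ignoring the integrality constraint (\ref{con1}), which only costs polynomial factors via continuity/Stirling and is absorbed into the $N^8$ slack). The key step is then to verify that $[a_{ijk}]$ is exactly this maximizer. I would do this with Lagrange multipliers: introduce multipliers $\rho_i$ for the constraints $A_i=\sum \overline a_{ijk}$, multipliers $\sigma_j$ for $B_j=\sum_{i,k}\overline a_{ijk}$, and multipliers $\nu_k$ for $B_k=\sum_{i,j}\overline a_{ijk}$. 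Stationarity of $h-\sum\rho_i(\cdots)-\sum\sigma_j(\cdots)-\sum\nu_k(\cdots)$ gives $\log \overline a_{ijk}+1=\rho_i+\sigma_j+\nu_k$, i.e. $\overline a_{ijk}=C\cdot p_i q_j r_k$ for suitable positive constants. So it suffices to check that the specific values $a_{ijk}$ defined in Subsection~\ref{subsec_ourcon} have exactly this product form — and here is precisely where the extra hypothesis $a_{013}a_{202}a_{112}=a_{103}a_{022}a_{211}$ enters, since that is exactly the multiplicative "loop condition" among the nine (really fifteen, after the symmetrization $a_{ijk}=a_{ikj}$) parameters that a rank-one tensor $p_iq_jr_k$ must satisfy. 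I would verify that the constraint, together with the symmetry $a_{ijk}=a_{ikj}$ and the definitions of $A_i,B_j$, forces $a_{ijk}=C p_i q_j r_k$ for an explicit choice of $p,q,r,C$, thereby exhibiting $[a_{ijk}]$ as a KKT point and hence, by strict concavity, the global maximizer.

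The main obstacle I anticipate is the bookkeeping in the Lagrangian verification: showing directly that the given $a_{ijk}$ factor as $Cp_iq_jr_k$. One must be careful that the three families of constraints are not independent (their sums all equal $1$), so the multipliers are determined only up to a shift, and one must check the factorization is consistent across all fifteen index triples simultaneously — it is genuinely the condition $a_{013}a_{202}a_{112}=a_{103}a_{022}a_{211}$ that makes this possible, and I would want to confirm no further hidden relation is needed (it is not, because the remaining would-be loop conditions are automatically satisfied by the symmetry $a_{ijk}=a_{ikj}$). Once the factorization is in hand, the rest is routine: strict convexity of negative entropy gives uniqueness, continuity handles the relaxation of integrality, the number of summands is $\mathrm{poly}(N)$, and the polynomial denominators in the Stirling estimates are crudely bounded to land inside the $O(N^8)$ slack. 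The argument for $\Nn_Y$ is word-for-word identical with $A_i$ replaced by $B_i$ (the $B$'s are symmetric in the last two slots in exactly the way needed), so I would just remark that it follows mutatis mutandis.
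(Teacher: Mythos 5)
Your central step --- showing that $g([\overline a_{ijk}])\le g([a_{ijk}])$ for all admissible $[\overline a_{ijk}]$ by strict concavity of $\log g$ (negative entropy) on the polytope cut out by (\ref{con2})--(\ref{con4}), with stationarity at $[a_{ijk}]$ certified via the condition $a_{013}a_{202}a_{112}=a_{103}a_{022}a_{211}$ and the symmetry $a_{ijk}=a_{ikj}$ --- is correct and is essentially the paper's argument in Lagrangian clothing: the paper eliminates variables and checks that the three partial derivatives $\partial\log g/\partial\overline a_{103}$, $\partial\log g/\partial\overline a_{031}$, $\partial\log g/\partial\overline a_{301}$ vanish at $[a_{ijk}]$, which is exactly the statement that the three independent ``loop conditions'' hold, i.e.\ that multipliers (your $p_i,q_j,r_k$) exist.

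The genuine gap is in the counting, and it is precisely what prevents you from reaching the exponent $8$ in the statement. You bound the number of summands in $\Nn_X$ by $O(N^{15})$ and claim this lands inside the $O(N^8\Nn_X^\ast)$ bound ``with room to spare''; but since $\Nn_X^\ast=\Theta(N^{-5}[A_0^{A_0}\cdots A_4^{A_4}\,g([a_{ijk}])]^N)$, bounding each of $O(N^{15})$ terms by the maximal bracket only gives $\Nn_X=O(N^{15}\cdot N^{5}\Nn_X^\ast)=O(N^{20}\Nn_X^\ast)$, which does not prove the proposition as stated (it would still suffice for the downstream asymptotics, but that is a different claim). The missing observation --- which is also the first half of the paper's proof --- is that conditions (\ref{con1})--(\ref{con4}) force $\overline a_{004},\overline a_{040},\overline a_{400}$ to equal $a_{004},a_{040},a_{400}$ and determine all remaining $\overline a_{ijk}$ as affine functions of the three quantities $\overline a_{103},\overline a_{031},\overline a_{301}$ alone; hence there are at most $(N+1)^3$ admissible tuples, and $(N+1)^3\cdot N^5$ gives the stated $N^8$. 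Incidentally, this three-parameter elimination also makes your Lagrangian bookkeeping unnecessary: once the feasible set is parametrized by $(\overline a_{103},\overline a_{031},\overline a_{301})$, stationarity is just the vanishing of three explicit log-differences, of which one is the hypothesis $a_{013}a_{202}a_{112}=a_{103}a_{022}a_{211}$ (via $a_{121}=a_{112}$), one reduces to it by the symmetry, and one holds identically.
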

\begin{proof}
Any set of values $[\overline{a}_{ijk}]$ that satisfies conditions (\ref{con1})--(\ref{con4}) is such that
$\overline{a}_{004}=a_{004}$, $\overline{a}_{040}=a_{040}$ and $\overline{a}_{400}=a_{400}$.
Moreover, the other values $a_{ijk}$ depend only on $\overline{a}_{103},\overline{a}_{031}$ and $\overline{a}_{301}$:
\begin{eqnarray*}
\overline{a}_{013}&=&B_3N-\overline{a}_{103}\\
\overline{a}_{130}&=&B_3N-\overline{a}_{031}\\
\overline{a}_{310}&=&A_3N-\overline{a}_{301}\\
\overline{a}_{022}&=&(A_0-2B_4-B_3)N+\overline{a}_{103}-\overline{a}_{031}\\
\overline{a}_{202}&=&(-A_4+B_0-B_4)N-\overline{a}_{103}-\overline{a}_{301}\\
\overline{a}_{220}&=&(-A_3-A_4+B_0-B_3-B_4)N+\overline{a}_{031}+\overline{a}_{301}\\
\overline{a}_{112}&=&(-A_0+A_4-B_0+B_2+B_3+3B_4)N+\overline{a}_{031}+\overline{a}_{301}\\
\overline{a}_{121}&=&(-A_0+A_3+A_4-B_0+B_2+2B_3+3B_4)N-\overline{a}_{103}-\overline{a}_{301}\\
\overline{a}_{211}&=&(A_2+A_3+2A_4-2B_0+B_3+2B_4)N-\overline{a}_{031}+\overline{a}_{103}.
\end{eqnarray*}
Note that there are at most $(N+1)$ choices for each $\overline{a}_{103},\overline{a}_{031}$ and $\overline{a}_{301}$,
from condition (\ref{con1}).
There are thus at most $(N+1)^3$ choices for the values $[\overline{a}_{ijk}]$ satisfying conditions (\ref{con1})--(\ref{con4}).

We now show that the expression $g([\overline{a}_{ijk}])$
is maximized for the values $[\overline{a}_{ijk}]=[a_{ijk}]$.
Let us take the logarithm of the expression $g$. 
Since this is a concave function on a convex domain,
a local optimum of $\log g$ is a global maximum of $g$.
The partial derivatives of $\log g$ are as follows.
\begin{eqnarray*}
\frac{\partial \log g}{\partial\overline{a}_{103}}&=&\phantom{-}\log(\overline{a}_{013})-\log(\overline{a}_{103})-\log(\overline{a}_{022})+\log(\overline{a}_{202})+\log(\overline{a}_{121})-\log(\overline{a}_{211})\\
\frac{\partial\log g}{\partial\overline{a}_{031}}&=&-\log(\overline{a}_{031})+\log(\overline{a}_{130})+\log(\overline{a}_{022})-\log(\overline{a}_{220})-\log(\overline{a}_{112})+\log(\overline{a}_{211})\\
\frac{\partial\log g}{\partial\overline{a}_{301}}&=&-\log(\overline{a}_{301})+\log(\overline{a}_{310})+\log(\overline{a}_{202})-\log(\overline{a}_{220})-\log(\overline{a}_{112})+\log(\overline{a}_{121})\\
\end{eqnarray*}
The values $[\overline{a}_{ijk}]=[a_{ijk}]$
satisfy $\frac{\partial\log g}{\partial\overline{a}_{103}}=\frac{\partial\log g}{\partial\overline{a}_{031}}=\frac{\partial\log g}{\partial\overline{a}_{301}}=0$
since $a_{013}a_{202}a_{112}=a_{103}a_{022}a_{211}$.

We conclude that 
$$\Nn_X=O\left((N+1)^3\left[ A_0^{A_0}A_1^{A_1}A_2^{A_2}A_3^{A_3}A_4^{A_4}\times g([a_{ijk}])\right]^{N}\right)=O\left(N^8 \Nn_X^\ast\right),$$
and similarly $\Nn_Y=O\left(N^8 \Nn_Y^\ast\right)$.
\end{proof}

\subsection{The second step}\label{sub_step2}
We will now apply the results of Section \ref{sec_graphtheory}, by associating to $U$ the set of 
all forms $T_{IJK}$ that are 
of type $[\overline a_{ijk}]$ for values $\overline a_{ijk}$ 
satisfying conditions (\ref{con1})--(\ref{con4}),
and to $U^\ast$ the set of 
all forms $T_{IJK}$ of type $[a_{ijk}]$. Note that all the conditions of 
Subsection \ref{sub_graphtheorysetting} are satisfied:
we have $\tau=4$, and values $T_1=T_X$, $T_2=T_3=T_Y$,
$\Nn_1=\Nn_X$, $\Nn_2=\Nn_3=\Nn_Y$, $\Nn_1^\ast=\Nn_X^\ast$ and
$\Nn_2^\ast=\Nn_3^\ast=\Nn_Y^\ast$.

With this association we can use the graph-theoretic framework developed in 
Section \ref{sec_graphtheory}. 
The initial graph $G$,
as defined in Subsection \ref{sub_graphtheorysetting}, corresponds 
to the current sum of trilinear forms (each vertex corresponds to a form 
$T_{IJK}$ of type $[\overline a_{ijk}]$ for values $\overline a_{ijk}$ 
satisfying conditions (\ref{con1})--(\ref{con4})). A removal operation on $G$
corresponds
to zeroing variables with a given superscript.
For instance, removing all vertices $u\in G$ such that 
$f_1(u)=I$ corresponds to zeroing all the $x$-variables 
with superscript $I$. 
Our goal is to zero variables in order 
to obtain a sum of several forms $T_{IJK}$ satisfying the 
following two conditions. First,
each form in the sum is of type $[a_{ijk}]$.
Then, the forms in the 
sum do not share any index 
(i.e., if $T_{IJK}$ and $T_{I'J'K'}$ are in the sum, 
then $I\neq I'$, $J\neq J'$ and $K\neq K'$), which implies
that the same variable does not appear in more than one 
form, and thus that the sum is direct.
This corresponds to constructing an edgeless 
subgraph of $G$ in which all the vertices are in $U^\ast$,
so we are in a situation where Theorem \ref{th_1}
can be applied.

Suppose that the inequality 
$$A_0^{A_0}A_1^{A_1}A_2^{A_2}A_3^{A_3}A_4^{A_4}\ge B_0^{B_0}B_1^{B_1}B_2^{B_2}B_3^{B_3}B_4^{B_4}$$ holds.
Then $T_X=O(T_Y)$, and Equalities (\ref{relation1}) and  (\ref{relation2}) imply that $\Nn_Y=O(\Nn_X)$ and $\Nn_Y^\ast=O(\Nn_X^\ast)$.
From Proposition \ref{proposition_max}
we then obtain the relation $\Nn_1+\Nn_2+\Nn_3=O(N^8\Nn_X^\ast)$.
By the above discussion and Theorem \ref{th_1}, we can obtain
a direct sum of 
$$\Omega\left(\frac{T_X\Nn^\ast_X}{(N^8\Nn_X^\ast)^{1+\epsilon}}\right)$$ 
forms, all of type $[a_{ijk}]$.
By using the trivial 
upper bound $\Nn_X^\ast\le  15^N$,
we obtain the following theorem.
\begin{theorem}\label{theorem_consresult}
Let $q$ be any positive integer and 
$a_{004}$, $a_{400}$, $a_{013}$, $a_{103}$, $a_{301}$, $a_{022}$, $a_{202}$,  $a_{112}$ and $a_{211}$ be any nine positive rational numbers 
satisfying the following three conditions:
\begin{itemize}
\item
$2a_{004}+a_{400}+2a_{013}+2a_{103}+2a_{301}+a_{022}+2a_{202}+2a_{112}+a_{211}=1$;
\item
$a_{013}a_{202}a_{112}=a_{103}a_{022}a_{211}$;
\item
$A_0^{A_0}A_1^{A_1}A_2^{A_2}A_3^{A_3}A_4^{A_4}\ge B_0^{B_0}B_1^{B_1}B_2^{B_2}B_3^{B_3}B_4^{B_4}$.
\end{itemize}
Then,
for any constant $\epsilon>0$, the trilinear form $(F_q\otimes F_q)^{\otimes N}$ can be converted (i.e., degenerated in the sense of Definition \ref{def_deg}) into a direct sum of 
$$ 
\Omega\left(\frac{1}{N^{10 +8\epsilon}15^{N\epsilon}}\left[\frac{1}{A_0^{A_0}A_1^{A_1}A_2^{A_2}A_3^{A_3}A_4^{A_4}}\right]^{N}\right)
$$
forms, each form being isomorphic to 
$$
\bigotimes_{\begin{subarray}{c}0\le i,j,k\le 4\\ i+j+k=4\end{subarray}}T_{ijk}^{\otimes a_{ijk}N}.
$$
\end{theorem}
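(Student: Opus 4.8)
The plan is to chain together the ingredients already set up in Subsections \ref{subsec_ourcon}--\ref{sub_step2}. I would begin from $(F_q\otimes F_q)^{\otimes N}$ and recall that $(F'_q\otimes F'_q)^{\otimes N}\degen (F_q\otimes F_q)^{\otimes N}$ (from $F'_q\degen F_q$ and Proposition \ref{prop_fact}), with $(F'_q\otimes F'_q)^{\otimes N}=\sum_{IJK}T_{IJK}$. The first step is the zeroing of variables from Subsection \ref{sub_step1}: set to zero every $x$-variable whose superscript is not of type $A$ and every $y$- and $z$-variable whose superscript is not of type $B$. Each such operation is a restriction, hence a degeneration, and these compose, so the surviving form $\sum T_{IJK}$ --- the sum over the triples $IJK$ of type $[\overline a_{ijk}]$ with parameters satisfying (\ref{con1})--(\ref{con4}) --- is a degeneration of $(F_q\otimes F_q)^{\otimes N}$.

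I would then invoke the graph-theoretic framework of Section \ref{sec_graphtheory} exactly as described at the beginning of Subsection \ref{sub_step2}: take $U$ to be the set of surviving forms and $U^\ast\subseteq U$ the subset of forms of type $[a_{ijk}]$, with $\tau=4$, $T_1=T_X$, $T_2=T_3=T_Y$, $\Nn_1=\Nn_X$, $\Nn_2=\Nn_3=\Nn_Y$, $\Nn_1^\ast=\Nn_X^\ast$ and $\Nn_2^\ast=\Nn_3^\ast=\Nn_Y^\ast$. The hypotheses of Theorem \ref{th_1} need to be checked: the uniformity conditions hold because whether a sequence indexes $\Nn_X$ or $0$ surviving forms (resp. $\Nn_Y$ or $0$) depends only on whether it is of type $A$ (resp. type $B$), the same dichotomy governs membership in $U^\ast$, and $|f_i(U^\ast)|=T_i$ because $\Nn_X^\ast>0$ and $\Nn_Y^\ast>0$ --- here I use that all nine $a_{ijk}$ are positive and that the $A_i$ and $B_j$ were defined precisely to satisfy (\ref{con2})--(\ref{con4}). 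Since a removal operation in the graph is exactly the zeroing of all variables carrying a fixed superscript, Theorem \ref{th_1} produces, after further zeroings, a \emph{direct} sum (an edgeless graph means no two surviving forms share an index) of $\Omega\!\left(\frac{T_1\Nn_1^\ast}{(\Nn_1+\Nn_2+\Nn_3)^{1+\epsilon}}\right)$ forms, all of type $[a_{ijk}]$; by Definition \ref{def_type} each of them is isomorphic --- up to permuting tensor factors --- to $\bigotimes_{i+j+k=4}T_{ijk}^{\otimes a_{ijk}N}$.

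The remaining work is to lower-bound the number of forms. Using the hypothesis $A_0^{A_0}A_1^{A_1}A_2^{A_2}A_3^{A_3}A_4^{A_4}\ge B_0^{B_0}B_1^{B_1}B_2^{B_2}B_3^{B_3}B_4^{B_4}$ together with the Stirling estimates for $T_X$ and $T_Y$, one gets $T_X=O(T_Y)$; then Equalities (\ref{relation1}) and (\ref{relation2}) yield $\Nn_Y=O(\Nn_X)$ and $\Nn_Y^\ast=O(\Nn_X^\ast)$, and Proposition \ref{proposition_max} yields $\Nn_X=O(N^8\Nn_X^\ast)$, so $\Nn_1+\Nn_2+\Nn_3=O(N^8\Nn_X^\ast)$. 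Substituting into the bound from Theorem \ref{th_1} and cancelling the $\Nn_X^\ast$ factor, the number of forms is $\Omega\!\left(\frac{T_X}{N^{8+8\epsilon}(\Nn_X^\ast)^{\epsilon}}\right)$; bounding $\Nn_X^\ast\le 15^N$ trivially and inserting $T_X=\Theta\!\left(N^{-2}\big(A_0^{A_0}A_1^{A_1}A_2^{A_2}A_3^{A_3}A_4^{A_4}\big)^{-N}\right)$ gives exactly the claimed estimate. Chaining all the degenerations finishes the proof.

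I expect the only subtle points to be the verification that the hypotheses of Theorem \ref{th_1} hold --- in particular that $\Nn_X^\ast,\Nn_Y^\ast\neq 0$ so that $|f_i(U^\ast)|=T_i$ --- and the careful bookkeeping of the polynomial-in-$N$ factors when composing the estimates. The genuinely analytic content, namely that $[a_{ijk}]$ maximizes $g$ so that $\Nn_X$ and $\Nn_X^\ast$ differ only by a polynomial factor (which is exactly where the constraint $a_{013}a_{202}a_{112}=a_{103}a_{022}a_{211}$ is used), has already been isolated in Proposition \ref{proposition_max} and can be treated here as a black box.
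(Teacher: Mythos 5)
Your proposal is correct and follows essentially the same route as the paper: the type-$A$/type-$B$ zeroing of Subsection \ref{sub_step1}, the identification of the surviving forms with the graph framework of Section \ref{sec_graphtheory} (with $U^\ast$ the forms of type $[a_{ijk}]$), an application of Theorem \ref{th_1}, and then the chain $T_X=O(T_Y)$, Equalities (\ref{relation1})--(\ref{relation2}), Proposition \ref{proposition_max}, the trivial bound $\Nn_X^\ast\le 15^N$ and the Stirling estimate for $T_X$ to produce the stated count. Your extra care in checking the uniformity hypotheses of Theorem \ref{th_1} (in particular that $\Nn_X^\ast,\Nn_Y^\ast>0$, which uses the positivity of the $a_{ijk}$) is a point the paper passes over implicitly, but it is the same argument.
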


\section{Upper Bounds on the Exponent of Rectangular Matrix Multiplication}\label{sec_analysis}
Theorem \ref{theorem_consresult} showed how the form 
$(F_q\otimes F_q)^{\otimes N}$ can be used to obtain a direct sum of
many forms $T_{IJK}$ such that
$$
T_{IJK}\cong \bigotimes_{i,j,k:i+j+k=4}T_{ijk}^{\otimes a_{ijk}N}.
$$
In order to apply Sch\"onhage's asymptotic sum inequality (Theorem \ref{theorem_schonhage}), 
we need to analyze the smaller forms $T_{ijk}$.
All the forms except $T_{112}$, $T_{121}$ and $T_{211}$ correspond to matrix multiplications, as described
in Section~\ref{section_CW}. 
In Subsection \ref{sub_T112} we analyze the forms $T_{112}$, $T_{121}$ and $T_{211}$.
Then, in Subsection \ref{sub_together},
we put all our results together and prove our main result.

\subsection{The forms {\boldmath$T_{112}$}, {\boldmath$T_{121}$} and {\boldmath$T_{211}$}}\label{sub_T112}
Let us recall the definition of these three forms.
\begin{eqnarray*}
T_{112}&=&\sum_{i=1}^q x_{i,0}^{1}y_{i,0}^{1}z_{0,q+1}^{2}+\sum_{k=1}^q x_{0,k}^{1}y_{0,k}^{1}z_{q+1,0}^{2}+\sum_{i,k=1}^q x_{i,0}^{1}y_{0,k}^{1}z_{i,k}^{2}+\sum_{i,k=1}^q x_{0,k}^{1}y_{i,0}^{1}z_{i,k}^{2}\\
T_{121}&=&\sum_{i=1}^q x_{i,0}^{1}y_{0,q+1}^{2}z_{i,0}^{1}+
\sum_{k=1}^q x_{0,k}^{1}y_{q+1,0}^{2}z_{0,k}^{1}+
\sum_{i,k=1}^q x_{i,0}^{1}y_{i,k}^{2}z_{0,k}^{1}+
\sum_{i,k=1}^q x_{0,k}^{1}y_{i,k}^{2}z_{i,0}^{1}\\
T_{211}&=&\sum_{i=1}^q x_{0,q+1}^{2}y_{i,0}^{1}z_{i,0}^{1}+
\sum_{k=1}^q x_{q+1,0}^{2}y_{0,k}^{1}z_{0,k}^{1}+
\sum_{i,k=1}^q x_{i,k}^{2}y_{i,0}^{1}z_{0,k}^{1}+
\sum_{i,k=1}^q x_{i,k}^{2}y_{0,k}^{1}z_{i,0}^{1}
\end{eqnarray*}

We first focus on the form $T_{211}$. 
It will be convenient to write $T_{211}=t_{011}+t_{101}+t_{110}+t_{200}$, where
\begin{eqnarray*}
t_{011}&=&\sum_{i=1}^q x_{0,q+1}^{0}y_{i,0}^{1}z_{i,0}^{1},\\
t_{101}&=&\sum_{i,k=1}^q x_{i,k}^{1}y_{0,k}^{0}z_{i,0}^{1},\\
t_{110}&=&\sum_{i,k=1}^q x_{i,k}^{1}y_{i,0}^{1}z_{0,k}^{0},\\
t_{200}&=&\sum_{k=1}^q x_{q+1,0}^{2}y_{0,k}^{0}z_{0,k}^{0}.
\end{eqnarray*}
Note that the superscripts in these forms differ from the original superscripts.
They are nevertheless uniquely determined by the subscripts of the variables.
Observe that $t_{011}\cong t_{200}\cong\braket{1,1,q}$, 
which corresponds to the product of a scalar by a row vector,
and $t_{101}\cong t_{110}\cong\braket{q,q,1}$, which corresponds to 
the product of a $q\times q$ matrix by a column vector.

The following proposition states that tensor powers of $T_{211}$ can be used to construct
a direct sum of several trilinear forms, each one being a $\Cc$-tensor in which the support and all the 
components are isomorphic
to a rectangular matrix product.

\begin{proposition}\label{prop_UH}
Let $b$ be any constant such that $0.916027<b\le1$. 
Then there exists a constant $c\ge 1$ depending only on $b$ such that, 
for any $\epsilon>0$ and any large enough integer $m$, the form $T_{211}^{\otimes 2m}$ can be 
converted into a direct sum of 
$$
\Omega\left(\frac{1}{mc^{2\epsilon m}}\cdot \left[\frac{2}{(2b)^{b}(1-b)^{1-b}}\right]^{2m}\right)
$$
trilinear forms, each form being a $\Cc$-tensor in which:
\begin{itemize}
\item
each component is isomorphic to $\braket{q^{2bm},q^{2bm},q^{2(1-b)m}}$;
\item
the support is isomorphic to $\suppc(\braket{1,1,H})$, where
$
H=\Omega\left(\frac{1}{\sqrt{m}}\cdot\left[(2b)^b(1-b)^{(1-b)}\right]^{2m}\right).
$
\end{itemize}
\end{proposition}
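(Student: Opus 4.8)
The plan is to first write $T_{211}$ as a $\Cc$-tensor with respect to a suitable decomposition of its variable spaces, then take the $2m$-th tensor power and apply the graph-theoretic machinery of Section \ref{sec_graphtheory}. Concretely, I would decompose the $x$-space of $T_{211}$ according to the superscripts $\{0,1,2\}$ appearing in the splitting $T_{211}=t_{011}+t_{101}+t_{110}+t_{200}$, and similarly for the $y$- and $z$-spaces; this exhibits $T_{211}$ as a $\Cc$-tensor whose support is $\{(0,1,1),(1,0,1),(1,1,0),(2,0,0)\}$ with the four stated components $t_{011}\cong t_{200}\cong\braket{1,1,q}$ and $t_{101}\cong t_{110}\cong\braket{q,q,1}$. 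Raising to the $2m$-th power, $T_{211}^{\otimes 2m}$ becomes a $\Cc$-tensor whose components are indexed by triples of sequences $IJK\in\{0,1,2\}^{2m}$ with $I_\ell+J_\ell+K_\ell=2$ for all $\ell$; by a symmetry/balancing argument (choosing the number of coordinates of each of the four types $011,101,110,200$ to be $(b/2)\cdot 2m$, $(b/2)\cdot 2m$, $(b/2)\cdot 2m$ — wait, more precisely so that the total $x$-degree on the ``$2$'' coordinates matches the target), restrict attention to those $IJK$ of a fixed "balanced" type. Here $b$ plays the role of the fraction of coordinates on which the component behaves like a $q\times q$-by-$q\times q$ (i.e.\ ``type 1'') versus those contributing the extra $z$-dimension, so that each surviving component is isomorphic to $\braket{q^{2bm},q^{2bm},q^{2(1-b)m}}$.

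Next I would set up the graph $G$ of Section \ref{sec_graphtheory} with $\tau=2$, taking $U$ to be the set of balanced components and $U^\ast=U$, with the parameters $\Nn_1,\Nn_2,\Nn_3$ and $T_1,T_2,T_3$ given by the appropriate multinomial coefficients (estimated by Stirling's formula, which is where the factor $\big[(2b)^b(1-b)^{1-b}\big]^{2m}$ and the polynomial-in-$m$ corrections enter). The condition $0.916027<b\le 1$ should be exactly what is needed to guarantee the hypothesis of Theorem \ref{th_2} (the inequality $\frac{\Nn_2 T_1}{\Nn_1^\ast}+\frac{\Nn_2}{T_1}<\frac1{1024}$, or its analogue), so that Theorem \ref{th_2} applies and converts $G$, using only removal operations (i.e.\ zeroing variables), into a disjoint union of roughly $\Omega\!\big(T_1/\Nn_2^{\epsilon}\big)$ many 1-cliques each of size $\Omega(\Nn_1^\ast/\Nn_2)$. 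A 1-clique is a set of components all sharing the same first index $I$; by Fact \ref{fact} within such a clique the components are ``independent'' in the $J$ and $K$ coordinates, so each 1-clique is itself a $\Cc$-tensor whose support is a diagonal in the $(J,K)$ plane, i.e.\ isomorphic to $\suppc(\braket{1,1,H})$ with $H$ the size of the clique. Matching the Stirling estimates then gives the claimed bound on the number of forms, $\Omega\!\big(\frac{1}{mc^{2\epsilon m}}\big[\frac{2}{(2b)^b(1-b)^{1-b}}\big]^{2m}\big)$, and on $H=\Omega\!\big(\frac{1}{\sqrt m}\big[(2b)^b(1-b)^{1-b}\big]^{2m}\big)$.

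The main obstacle, I expect, is the bookkeeping that identifies the right notion of ``balanced type'' for the components of $T_{211}^{\otimes 2m}$ and verifies that (a) each such component really is isomorphic to $\braket{q^{2bm},q^{2bm},q^{2(1-b)m}}$ after contracting the degenerate ($\braket{1,1,q}$-type) factors with the matrix-type factors, and (b) the resulting parameters $\Nn_i,T_i$ satisfy the precise numerical hypothesis of Theorem \ref{th_2} for all $b$ in the stated range — this is presumably where the constant $0.916027$ comes from, via an explicit entropy/Stirling computation. A secondary subtlety is checking that the 1-cliques produced by Theorem \ref{th_2}, which are defined purely combinatorially on indices, do correspond at the level of tensors to honest $\Cc$-tensors with support $\suppc(\braket{1,1,H})$ and all components isomorphic to the claimed rectangular product; this requires invoking Fact \ref{fact} to rule out two components in the same clique sharing a second or third index, together with the fact that removal operations on $G$ are exactly variable-zeroings on the trilinear form, so that ``disjoint union of 1-cliques'' translates into a genuine direct sum of $\Cc$-tensors as needed for the later application of Proposition \ref{prop_deg} and Theorem \ref{theorem_schonhage}.
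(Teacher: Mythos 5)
Your plan follows essentially the same route as the paper's proof: the same splitting $T_{211}=t_{011}+t_{101}+t_{110}+t_{200}$ with $t_{011}\cong t_{200}\cong\braket{1,1,q}$ and $t_{101}\cong t_{110}\cong\braket{q,q,1}$, the same zeroing of $T_{211}^{\otimes 2m}$ down to a single balanced type ($(1-b)m$ coordinates of each of the types $011$ and $200$, and $bm$ of each of $101$ and $110$), and the same invocation of Theorem \ref{th_2} with $\tau=2$, $U^\ast=U$, $T_1=\binom{2m}{(1-b)m,(1-b)m,2bm}$, $\Nn_1=\binom{2bm}{bm}$, $T_2=T_3=\binom{2m}{m}$, $\Nn_2=\Nn_3=\binom{m}{(1-b)m}^2$. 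Your guess about the origin of $0.916027$ is right: the binding condition is $2/\bigl((2b)^{2b}(1-b)^{2(1-b)}\bigr)<1$, which makes $T_1\Nn_2/\Nn_1$ exponentially small, while for $b<1$ the other term $\Nn_2/T_1$ decays like $2^{-2(1-b)m}$ up to polynomial factors; and your identification of each 1-clique (components sharing the first index $I$, with $K$ determined by $J$, so distinct $J$'s and $K$'s by Fact \ref{fact}) with a $\Cc$-tensor of support $\suppc(\braket{1,1,H})$ is exactly the paper's argument.

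One point needs repair: the endpoint $b=1$, which the statement includes and which the paper later uses (the optimization for $\alpha$ takes $b=1$). When $b=1$ the zeroing forces $I$ to be the all-one sequence, so $T_1=1$ and $\Nn_2=\Nn_3=1$, and the hypothesis of Theorem \ref{th_2} fails since $\Nn_2/T_1=1\not<1/1024$; so your claim that the range $0.916027<b\le1$ is exactly what Theorem \ref{th_2} requires is false at $b=1$, and the graph-theoretic step cannot be applied there. The paper handles this case by a direct count: all $\binom{2m}{m}=\Theta(4^m/\sqrt{m})$ surviving components share the same first index and form a single $\Cc$-tensor with support isomorphic to $\suppc(\braket{1,1,\binom{2m}{m}})$, which matches the claimed bounds under the convention $0^0=1$. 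This is an easy fix, but it must be stated separately. A further small check you should record explicitly is the hypothesis $\Nn_1\ge\Nn_2\ge\Nn_3$ of Theorem \ref{th_2}, which holds because $(2b)^b(1-b)^{1-b}>1$ in the relevant range of $b$.
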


\noindent\textbf{Remark.} Proposition \ref{prop_UH} uses the convention $0^0=1$. 
For the case $b=1$, the proposition states that the form $T_{112}^{\otimes 2m}$ 
can be used to construct at least one tensor with support isomorphic to
$\suppc(\braket{1,1,H})$ for 
$H=\Omega\left(4^m/\sqrt{m}\right)$, each component being isomorphic to 
$\braket{q^{2m},q^{2m},1}$. 

\begin{proof}[Proof of Proposition \ref{prop_UH}]
For simplicity we suppose that $bm$ is an integer 
(otherwise, we can work with $\floor{bm}$, which gives the same asymptotic complexity).

Let $S$ be the set of all triples $IJK$ with $I\in\{0,1,2\}^{2m}$ and $J,K\in\{0,1\}^{2m}$ such that
$I_\ell+J_\ell+K_\ell=2$ for all $\ell\in\{1,\ldots,2m\}$.
We rise the form $T_{211}$ to the $2m$-th power.
This gives the form
$$
\sum_{IJK\in S}t_{IJK}
$$
where $t_{IJK}=t_{I_1J_1K_1}\otimes \cdots\otimes t_{I_{2m}J_{2m}K_{2m}}$.
Each $x$-variable in the tensor product has for superscript a sequence in $\{0,1,2\}^{2m}$, 
and each $y$-variable or $z$-variable has for superscript a sequence in $\{0,1\}^{2m}$.
Let us decompose the space of $x$-variables as $\bigoplus_{I\in\{0,1,2\}^{2m}} X_I$, where $X_I$ denotes the subspace of $x$-variables with superscript $I$.
Similarly, decompose the space of $y$-variables as $\bigoplus_{J\in\{0,1\}^{2m}} Y_J$ and the space of $z$-variables as $\bigoplus_{K\in\{0,1\}^{2m}} Z_K$.  
The form
$\sum t_{IJK}$ above is then a $\Cc$-tensor with respect to this decomposition, with support
$S$.
We will now modify this form (by zeroing variables, which will modify its support) in order to obtain a simple expression for each of its components.

We zero all the $x$-variables except those for which the superscript $I$
has $(1-b)m$ coordinates with value $0$,
$(1-b)m$ coordinates with value  $2$ and $2bm$ coordinates with value $1$.
We zero all the $y$-variables and $z$-variables except those for which the superscript 
has $m$ coordinates with value $0$ and $m$ coordinates with value  $1$.
After these zeroing operations, the only forms remaining in the sum
are those corresponding to indexes $IJK$ satisfying the following four conditions.
\begin{eqnarray*}
|\left\{\ell\in\{1,\ldots,2m\}\:|\: I_\ell=0, J_\ell=1 \textrm{ and } K_\ell=1\right\}|&=&(1-b)m\\
|\left\{\ell\in\{1,\ldots,2m\}\:|\: I_\ell=1, J_\ell=0 \textrm{ and } K_\ell=1\right\}|&=&bm\\
|\left\{\ell\in\{1,\ldots,2m\}\:|\: I_\ell=1, J_\ell=1 \textrm{ and } K_\ell=0\right\}|&=&bm\\
|\left\{\ell\in\{1,\ldots,2m\}\:|\: I_\ell=2, J_\ell=0 \textrm{ and } K_\ell=0\right\}|&=&(1-b)m
\end{eqnarray*}
This means that each form $t_{IJK}$ in this new sum (i.e., each component in the corresponding $\Cc$-tensor) is isomorphic to 
$$t_{011}^{\otimes (1-b)m}\otimes t_{101}^{\otimes bm}\otimes t_{110}^{\otimes bm}\otimes t_{200}^{\otimes (1-b)m}
\cong \braket{q^{2bm},q^{2bm},q^{2(1-b)m}}.$$ 
We now analyze the support of the new sum (the decomposition considered is unchanged).

%

The case $b=1$ can be analyzed directly: there are ${2m \choose m}=\Theta\left(4^m/ \sqrt{m}\right)$ forms in the sum,
all of them sharing the same first index (the all-one sequence $\boldsymbol 1=1\cdots 1$). 
This sum is then $\sum_{J}t_{\boldsymbol 1JK}$, where for each $t_{\boldsymbol 1JK}$ the sequence $K$
is uniquely determined by $J$.
The support of the sum is thus isomorphic
to $\suppc(\braket{1,1,{2m \choose m}})$.

To analyze the case $b<1$, we will interpret the sum in the framework developed in Section \ref{sec_graphtheory}, 
by letting $U$ be the set of triples $IJK$ satisfying the above four conditions.
Indeed, all the requirements for $U$ are satisfied: we have $\tau=2$ and
\begin{eqnarray*}
T_1&=&{2m \choose (1-b)m,(1-b)m,2mb}=\Theta\left(\frac{1}{m}\cdot \left[\frac{2}{(2b)^b(1-b)^{1-b}}\right]^{2m}\right)\\
\Nn_1&=&{2mb \choose mb}=\Theta\left(\frac{1}{\sqrt{m}}\cdot  \left[2^b\right]^{2m}\right)\\
T_2=T_3&=&{2m \choose m}=\Theta\left(\frac{1}{\sqrt{m}}\cdot  \left[2\right]^{2m}\right)\\
\Nn_2=\Nn_3&=&{m \choose m(1-b)}{m \choose m(1-b)}=\Theta\left(\frac{1}{m}\cdot  \left[\frac{1}{b^b(1-b)^{1-b}}\right]^{2m}\right) .\\
\end{eqnarray*}
Note in particular that $N_1> N_2$, since $(2b)^b(1-b)^{1-b}>1$ for any $b\ge 0.773$.
Choose $U^\ast=U$ (which means that $\Nn_i=\Nn_i^\ast$ for each $i\in\{1,2,3\}$). 
The correspondence with the graph-theoretic interpretation of Section \ref{sec_graphtheory}
is as follows. Each vertex in the graph $G$ defined in Subsection \ref{sub_graphtheorysetting}
corresponds to one form $T_{IJK}$ in the sum, which is isomorphic 
to $\braket{q^{2bm},q^{2bm},q^{2(1-b)m}}$ from the discussion above. One removal
operation corresponds to zeroing variables.
A 1-clique of length $n$ corresponds to a sum of $n$ forms sharing their first index, which is precisely
a $\Cc$-tensor with support isomorphic to $\suppc(\braket{1,1,n})$. 


For any value $b>0.916027$ we have
$$
\frac{2}{(2b)^{2b}(1-b)^{2(1-b)}}<1
$$
and thus $T_1\Nn_2/\Nn_1=o(1)$. 
By Theorem \ref{th_2}, for any $\epsilon>0$ we can then convert the sum into a direct sum of  
$$\Omega\left(\frac{T_1}{\Nn_2^\epsilon}\right)=\Omega\left(\frac{1}{m^{1-\epsilon}}\cdot \left[\frac{2}{2^b\left(b^{b}(1-b)^{1-b}\right)^{1-\epsilon}}\right]^{2m}\right)$$ 
$\Cc$-tensors, each tensor having support isomorphic to 
$\suppc(\braket{1,1,n})$ for 
$$n=\Omega\left(\frac{\Nn_1}{\Nn_2}\right)=
\Omega\left(\sqrt{m}\cdot\left[(2b)^b(1-b)^{(1-b)}\right]^{2m}\right)$$
and components isomorphic to $\braket{q^{2bm},q^{2bm},q^{2(1-b)m}}$.
Finally, note that 
$$
\frac{1}{m^{1-\epsilon}}\cdot \left[\frac{2}{2^b\left(b^{b}(1-b)^{1-b}\right)^{1-\epsilon}}\right]^{2m}\ge
\frac{1}{mc^{2\epsilon m}}\left[\frac{2}{(2b)^{b}(1-b)^{1-b}}\right]^{2m}
$$
for some constant $c\ge 1$ depending only on $b$, since $b^{b}(1-b)^{1-b}<1$.
\end{proof}


The forms $T_{112}$ and $T_{121}$ can be analyzed in the same way as
$T_{211}$ by permuting the roles of the $x$-variables,
the $y$-variables and the $z$-variables. 
Similarly to the statement of Proposition \ref{prop_UH},
the form $T_{112}^{\otimes 2m}$ gives a direct sum of  $\Cc$-tensors 
with support isomorphic to $\braket{1,H,1}$, each component in the tensors
being isomorphic to $\braket{q^{2bm},q^{2(1-b)m},q^{2bm}}$.
The form $T_{121}^{\otimes 2m}$ gives a direct sum of  $\Cc$-tensors 
with support isomorphic to $\suppc(\braket{H,1,1})$, each component 
being isomorphic to 
$\braket{q^{2(1-b)m},q^{2bm},q^{2bm}}$.

Suppose that different constants are used to treat each of the three forms:
the forms $T_{112}$ and $T_{121}$ are processed with some constant $b$,
while $T_{211}$ is processed with another constant $\tilde b$. 
For any fixed values $a_{112},a_{211}$ and any $\epsilon>0$, the form $T_{112}^{\otimes a_{112}N}\otimes T_{112}^{\otimes a_{112}N}\otimes T_{211}^{\otimes a_{211}N}$
can then be used to construct a direct sum of
$$
\Omega\left(\frac{1}{N^3c'^{N\epsilon}}\cdot 
\left[\frac{2}{(2b)^{b}(1-b)^{1-b}}\right]^{2a_{112}N}\times \left[\frac{2}{(2\tilde b)^{\tilde b}(1-\tilde b)^{1-\tilde b}}\right]^{a_{211}N}
\right)
$$
$\Cc$-tensors, for some value $c'\ge 1$ depending only on $b$ and $\tilde b$.
Each of these $\Cc$-tensors has a support isomorphic to 
$\suppc(\braket{H_{112},H_{112},H_{211}})$, where 
\begin{eqnarray*}
H_{112}&=&\Omega\left(\frac{1}{\sqrt{N}}\cdot\left[(2b)^b(1-b)^{(1-b)}\right]^{a_{112}N}\right)\\
H_{211}&=&\Omega\left(\frac{1}{\sqrt{N}}\cdot\left[(2\tilde b)^{\tilde b}(1-\tilde b)^{(1-\tilde b)}\right]^{a_{211}N}\right).
\end{eqnarray*}
In all these $\Cc$-tensors, each component is isomorphic to the rectangular matrix multiplication 
$$\braket{q^{(a_{112}+ a_{211}\tilde b)N},q^{(a_{112}+ a_{211}\tilde b)N},q^{(2a_{112}b+ a_{211}(1-\tilde b))N}}.$$ 
We can then use Propositions \ref{prop_deg} and \ref{prop_Strassen} to convert each $\Cc$-tensor into a direct sum of at least
$\frac{3}{4}H_{112}\times \min(H_{112},H_{211})$ trilinear forms, each isomorphic to 
$\braket{q^{(a_{112}+ a_{211}\tilde b)N}\!,q^{(a_{112}+ a_{211}\tilde b)N}\!,q^{(2a_{112}b+ a_{211}(1-\tilde b))N}}$. 
We thus obtain the following result.

\begin{proposition}\label{cor_s}
Let $a_{112}$ and $a_{211}$ be any two positive constants.
Let $b$ and $\tilde b$ be any two constants such that $0.916027<b,\tilde b\le1$. 
Then there exists a constant $c'\ge 1$ such that, 
for any $\epsilon>0$, 
the form $$T_{112}^{\otimes a_{112}N}\otimes T_{121}^{\otimes a_{112}N}\otimes T_{211}^{\otimes a_{211}N}$$ can be 
converted into a direct sum of
$$
\Omega\left(\frac{1}{N^4c'^{N\epsilon}}\cdot \left[\frac{2^{2a_{112}+a_{211}}}
{\max\left(\left[(2b)^b(1-b)^{1-b}\right]^{a_{112}},\left[(2\tilde b)^{\tilde b}(1-\tilde b)^{1-\tilde b}\right]^{a_{211}}\right)}\right]^N\right)
$$
forms, each form being isomorphic to 
$\braket{q^{(a_{112}+ a_{211}\tilde b)N},q^{(a_{112}+ a_{211}\tilde b)N},q^{(2a_{112}b+a_{211}(1-\tilde b) )N}}.$
\end{proposition}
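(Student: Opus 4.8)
The plan is to combine the three single-form analyses (Proposition~\ref{prop_UH} and its two permuted variants) in a modular way. First I would recall that, by Proposition~\ref{prop_fact} (compatibility of degeneration with tensor products), a degeneration of each of the three factors yields a degeneration of their tensor product; so it suffices to take the direct sums of $\Cc$-tensors produced for $T_{211}^{\otimes a_{211}N}$ (with constant $\tilde b$) and for each of $T_{112}^{\otimes a_{112}N}$, $T_{121}^{\otimes a_{112}N}$ (with constant $b$), and tensor them together. Applying Proposition~\ref{prop_UH} with $2m=a_{211}N$ to $T_{211}$, and the two permuted statements with $2m=a_{112}N$ to $T_{112}$ and $T_{121}$, I get three direct sums of $\Cc$-tensors. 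Their tensor product is, by the last Proposition of Subsection on $\Cc$-tensors (stability under tensor product), again a direct sum of $\Cc$-tensors: the number of summands is the product of the three counts, the support of each is isomorphic to $\suppc(\braket{H_{112},H_{112},H_{211}})$ where $H_{112}$ comes from the $T_{112}$/$T_{121}$ factors (support types $\braket{1,H,1}$ and $\braket{H,1,1}$) and $H_{211}$ from the $T_{211}$ factor (support type $\braket{1,1,H}$), and each component is the tensor product of the three component matrix multiplications, i.e. $\braket{q^{(a_{112}+a_{211}\tilde b)N},q^{(a_{112}+a_{211}\tilde b)N},q^{(2a_{112}b+a_{211}(1-\tilde b))N}}$ after multiplying out the three size triples.

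Next I would pass from ``$\Cc$-tensors with support $\suppc(\braket{H_{112},H_{112},H_{211}})$'' to ``direct sum of plain matrix multiplications'' using Proposition~\ref{prop_Strassen} together with Proposition~\ref{prop_deg}: the support $\suppc(\braket{H_{112},H_{112},H_{211}})$ contains a diagonal combinatorial degeneration of size at least $\tfrac34 \cdot e_1 e_2$ where $e_1 \le e_2$ are the two smallest of $\{H_{112},H_{112},H_{211}\}$, hence at least $\tfrac34 H_{112}\min(H_{112},H_{211})$; Proposition~\ref{prop_deg} then degenerates the $\Cc$-tensor into the direct sum of its components indexed by this diagonal, each of which is the matrix multiplication above. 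Multiplying this per-$\Cc$-tensor gain by the number of $\Cc$-tensors, and substituting the asymptotic expressions for the counts and for $H_{112},H_{211}$ from the permuted Proposition~\ref{prop_UH}, the key cancellation is that the $\left[(2b)^b(1-b)^{1-b}\right]$ and $\left[(2\tilde b)^{\tilde b}(1-\tilde b)^{1-\tilde b}\right]$ factors coming from $H_{112},H_{211}$ partially cancel the reciprocals appearing in the $\Cc$-tensor counts, and one of the two $H$'s drops out through the $\min$, leaving a $\max$ in the denominator — exactly the stated bound, with the polynomial-in-$N$ and the $c'^{N\epsilon}$ slack absorbing all lower-order factors (the $1/\sqrt N$'s, the $1/m$'s, and the $\epsilon$-loss from Theorem~\ref{th_2}).

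The main obstacle I expect is bookkeeping rather than conceptual: tracking how the $\min(H_{112},H_{211})$ from Strassen's construction interacts with the product of the three $\Cc$-tensor counts so that the final denominator is a clean $\max$ of two expressions. Concretely, the count of $\Cc$-tensors from $T_{112}^{\otimes a_{112}N}\otimes T_{121}^{\otimes a_{112}N}$ is of order $\left[\tfrac{2}{(2b)^b(1-b)^{1-b}}\right]^{2a_{112}N}$ and from $T_{211}^{\otimes a_{211}N}$ of order $\left[\tfrac{2}{(2\tilde b)^{\tilde b}(1-\tilde b)^{1-\tilde b}}\right]^{a_{211}N}$; multiplying by $H_{112}\min(H_{112},H_{211})$ where $H_{112}\sim\left[(2b)^b(1-b)^{1-b}\right]^{a_{112}N}$ and $H_{211}\sim\left[(2\tilde b)^{\tilde b}(1-\tilde b)^{1-\tilde b}\right]^{a_{211}N}$, the $H_{112}$ factor exactly kills one of the two $(2b)^{b}(1-b)^{1-b}$ powers in the count, and the surviving $\min$ is either another $H_{112}$ or an $H_{211}$, which is precisely $1/\max(\cdot,\cdot)$ after it cancels one more power. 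One must also check the hypothesis $T_1\Nn_2/\Nn_1 = o(1)$ of Theorem~\ref{th_2} holds for the constants involved, which is where the threshold $0.916027$ enters, and handle the $b=1$ (resp. $\tilde b=1$) boundary case via the Remark after Proposition~\ref{prop_UH}. All remaining loss (Stirling constants, the factor $3/4$, the $N^{-3}\to N^{-4}$ degradation, and the $\epsilon$-slack) is polynomial or sub-exponential and is swallowed by the $\tfrac{1}{N^4 c'^{N\epsilon}}$ prefactor, so no sharp estimate is needed there.
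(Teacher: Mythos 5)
Your proposal is correct and follows essentially the same route as the paper: apply Proposition~\ref{prop_UH} (and its permuted variants) to the three factors, tensor the resulting $\Cc$-tensors together, and then use Propositions~\ref{prop_Strassen} and~\ref{prop_deg} to extract at least $\tfrac34 H_{112}\min(H_{112},H_{211})$ matrix products per $\Cc$-tensor, with the $\min/\max$ cancellation you describe giving exactly the stated count. The only redundancy is re-checking the hypothesis of Theorem~\ref{th_2}, which is already handled inside the proof of Proposition~\ref{prop_UH}.
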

%
\subsection{Main theorem}\label{sub_together}

Let us define the following three quantities.
\begin{eqnarray*}
Q&=&(2q)^{a_{103}+a_{301}}\times (q^2+2)^{a_{202}}\times q^{a_{112}+a_{211}\tilde{b}}\\
R&=&(2q)^{2a_{013}}\times (q^2+2)^{a_{022}}\times q^{2a_{112}b+(1-\tilde{b})a_{211}}\\
\mathcal{M}&=&
\frac{2^{2a_{112}+a_{211}}}{A_0^{A_0}A_1^{A_1}A_2^{A_2}A_3^{A_3}A_4^{A_4}}\times
\frac{1}
{\max\left(\left[(2b)^b(1-b)^{1-b}\right]^{a_{112}},\left[(2\tilde b)^{\tilde b}(1-\tilde b)^{1-\tilde b}\right]^{a_{211}}\right)}.\\
\end{eqnarray*}
Our main theorem gives an upper bound on $\omega(1,1,k)$ that depends on these quantities.
\begin{theorem}\label{maintheorem}
Let $q$ be any positive integer and $b, \tilde b$ be such that $0.916027<b,\tilde b\le1$.
Let $a_{004}$, $a_{400}$, $a_{013}$, $a_{103}$, $a_{301}$, $a_{022}$, $a_{202}$,  $a_{112}$ and $a_{211}$ be any nine positive rational numbers 
satisfying the following three conditions:
\begin{itemize}
\item
$2a_{004}+a_{400}+2a_{013}+2a_{103}+2a_{301}+a_{022}+2a_{202}+2a_{112}+a_{211}=1$;
\item
$a_{013}a_{202}a_{112}=a_{103}a_{022}a_{211}$;
\item
$A_0^{A_0}A_1^{A_1}A_2^{A_2}A_3^{A_3}A_4^{A_4}\ge B_0^{B_0}B_1^{B_1}B_2^{B_2}B_3^{B_3}B_4^{B_4}$.
\end{itemize}
Then 
$$\mathcal{M}Q^{w(1,1,\frac{\log R}{\log Q})}\le (q+2)^2.$$
\end{theorem}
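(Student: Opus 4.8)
The plan is to assemble pieces that are already in place. Theorem~\ref{theorem_consresult} degenerates $(F_q\otimes F_q)^{\otimes N}$ into many copies of the building block $\bigotimes_{i+j+k=4}T_{ijk}^{\otimes a_{ijk}N}$; the isomorphisms of Section~\ref{section_CW} together with Proposition~\ref{cor_s} degenerate that building block into many copies of a single rectangular matrix product $\braket{Q^N,Q^N,R^N}$; and finally Sch\"onhage's asymptotic sum inequality (Theorem~\ref{theorem_schonhage}), combined with the bound $\underline R((F_q\otimes F_q)^{\otimes N})\le (q+2)^{2N}$ (the source tensor is a sum of $(q+2)^{2N}$ products), yields the claimed inequality.

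Concretely, I would first split $\bigotimes_{i+j+k=4}T_{ijk}^{\otimes a_{ijk}N}$ as the tensor product of an \emph{easy} part, ranging over all index triples other than the permutations of $(1,1,2)$, and a \emph{hard} part $T_{112}^{\otimes a_{112}N}\otimes T_{121}^{\otimes a_{112}N}\otimes T_{211}^{\otimes a_{211}N}$ (here I use $a_{121}=a_{112}$, and similarly $a_{040}=a_{004}$, $a_{031}=a_{013}$, $a_{130}=a_{103}$, $a_{310}=a_{301}$, $a_{220}=a_{202}$). Using the isomorphisms of Section~\ref{section_CW} and the identity $\braket{m,n,p}\otimes\braket{m',n',p'}\cong\braket{mm',nn',pp'}$, the easy part is isomorphic to the single matrix product $\braket{P,P,P'}$ with $P=(2q)^{(a_{103}+a_{301})N}(q^2+2)^{a_{202}N}$ and $P'=(2q)^{2a_{013}N}(q^2+2)^{a_{022}N}$, the factors $\braket{1,1,1}$ contributed by $T_{004},T_{040},T_{400}$ being inert. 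Proposition~\ref{cor_s} degenerates the hard part into a direct sum of $C_2=\Omega\bigl(N^{-4}c'^{-N\epsilon}[2^{2a_{112}+a_{211}}/\max(\cdots)]^N\bigr)$ copies of $\braket{q^{(a_{112}+a_{211}\tilde b)N},q^{(a_{112}+a_{211}\tilde b)N},q^{(2a_{112}b+(1-\tilde b)a_{211})N}}$. Since degeneration is compatible with tensor products and with direct sums (Proposition~\ref{prop_fact}), and since tensoring a matrix product with a direct sum distributes, $\bigotimes_{i+j+k=4}T_{ijk}^{\otimes a_{ijk}N}$ degenerates into $C_2$ copies of $\braket{P,P,P'}\otimes\braket{q^{(a_{112}+a_{211}\tilde b)N},\ldots}$, which I would simply check equals $\braket{Q^N,Q^N,R^N}$ by collecting exponents against the definitions of $Q$ and $R$.

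Chaining this with Theorem~\ref{theorem_consresult} (using transitivity of degeneration and Proposition~\ref{prop_fact} once more for the $C_1$ outer copies) gives that $(F_q\otimes F_q)^{\otimes N}$ degenerates into a direct sum of $C_1C_2$ copies of $\braket{Q^N,Q^N,R^N}$, where
$$C_1C_2=\Omega\!\left(\frac{1}{N^{O(1)}(15c')^{N\epsilon}}\,\mathcal M^N\right),$$
the power $\mathcal M^N$ arising because the factor $[A_0^{A_0}\cdots A_4^{A_4}]^{-N}$ from $C_1$ multiplies the factor $[2^{2a_{112}+a_{211}}/\max(\cdots)]^N$ from $C_2$. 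Writing $\braket{Q^N,Q^N,R^N}$ as a rectangular matrix product of aspect ratio $k=\tfrac{\log R}{\log Q}$ (note $Q,R>1$) and applying Theorem~\ref{theorem_schonhage} with $m=Q^N$ yields $C_1C_2\cdot Q^{N\omega(1,1,k)}\le (q+2)^{2N}$. Taking $N$-th roots kills the polynomial factor as $N\to\infty$, leaving $(15c')^{-\epsilon}\mathcal M\,Q^{\omega(1,1,k)}\le (q+2)^2$, and then letting $\epsilon\to0$ gives exactly $\mathcal M\,Q^{\omega(1,1,\log R/\log Q)}\le (q+2)^2$.

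I do not expect a conceptual obstacle — the genuinely new content was already spent in Proposition~\ref{cor_s} — but two bookkeeping points need care. The first is verifying that tensoring the easy part $\braket{P,P,P'}$ with the component produced by Proposition~\ref{cor_s} collects the powers of $2q$, $q^2+2$ and $q$ precisely into $Q$ (first and second slots) and $R$ (third slot), and that the multiplicative constants combine into $\mathcal M$. The second, and the more delicate one, is the limiting argument required to invoke Theorem~\ref{theorem_schonhage}: the quantities $Q^N$, $R^N$, and the exponents $bN$, $\tilde bN$ need not be integers, so one must work along a sequence of admissible $N$ with the floor-rounding of Proposition~\ref{prop_UH} incorporated, obtaining matrix products $\braket{Q_N,Q_N,R_N}$ with $Q_N^{1/N}\to Q$ and $R_N^{1/N}\to R$, applying the inequality for each $N$, and passing to the limit using continuity (convexity) of $\omega(1,1,\cdot)$.
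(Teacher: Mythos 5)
Your proposal is correct and follows essentially the same route as the paper's proof: degenerate $(F_q\otimes F_q)^{\otimes N}$ via Theorem~\ref{theorem_consresult}, handle the easy $T_{ijk}$'s by the isomorphisms of Section~\ref{section_CW} and the hard part by Proposition~\ref{cor_s}, collect everything into copies of $\braket{Q^N,Q^N,R^N}$, apply Sch\"onhage's inequality with $\underline{R}((F_q\otimes F_q)^{\otimes N})\le (q+2)^{2N}$, and take $N$-th roots and limits. Your extra bookkeeping remarks (exponent collection into $Q$, $R$, $\mathcal{M}$, and integrality/rounding of the dimensions) are exactly the points the paper also treats, implicitly or via the flooring convention of Proposition~\ref{prop_UH}.
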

\begin{proof}
Let $\epsilon>0$ be an arbitrary positive value.
Let $N$ be a large integer and consider the trilinear form $(F_q\otimes F_q)^{\otimes N}$.
Theorem \ref{theorem_consresult} shows that this form can be used to obtain a direct sum of
$$ 
r_1=\Omega\left(\frac{1}{N^{10+8\epsilon}15^{N\epsilon}}\left[\frac{1}{A_0^{A_0}A_1^{A_1}A_2^{A_2}A_3^{A_3}A_4^{A_4}}\right]^{N}\right)
$$
forms, each isomorphic to
$$\bigotimes_{i,j,k:\: i+j+k=4}T_{ijk}^{\otimes a_{ijk}N}.$$

All the terms $T_{ijk}$ in this form, except $T_{112}$, $T_{121}$ and $T_{211}$, correspond to matrix multiplications and have been analyzed 
in Section \ref{section_CW}.
By Proposition \ref{cor_s} the part $T_{112}^{\otimes a_{121}N}\otimes T_{112}^{\otimes a_{112}N}\otimes T_{211}^{\otimes a_{211}N}$ 
can be used to obtain a direct sum of 
$$
r_2=\Omega\left(\frac{1}{N^4c'^{N\epsilon}}\cdot\left[
\frac{2^{2a_{112}+a_{211}}}
{\max\left(\left[(2b)^b(1-b)^{1-b}\right]^{a_{112}},\left[(2\tilde b)^{\tilde b}(1-\tilde b)^{1-\tilde b}\right]^{a_{211}}\right)}
\right]^N\right)
$$
matrix multiplications $\braket{q^{(a_{112}+ a_{211}\tilde b)N},q^{(a_{112}+ a_{211}\tilde b)N},q^{(2a_{112}b+(1-\tilde b) a_{211})N}}$.

This means that the trilinear form
$(F_q\otimes F_q)^{\otimes N}$ can be converted into a direct sum of 
$r_1r_2$
matrix multiplications $\braket{Q^N,Q^N,R^N}$. 
In other words:
$$
r_1r_2 \cdot \braket{Q^N,Q^N,R^N} \degen (F_q\otimes F_q)^{\otimes N}.
$$

Since $\underline{R}\left(F_q\otimes F_q\right)\le (q+2)^{2}$,
as mentioned in Section \ref{section_CW}, we know that 
$\underline{R}\left((F_q\otimes F_q)^{\otimes N}\right)\le (q+2)^{2N}$.
By Sch\"onhage's asymptotic sum inequality (Theorem~\ref{theorem_schonhage}) we then conclude that
$$
r_1r_2 \times Q^{N\omega(1,1,\frac{\log R}{\log Q})}\le (q+2)^{2N}.
$$
Taking the $N$-th root, we obtain:
\begin{eqnarray*}
\frac{1}{(15c')^\epsilon  N^{(14+8\epsilon)/N}}\times \mathcal{M} Q^{\omega(1,1,\frac{\log R}{\log Q})}
&\le& (q+2)^{2}.
\end{eqnarray*}
For any $\epsilon>0$ the above inequality holds for large enough integers $N$. 
By letting $N$ grow to infinity, and then letting $\epsilon$ decrease to zero, 
we conclude that $\mathcal{M} Q^{\omega(1,1,\frac{\log R}{\log Q})}\le (q+2)^{2}.$
\end{proof}

\section{Optimization}\label{sec_optimization}
In this section we use Theorem \ref{maintheorem} to derive numerical
upper bounds on the exponent of rectangular matrix multiplication,
and prove Theorem \ref{th_alpha}.

\subsection{Square matrix multiplication}
In this subsection we briefly show that our results give, for the exponent of square matrix multiplication,
the same upper bound as the bound obtained by Coppersmith and Winograd \cite{Coppersmith+90} .

Due to the symmetry of square matrix multiplication, 
we take $b=\tilde b$, $a_{400}=a_{004}$, $a_{103}=a_{013}=a_{301}$,
$a_{022}=a_{202}$ and $a_{112}=a_{211}$.
Then only six parameters remain, and the conditions
$a_{013}a_{202}a_{112}=a_{103}a_{022}a_{211}$ and
$A_0^{A_0}A_1^{A_1}A_2^{A_2}A_3^{A_3}A_4^{A_4}= B_0^{B_0}B_1^{B_1}B_2^{B_2}B_3^{B_3}B_4^{B_4}$
are immediately satisfied.

Theorem \ref{maintheorem} shows that
$$
\frac{8^{a_{112}}\times \left[(2q)^{2a_{013}}\times (q^2+2)^{a_{202}}\times q^{(1+b)a_{112}}\right]^{\omega(1,1,1)}}
{A_0^{A_0}A_1^{A_1}A_2^{A_2}A_3^{A_3}A_4^{A_4}\times\left[(2b)^b(1-b)^{(1-b)}\right]^{a_{112}}}
\le (q+2)^2.
$$
By choosing $q=6$, $b=0.9724317$, $a_{103}=0.012506$, $a_{202}=0.102546$, $a_{112}=0.205542$ and 
$a_{004}=0.0007/3$, we obtain the upper bound $\omega(1,1,1)<2.375477$. 

This upper bound on the exponent of square matrix multiplication 
is exactly the same value as in~\cite{Coppersmith+90}.
This is not a coincidence.
Indeed, by setting $b=\frac{q^{\omega(1,1,1)}}{q^{\omega(1,1,1)}+2}$, which is larger
than 0.916027 for $q\ge 5$, we obtain
$$
\frac{q^{b\omega(1,1,1)}}{(2b)^b(1-b)^{(1-b)}}=\frac{q^{\omega(1,1,1)}+2}{2}
$$
and our inequality becomes
$$
\frac{(2q)^{2a_{013}\omega(1,1,1)}\times (q^2+2)^{a_{202}\omega(1,1,1)}\times (4q^{\omega(1,1,1)}(q^{\omega(1,1,1)}+2))^{a_{112}}}
{A_0^{A_0}A_1^{A_1}A_2^{A_2}A_3^{A_3}A_4^{A_4}}
\le (q+2)^2,
$$
which is exactly the same optimization problem as in Section 8 of \cite{Coppersmith+90}.

\subsection{Rectangular matrix multiplication}
In this subsection we explain how to use Theorem \ref{maintheorem} to derive an upper bound on $\omega(1,1,k)$ for 
an arbitrary value $k$, and show how to obtain the results stated in Table \ref{table_results}
and Figure \ref{fig_results}. 

We use the following strategy.
We take a positive integer $q$, seven positive rational 
numbers
$a_{400}$, $a_{103}$, $a_{301}$, $a_{022}$, $a_{202}$,
$a_{112}$ and $a_{211}$, and two values $b,\tilde b$ such that
$0.916027<b,\tilde b\le 1$.
We then fix 
$$
a_{013}=\frac{a_{103}a_{022}a_{211}}{a_{202}a_{112}}
$$
and 
$$
a_{004}=\frac{1-(a_{400}+2a_{013}+2a_{103}+2a_{301}+a_{022}+2a_{202}+2a_{112}+a_{211})}{2}.
$$
The conditions that have to be satisfied are:
\begin{itemize}
\item
$0<a_{004},a_{013}\le 1$;
\item
$A_0^{A_0}A_1^{A_1}A_2^{A_2}A_3^{A_3}A_4^{A_4}\ge B_0^{B_0}B_1^{B_1}B_2^{B_2}B_3^{B_3}B_4^{B_4}$.
\end{itemize}
If these conditions are satisfied, by Theorem \ref{maintheorem} this gives the upper bound
$$
\omega\left(1,1,\frac{\log R}{\log Q}\right)\le \frac{2\log (q+2)-\log \mathcal{M}}{\log Q}.
$$

The above discussion reduces the problem of finding an upper bound on $\omega(1,1,k)$ to 
solving a nonlinear optimization problem. The 
upper bounds presented in Table~\ref{table_results} are obtained precisely by solving this optimization 
problem using Maple. 
We show exact values of the parameters proving that 
$\omega(1,1,0.5302)<2.060396$,
$\omega(1,1,0.75)< 2.190087$ and $\omega(1,1,2)< 3.256689$ in Table~\ref{table_values}.

\begin{table}
\begin{center}
\begin{tabular}{ | c | c| c|c|}
  \hline 
$q$ &5& 5&   6 \\
$b$            & 0.984599222& 0.968978515  & 0.94866036 \\
$\tilde b$   &0.919886704&  0.938616630         &   0.99996514 \\   
$a_{400}$ &0.004942000&  0.001498500                &   0.00000090\\
$a_{103}$ &0.010965995&  0.014456894       &   0.01553556 \\
$a_{301}$ &0.055710210&  0.031215255       &   0.00079349 \\
$a_{022}$  &0.037622078&  0.065869083     &   0.22704392 \\
$a_{202}$  &0.138698196&  0.118190058       &   0.05836108 \\
$a_{112}$ &0.145715589&  0.178553843   &   0.20388121 \\
$a_{211}$ &0.245013049&  0.226835534   &   0.13394891 \\
\hline
$a_{004}$  &0.00011...&  0.000246... &  0.001224... \\
$a_{013}$  &0.00500...&  0.010235...   & 0.039707... \\
$A_0^{A_0}A_1^{A_1}A_2^{A_2}A_3^{A_3}A_4^{A_4}$  &0.326588...& 0.3265988... & 0.339123647... \\
$B_0^{B_0}B_1^{B_1}B_2^{B_2}B_3^{B_3}B_4^{B_4}$  &0.326587...& 0.3265987... & 0.339123642... \\ 
\hline
$\log R/\log Q$  &0.530200005...& 0.750000001...  & 2.00000004... \\
$(2\log(q+2)-\log\mathcal{M})/\log Q$  &2.060395...&2.190086...  & 3.256688...  \\            
  \hline  
\end{tabular}
\caption{Three solutions for our optimization problem.
The first ten rows give (exact) values of the ten parameters. 
The numeral values of the next four rows show that the three conditions 
$0<a_{004},a_{013}\le 1$ and $A_0^{A_0}A_1^{A_1}A_2^{A_2}A_3^{A_3}A_4^{A_4}\ge B_0^{B_0}B_1^{B_1}B_2^{B_2}B_3^{B_3}B_4^{B_4}$ are satisfied.
The numerical values of the last two rows show that $\omega(1,1,0.5302)<2.060396$,
$\omega(1,1,0.75)< 2.190087$ and $\omega(1,1,2)< 3.256689$.
\label{table_values}}
\end{center}
\end{table}

\subsection{The value {\boldmath$\alpha$}}

In this subsection we describe how to use Theorem \ref{maintheorem} 
to obtain a lower bound on the value $\alpha$,
the largest value such that the product of an $n\times n^\alpha$ matrix by an 
$n^\alpha\times n$ matrix can be computed  with $O(n^{2+\epsilon})$ arithmetic operations for any $\epsilon>0$.
The analysis is more delicate than in the previous subsection, since we will need to exhibit parameters
such that $\mathcal{M} Q^2=(q+2)^2$, with an equality rather than an inequality,
and is done by finding analytically the optimal values of all but a few parameters.

Let $q$ be an integer such that $q\ge 5$.
For convenience, we will write $\kappa=1/(q+2)^2$. 
Let $a_{112}$ and $a_{211}$ be any rational numbers such that 
$0<a_{112}< q\kappa$ and $0< a_{211}< (q^2+2)\kappa$.
We set the parameters $b$, $\tilde b$, $a_{004}$, $a_{103}$, $a_{202}$ and $a_{301}$ as follows:
\begin{eqnarray*}
b&=&1\\
\tilde b&=& q^2/(q^2+2)\\
a_{400}&=&\kappa\\
a_{103}&=&q\kappa-a_{112}\\
a_{202}&=&\left((q^2+2)\kappa-a_{211}\right)/2\\
a_{301}&=&q\kappa.
\end{eqnarray*}

Putting these values in the formula for $Q$, we obtain:
\begin{eqnarray*}
Q&=&\left(2q\right)^{q\kappa+a_{103}}\times \left(q^2+2\right)^{\frac{(q^2+2)\kappa-a_{211}}{2}}\times q^{a_{112}+q^2a_{211}/(q^2+2)}\\
&=&(2q)^{2q\kappa}\times (q^2+2)^\frac{(q^2+2)\kappa}{2}\times 2^{-a_{112}}\times \left(\frac{q^{q^2/(q^2+2)}}{\sqrt{q^2+2}}\right)^{a_{211}}.
\end{eqnarray*}
Observe that $A_1=A_3=2q\kappa$, $A_2=(q^2+2)\kappa$, $A_4=\kappa$ and $A_0=1-(A_1+A_2+A_3+A_4)=\kappa$. 
Then we obtain the following equality.
\begin{eqnarray*}
\frac{1}{A_0^{A_0}A_1^{A_1}A_2^{A_2}A_3^{A_3}A_4^{A_4}} 
&=&\frac{(q+2)^2}{(2q)^{4q\kappa}(q^2+2)^{(q^2+2)\kappa}}
\end{eqnarray*}
The following lemma shows that, when $a_{112}$ is small enough, the condition $\mathcal{M} Q^2=(q+2)^2$ is satisfied.
\begin{lemma}
Suppose that 
\begin{equation}\label{cond1}
a_{112}\le \left(1+\frac{2q^2}{q^2+2}\log_2(q)-\log_2(q^2+2)\right)a_{211}.
\end{equation}
Then 
$\mathcal{M} Q^2=(q+2)^2$.
\end{lemma}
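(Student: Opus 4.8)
The plan is to substitute the chosen values of $b$, $\tilde b$ and the $a_{ijk}$'s into the definition of $\mathcal{M}$, determine which branch of the $\max$ is active, and then check that $\mathcal{M}Q^{2}$ telescopes to $(q+2)^{2}$. Essentially the whole proof is bookkeeping with fractional exponents once two preliminary computations are done.

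First I would evaluate the two quantities inside the $\max$ in the definition of $\mathcal{M}$. Since $b=1$, the convention $0^{0}=1$ gives $(2b)^{b}(1-b)^{1-b}=2$, hence $\left[(2b)^{b}(1-b)^{1-b}\right]^{a_{112}}=2^{a_{112}}$. For $\tilde b=q^{2}/(q^{2}+2)$ we have $1-\tilde b=2/(q^{2}+2)$, and a direct computation gives
$$
(2\tilde b)^{\tilde b}(1-\tilde b)^{1-\tilde b}=\left(\tfrac{2q^{2}}{q^{2}+2}\right)^{\frac{q^{2}}{q^{2}+2}}\left(\tfrac{2}{q^{2}+2}\right)^{\frac{2}{q^{2}+2}}=\frac{2\,q^{2q^{2}/(q^{2}+2)}}{q^{2}+2},
$$
so that $\log_{2}\!\left(\left[(2\tilde b)^{\tilde b}(1-\tilde b)^{1-\tilde b}\right]^{a_{211}}\right)=a_{211}\left(1+\tfrac{2q^{2}}{q^{2}+2}\log_{2}q-\log_{2}(q^{2}+2)\right)$. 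Comparing this with $\log_{2}(2^{a_{112}})=a_{112}$ shows that hypothesis (\ref{cond1}) is exactly the assertion that the $\tilde b$-term is the larger of the two; hence under (\ref{cond1}) the $\max$ equals $\left[(2\tilde b)^{\tilde b}(1-\tilde b)^{1-\tilde b}\right]^{a_{211}}=2^{a_{211}}(q^{2}+2)^{-a_{211}}q^{2q^{2}a_{211}/(q^{2}+2)}$.

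Next I would substitute this into $\mathcal{M}$, cancelling the $2^{a_{211}}$ factor against the $2^{2a_{112}+a_{211}}$ in the numerator, to obtain
$$
\mathcal{M}=\frac{2^{2a_{112}}}{A_0^{A_0}A_1^{A_1}A_2^{A_2}A_3^{A_3}A_4^{A_4}}\cdot\frac{(q^{2}+2)^{a_{211}}}{q^{2q^{2}a_{211}/(q^{2}+2)}}.
$$
For the first factor I would invoke the identity $A_0^{A_0}A_1^{A_1}A_2^{A_2}A_3^{A_3}A_4^{A_4}=(2q)^{4q\kappa}(q^{2}+2)^{(q^{2}+2)\kappa}/(q+2)^{2}$ derived just above the lemma from $A_0=A_4=\kappa$, $A_1=A_3=2q\kappa$, $A_2=(q^{2}+2)\kappa$. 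On the other hand, squaring the displayed formula for $Q$ gives $Q^{2}=(2q)^{4q\kappa}(q^{2}+2)^{(q^{2}+2)\kappa}\,2^{-2a_{112}}\,q^{2q^{2}a_{211}/(q^{2}+2)}(q^{2}+2)^{-a_{211}}$. Forming the product $\mathcal{M}Q^{2}$, every factor cancels in pairs — $2^{2a_{112}}$ against $2^{-2a_{112}}$, the $(2q)^{4q\kappa}$ and $(q^{2}+2)^{(q^{2}+2)\kappa}$ factors, $q^{2q^{2}a_{211}/(q^{2}+2)}$, and $(q^{2}+2)^{a_{211}}$ — leaving exactly $(q+2)^{2}$.

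I do not anticipate a genuine obstacle: the only step requiring care is the equivalence of (\ref{cond1}) with the $\tilde b$-branch dominating the $\max$, since selecting the wrong branch would yield $\mathcal{M}Q^{2}\ge(q+2)^{2}$ instead of equality. The remaining risk is purely arithmetic — mishandling the exponents $q^{2}/(q^{2}+2)$ and $2/(q^{2}+2)$ — which I would control by passing to $\log_{2}$ before combining or comparing any terms.
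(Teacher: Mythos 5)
Your proposal is correct and follows essentially the same route as the paper: take $\log_2$ to see that hypothesis (\ref{cond1}) is exactly the statement that $\left[(2\tilde b)^{\tilde b}(1-\tilde b)^{1-\tilde b}\right]^{a_{211}}=\left[\frac{2}{q^2+2}\,q^{2q^2/(q^2+2)}\right]^{a_{211}}$ dominates $2^{a_{112}}$ in the $\max$, then substitute into $\mathcal{M}$ and cancel against $Q^2$ using the identities for $Q$ and $A_0^{A_0}A_1^{A_1}A_2^{A_2}A_3^{A_3}A_4^{A_4}$ established just before the lemma. All the exponent bookkeeping checks out, so no gap.
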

\begin{proof}
Our choice for $b$ and $\tilde b$ gives
\begin{eqnarray*}
\left[(2b)^b(1-b)^{1-b}\right]^{a_{112}}&=&2^{a_{112}}\\
\left[(2\tilde b)^{\tilde b}(1-\tilde b)^{1-\tilde b}\right]^{a_{211}}&=&\left[\frac{2}{q^2+2}\cdot q^{\frac{2q^2}{q^2+2}}\right]^{a_{211}}.
\end{eqnarray*}
Inequality (\ref{cond1}) then implies that $\left[(2b)^b(1-b)^{1-b}\right]^{a_{112}}\le 
\left[(2\tilde b)^{\tilde b}(1-\tilde b)^{1-\tilde b}\right]^{a_{211}}$.
In consequence,
$$
\mathcal{M}=
\frac{(q+2)^2}{(2q)^{4q\kappa}(q^2+2)^{(q^2+2)\kappa}}\times 4^{a_{112}}\times \left[\frac{q^2+2}{q^{2q^2/(q^2+2)}}\right]^{a_{211}},
$$
which gives $\mathcal{M} Q^2=(q+2)^2$.
\end{proof}

We now explain how to determine the three remaining parameters $a_{004}$, $a_{013}$ and $a_{022}$.
Remember that the parameters should satisfy the equalities 
$$a_{013}=\frac{a_{103}a_{211}}{a_{202}a_{112}}a_{022}$$
and
$$2a_{004}+a_{400}+2a_{013}+2a_{103}+2a_{301}+a_{022}+2a_{202}+2a_{112}+a_{211}=1.$$
From our choice of parameters, the second equality can be rewritten as $2a_{004}+2a_{013}+a_{022}=\kappa$.
Since the parameter $a_{004}$ should be positive, we obtain the condition
\begin{equation}\label{cond2}
\left(\frac{4(q\kappa-a_{112})a_{211}}{((q^2+2)\kappa-a_{211})a_{112}}+1\right)a_{022}< \kappa.
\end{equation}
If $a_{022}$, $a_{112}$ and $a_{211}$ satisfy this inequality, then the parameter $a_{004}$ is fixed:
$$
a_{004}=\left(\kappa-\left(\frac{4(q\kappa-a_{112})a_{211}}{((q^2+2)\kappa-a_{211})a_{112}}+1\right)a_{022}\right)/2.
$$
Note that Inequality (\ref{cond2}) forces the value $a_{013}$ to be at most $1$.

All the values are thus determined by the choice of $q$, $a_{022}$, $a_{112}$ and $a_{211}$. In particular,
we obtain
\begin{eqnarray*}
R
&=&\left(2q\right)^{\frac{4(q\kappa-a_{112})a_{211}}{((q^2+2)\kappa-a_{211})a_{112}}}\times \left(q^2+2\right)^{a_{022}}\times q^{2a_{112}+\frac{2}{q^2+2}a_{211}}.
\end{eqnarray*}
We can similarly express the values of $B_0$, $B_1$, $B_2$, $B_3$ and $B_4$ in function of these four parameters.
We then want to solve the following optimization problem.

\begin{center}
    \begin{tabular}[h!]{|p{0.8\textwidth}|}
    \hline\vspace{1mm}
        Maximize $\frac{\log R}{\log Q}$ subject to
\begin{itemize}
\item $0\le a_{022}\le 1$, $0<a_{112}\le 5\kappa$ and $0\le a_{211}\le (q^2+2)\kappa$;
\item $q$ is an integer such that $q\ge 5$;
\item Inequalities (\ref{cond1}) and (\ref{cond2}) hold;
\item $\frac{(2q)^{4q\kappa}(q^2+2)^{(q^2+2)\kappa}}{(q+2)^2}\ge B_0^{B_0}B_1^{B_1}B_2^{B_2}B_3^{B_3}B_4^{B_4}$.
\end{itemize}
        \\\hline
    \end{tabular}
\end{center}

By taking the values $q=5$, $a_{022}=0.0174853$, $a_{112}=0.0945442$ and $a_{211}=0.1773724$, 
we obtain 
the value $\alpha\ge \frac{\log R}{\log Q}>0.30298$. These parameters satisfy all the constraints. 
We obtain in particular the following numerical values.
\begin{eqnarray*}
\frac{(2q)^{4q\kappa}(q^2+2)^{(q^2+2)\kappa}}{(q+2)^2}&=&0.3211277...\\
B_0^{B_0}B_1^{B_1}B_2^{B_2}B_3^{B_3}B_4^{B_4}&=&0.3211276....\\
R&=&1.475744...\\
Q&=&3.612672...
\end{eqnarray*}

A more precise lower bound on $\alpha$ can be found using optimization software and high precision 
arithmetic.
Using Maple and truncating the result of the optimization after the 25th digit, we find that
for $q=5$ the values
\begin{eqnarray*}
a022 &=& 0.0174853267797595451457284\\
a112 &=& 0.0945442542111395375830367\\
a211 &=& 0.1773724081899825630904504
\end{eqnarray*}
give the lower bound
$$
\alpha>     0.3029805825293869820274449.
$$

\section*{Acknowledgments}
The author is grateful to Virginia Vassilevska Williams and Ryan Williams
for helpful correspondence about Andrew Stothers' work, and to 
Virginia Vassilevska Williams for suggesting that the next step is to use higher 
tensor powers of the basic construction to improve rectangular 
matrix multiplication. He also
acknowledges support from the JSPS and the MEXT,
under the grant-in-aids 
Nos.~22800006 and 24700005.
\bibliographystyle{acm}
\bibliography{LeGallMM}
\end{document}